\def\MM#1{\boldsymbol{#1}}
\newcommand{\dd}[2]{\frac{\diff#1}{\diff#2}}
\def\MM#1{\boldsymbol{#1}}
\DeclareMathOperator{\diff}{d}
\newcommand{\ShowFigures}[2]{\def\tempvara{#1}\ifnum\tempvara=1 #2\fi}
\newcommand{\vecx}[1]{\MM{#1}}
\newtheorem{theorem}{Theorem}
\newtheorem{assumption}{A}
\newtheorem{definition}[theorem]{Definition}
\newtheorem{lemma}[theorem]{Lemma}
\newtheorem{remark}{Remark}
\numberwithin{equation}{section}
\DeclareMathOperator{\curl}{curl}
\DeclareMathOperator{\vecu}{\bf u}
\DeclareMathOperator{\gradperp}{\nabla^{\perp}}
\newcommand{\lyxdot}{.}
\def\bx{{\mathbf {x} }}
\def\by{{\mathbf {y} }}
\def\bu{{\mathbf {u} }}
\def\bsxi{{\boldsymbol {\xi} }}
\def\dd{{\color{red}\mathsf d}}
\begin{document}


\title{Numerically Modelling Stochastic Lie Transport in Fluid 
Dynamics\footnote{This work was partially supported by the EPSRC Standard Grant EP/N023781/1.}}
\author[1]{Colin Cotter\thanks{colin.cotter@imperial.ac.uk}}
\author[1]{Dan Crisan\thanks{d.crisan@imperial.ac.uk}}
\author[1]{Darryl D. Holm\thanks{d.holm@imperial.ac.uk}}
\author[1]{Wei Pan\thanks{wei.pan@imperial.ac.uk}}
\author[1]{Igor Shevchenko\thanks{i.shevchenko@imperial.ac.uk}}
\affil[1]{Department of Mathematics, Imperial College London}

\date{\today}

\maketitle

\begin{abstract}
	We present a numerical investigation of stochastic transport in ideal fluids. According to Holm (Proc Roy Soc, 2015) and Cotter et al. (2017), the principles of transformation theory and multi-time homogenisation, respectively, imply a physically meaningful, data-driven approach for decomposing the fluid transport velocity into its drift and stochastic parts, for a certain class of fluid flows. In the current paper, we develop new methodology to implement this velocity decomposition and then numerically integrate the resulting stochastic partial differential equation using a finite element discretisation for incompressible 2D Euler fluid flows. The new methodology tested here is found to be suitable for coarse graining in this case. Specifically, we perform uncertainty quantification tests of the velocity decomposition of Cotter et al. (2017), by comparing ensembles of coarse-grid realisations of solutions of the resulting stochastic partial differential equation with the ``true solutions'' of the deterministic fluid partial differential equation, computed on a refined grid. The time discretisation used for approximating the solution of the stochastic partial differential equation is shown to be consistent.  We include comprehensive numerical tests that confirm the non-Gaussianity of the stream function, velocity and vorticity fields in the case of incompressible 2D Euler fluid flows.
	
\end{abstract}

\tableofcontents

\section{Introduction}

A fundamental challenge in observational sciences, such as weather forecasting and climate change prediction, is the modelling of measurement error and uncertainty due, for example, to unknown or neglected physical effects, and incomplete information in both the data and the formulations of the theoretical models for prediction. To meet this challenge, new types of dynamical parameterisations, called \textit{Data Driven Models} have been developing recently for the observational sciences. Data-driven models accommodate uncertainty in observational science, by making predictions of both the values of the expected future measurements and of their uncertainties, or variabilities, based on input from measurements and statistical analysis of the initial data. 

Such predictions are made in a probabilistic sense. They may also use \emph{data assimilation} to take into account the time integrated information obtained from the data being observed along the solution path during the forecast interval as ``in flight corrections''. Data assimilation is a term used mainly in the computational geoscience community, and refers to methodologies that combine past knowledge of a system in the
form of a numerical model with new information about that system in the
form of observations of that system.
It is a major component of Numerical Weather Prediction, where  it is used to improve forecasting, reduce model uncertainties and adjust
model parameters. 
To reduce the uncertainty, a stochastic feedback loop between the model and the data may be introduced, through which assimilation of more data during the prediction interval will decrease the uncertainty of the forecasts based on the initial data, by selecting the likely paths as more observational data is accrued. This is the basis of the so-called ensemble data assimilation which uses a set of model trajectories that are intermittently
updated according to data. The availability for several years of large
grid computing systems has made ensemble data assimilation increasingly
popular.
Ensemble data assimilation can use  \textit{particle filters}\footnote{See \cite{Crisan2014} for a new approach for handling high dimensional models using particle filters.} as a basis for  the uncertainty reduction.  

Thus, in modern observational science, predictive dynamics meets: (i) observation error, (ii) incomplete information, (iii) uncertainty, (iv) resolution errors in numerical simulations and (v) data assimilation. In the new science of data-driven modelling, all four of these endeavours should be placed into the same framework. As a minimum requirement, the framework for the introduction of noise should preserve the fundamental mathematical structure of the deterministic model. The geometric mechanics approach that we take here is designed to preserve the fundamental structure of fluid dynamics, which is based on the theory of transformations by smooth invertible maps. 

Our approach introduces a new type of stochasticity -- called Stochastic Lie Transport (SLT) -- that has been designed specifically for fluid dynamics, based on transformation theory and geometric mechanics. Instead of trying to predict the effects of what cannot be resolved in each case by going to even higher resolution, SLT uses observed spatial correlation data to model the effects of the uncertainty as spatially correlated stochastic transport.

\subsection*{Properties of Stochastic Lie Transport in Ideal Fluid Dynamics} 

\textit{Stochastic Lie Transport} (SLT) for  ideal fluid dynamics was first derived in \cite{Holm2015} 
by applying transformation theory from geometric mechanics (based on the smooth invertible Lagrange-to-Euler map) to the Hamilton variational principle for the equations of ideal fluid motion.  SLT also follows from \textit{Newton's Law of Motion}, provided one includes the stochastic Lagrange-to-Euler transformation of the reference coordinate basis under the fluid flow, as shown in \cite{CrFlHo2017}. In addition, homogenisation theory shows that SLT can be regarded as a true decomposition of the deterministic solution for the fluid velocity into a mean flow and rapid fluctuations in velocity around the mean \cite{CoGaHo2017}.  Via homogenisation theory, the rapid velocity fluctuations rigorously transform into a sum of stochastic vector fields, as in equation \eqref{StochVF}, in the limit as the fluctuation frequency increases.

As a true decomposition of the solution, the \textit{analytical properties} of the SLT fluid model should not differ from those of the corresponding deterministic fluid equations. This property was proven to hold for the 3D SLT Euler fluid equations in \cite{CrFlHo2017}. In particular, the solutions of the 3D SLT Euler fluid equations \eqref{Euler-vort-eqn-2form-stoch} derived in \cite{Holm2015} are shown in \cite{CrFlHo2017} to possess local-in-time existence and uniqueness, as well as to satisfy a Beale-Kato-Majda criterion for blow-up, corresponding to the same properties as for the deterministic 3D Euler fluid equations \cite{BKM1984}. 

In summary, SLT is a new type of stochasticity, designed to account for the effects of unresolved fluid degrees of freedom, such as turbulence, on the resolved scale dynamics of fluid flows. In SLT,  the noise multiplies both the solution and the \textit{spatial gradient} of the solution; so its influence tends to increase as the gradients of the solution increase. The additional Lie transport terms in SLT turn out to be necessary to complete the \textit{Stochastic Kelvin Circulation Theorem}. Thus, in the SLT approach, the Eulerian fluid equations acquire the additional Stratonovich stochastic transport vector field seen in equation \eqref{StochVF}, which leads to the Kelvin Circulation Theorem in \eqref{StochKelThm}. These additional stochastic Lie transport terms do not appear in other theories, such as \cite{MiRo2004} and \cite{Me2014}. The present paper will demonstrate how to use SLT as a means of performing both \textit{uncertainty quantification} and \textit{account for the resolution error} in numerical simulations by the following steps, see Section \ref{subsec:Methodology}:
\begin{itemize}
        \item Simulate Lagrangian trajectories moving with velocity described by a deterministic PDE which we assume can be accurately approximated on a fine grid. Call these the \emph{fine grid trajectories}.
        \item Simulate Lagrangian trajectories moving with a spatially-filtered velocity on a coarse grid. Call these the \emph{coarse grid trajectories}.
        \item Calculate the differences between the fine grid trajectories and the coarse grid trajectories over non-overlapping time intervals, which are then used to estimate the velocity-velocity correlation tensors.
        \item The estimated velocity-velocity correlation tensors are substituted into the Euler stochastic PDE with Lie transport noise to perform uncertainty quantification analysis.
\end{itemize}

\subsection*{The Interaction Between Noise and Transport Mechanisms in Ideal Fluids}
Our aim in this paper is to investigate the interaction between noise and transport in ideal fluids using the framework of geometric mechanics, \cite{MaRa1994,Ho2011,HoScSt2009}.
The understanding of transport mechanisms in fluid dynamics is at the core of some of the main open problems in mathematics and physics. The introduction of random perturbations into the fluid equations can be expected to profoundly influence the properties of fluid transport and thereby raise many open questions. For a mathematical review of the literature and recent progress on the interaction between noise and transport in the vorticity equation for 2D ideal incompressible fluids, see \cite{BrFlMa2016}.

A variational approach to the full theory of stochastic ideal fluid dynamics in 3D was derived in \cite{Holm2015}, by using transformation theory from geometric mechanics based on the Lagrange-to-Euler map for stochastic Lagrangian particle trajectories. Its analytical properties have been investigated in \cite{CrFlHo2017} for the particular case of the 3D stochastic Euler equation for incompressible fluid flow, ${\rm div}\mathbf{u}=0$, given by
\begin{align}
0 = (\dd + \mathcal{L} _{\dd{\by}_{t}}) (\boldsymbol{\omega} \cdot
d\boldsymbol{S}) 
= \Big(\dd \boldsymbol{\omega} - \mathrm{curl}
\,(\dd{\by}_{t}\times\boldsymbol{\omega})\Big) \cdot d\boldsymbol{S} \,,
\label{Euler-vort-eqn-2form-stoch}%
\end{align}
for the Eulerian vorticity 2-form 
$\boldsymbol{\omega} \cdot d\boldsymbol{S}=d(\bu\cdot d\bx)
=({\rm curl}\,\bu) \cdot d\boldsymbol{S}$, which is Lie transported by the \textit{Stratonovich} stochastic vector field $\dd{\by}_{t}$ corresponding to the following stochastic process,
\begin{align}
\dd{\by}_{t} = \bu(\by_t,t)dt + \sum_{i} \bsxi_{i}(\by_t) \circ dW^{i}_{t}
\,. \label{StochVF}%
\end{align}
Here, $\dd$ represents stochastic differentiation and the second term in \eqref{StochVF} constitutes cylindrical Stratonovich noise, in which the amplitude of the noise depends on space, but not time. { In It\^o form, \eqref{StochVF} is written as
\[
\dd{\by}_{t} = \bu(\by_t,t)dt + \sum_{i} \bsxi_{i}(\by_t) dW^{i}_{t} +  \frac{1}{2}\sum_{i} (\bsxi_i(\by_t) \cdot \nabla) \bsxi_i(\by_t) dt
\,.%
\] For an extension of this method to include non-stationary correlation statistics, see \cite{GBHo2018}.}

In the case of 2D planar incompressible fluid motion, the vorticity has only one component, denoted as $q$, and equation \eqref{Euler-vort-eqn-2form-stoch} reduces to
\begin{align}
0 = (\dd + \mathcal{L} _{\dd{\by}_t}) q 
= \dd \omega + \dd\by_t\cdot\nabla q
\,.
\label{Euler-vort-eqn-2D-stoch}%
\end{align}

For in-depth treatments of cylindrical noise, see \cite{Pa2007,Sc1988}. In our case, the vectors  $\bsxi_{i}(\bx)$, $i=1,2,\dots,N$, appearing in the stochastic vector field in \eqref{StochVF} comprise $N$ prescribed, time independent, divergence free vectors which are to be obtained from data. That is, we incorporate SLT into fluid dynamics as a \textit{Data--Driven Model}. For example, the $\bsxi_{i}(\bx)$ may be determined as Empirical Orthogonal Functions (EOFs), which are eigenvectors of the velocity-velocity correlation tensor for a certain 
measured flow with stationary statistics, see \cite{HaJoSt2007, Hannachi2004primer}. As discussed below, the  $\bsxi_{i}(\bx)$ in equation \eqref{StochVF} may also be obtained numerically by comparisons of Lagrangian trajectory simulations at fine and coarse space and time scales. 

The introduction of cylindrical Stratonovich noise into Euler's fluid equation by using its variational and Hamiltonian structure has introduced an additional, stochastic vector field $\sum_{i} \bsxi_{i}(\bx) \circ dW^{i}_{t}$ into equation \eqref{StochVF} which augments the Lie transport in equation \eqref{Euler-vort-eqn-2form-stoch}. This is natural, because the essence of Euler fluid dynamics is Lie transport, see \cite{HoMaRa1998}. In particular, equation \eqref{Euler-vort-eqn-2form-stoch} produces a natural Kelvin Circulation Theorem of the form
\begin{align}
\dd \oint_{c(\dd\by_t)}\hspace{-4mm} \bu\cdot d\bx 
=  \oint_{c(\dd\by_t)}\hspace{-4mm}(\dd + \mathcal{L} _{\dd{\by}_{t}}) ( \bu\cdot d\bx)
=  \int\hspace{-2mm}\int_{\partial S = c(\dd\by_t)}\hspace{-4mm}(\dd + \mathcal{L} _{\dd{\by}_{t}}) (\boldsymbol{\omega} \cdot
d\boldsymbol{S})
= 0
\,, \label{StochKelThm}%
\end{align}
where $c(\dd\by_t)$ is a closed fluid loop moving with the stochastic vector field velocity $\dd\by_t$.\medskip

Other 3D stochastic Euler fluid equations have been derived by different methods in \cite{MiRo2004} and \cite{Me2014}. However, those other derivations have produced equations which differ from \eqref{Euler-vort-eqn-2form-stoch} in their stochastic transport terms and consequently  do not admit the Kelvin Circulation Theorem in \eqref{StochKelThm}.

\begin{remark}[Kelvin's Circulation Theorem]\rm
From the viewpoint of geometric mechanics, Kelvin's Circulation Theorem \eqref{StochKelThm} is always a fundamental property in fluid dynamics. It is the relation obtained via Noether's Theorem from invariance of Eulerian fluid variables under smooth invertible transformations of the Lagrangian particle labels. This invariance is called \textit{relabelling symmetry}. If one starts with the Lagrange-to-Euler map, this invariance yields a momentum map which satisfies Kelvin's Circulation Theorem. Even when the Lagrange-to-Euler map is stochastic, as in equation \eqref{StochVF}, the relabelling symmetry is still maintained, and this invariance implies a stochastic Kelvin's Circulation Theorem. In the stochastic case, the closed circulation loop around which one integrates in Kelvin's theorem follows the flow lines of the stochastic Lagrangian paths, and it remains a loop because the stochastic Lagrange-to-Euler map is still a diffeomorphism. Thus, Kelvin's Circulation Theorem has the same geometric transport interpretation in both the deterministic and stochastic cases. 
This preservation of the Kelvin Circulation Theorem interpretation for stochastic fluid dynamics is unique to the present approach. The same interpretation also applies to the equivalent Hamiltonian formulation of these equations. 
\end{remark}

\subsection*{The main content of the paper}
The rest of the paper is structured as follows. Section \ref{sec:2dequations} describes the damped and forced deterministic system and the numerical methodology we use to solve the system on a fine resolution spatial grid. This corresponds to the simulated \textit{truth}. We then describe the stochastic version of this system, derived by using
the variational approach formulated in \cite{Holm2015}. The numerical methodology we use for solving the deterministic system is extended to solve the stochastic version and a proof for the numerical consistency of the method is provided. 
        
Section \ref{sec:calibration} describes our numerical calibration methodology for the stochastic model. Here, numerical simulations and tests are provided to show that using our methodology, one can \emph{sufficiently }estimate the velocity-velocity spatial correlation structure from \emph{data}, so that an ensemble of flow paths described by the stochastic system accurately tracks the large-scale behaviour of the underlying deterministic system for a physically adequate period of time.
        
Finally, Section \ref{sec:conlusion} concludes the present work and discusses the outlook for future research.

The following is a list of the numerical experiments contained in this paper.
\begin{itemize}
        \item Simulation of the deterministic Euler equations on a fine grid of size $512\times512$ for $146$ large eddy turnover times (abbrev. ett). Here, we determine a suitable initial condition which is spun-up from a chosen initial configuration. See Section \ref{subsec:pde_results} for a detailed description of the initial configuration and spin-up. See Figures \ref{fig:omega_spin}--\ref{fig:energy_series} for visualisations of the results.
        
        \item The fine grid PDE simulations are coarse grained to obtain the coarse-grained PDE solutions define on a coarse grid of size $64\times64$, which we call the \emph{truth}. See Section \ref{subsec:pde_results} for a description of the coarse-graining procedure, and Figures \ref{fig:pde_solution_t0}--\ref{fig:pde_solution_t146} for visualisations of the results.
        
        \item We compute the fine grid and coarse grid Lagrangian trajectories, which are used to estimate the velocity-velocity correlation tensor, interpreted as spatial correlation EOFs, see \cite{HaJoSt2007, Hannachi2004primer}. See Section \ref{subsec:Methodology} for a description of the parameterisation methodology. Figure \ref{fig:EOF-normalised-spectrum-64} shows a plot of the normalised spectrum corresponding to the estimated EOFs. The experiment is repeated for more refined coarse grids so we can investigate whether our { parameterisation} methodology is consistent with grid refinement. The normalised spectrums for the refined coarse grids of sizes $128\times128$ and $256\times256$ are shown in Figure \ref{fig:EOF-normalised-spectrum-128} and Figure \ref{fig:EOF-normalised-spectrum-256} respectively.
        
        \item The estimated EOFs are substituted into the Lagrangian trajectory equation. An ensemble of independent realisations of the stochastic Lagrangian trajectory equation are computed to do Lagrangian trajectory uncertainty quantification. Figure \ref{fig:Lagrangian-trajectory-uq} shows the results of one such tests.
        
        \item The stochastic PDE (SPDE) is defined on the coarse grid. Given the PDE solution at a fixed time $t_0$, we obtain an ensemble of initial conditions for the SPDE via the deformation procedure \eqref{eq:initial_cond_deformation}, so that the truth lies in the concentration of the probability density of the initial prior distribution. We call each realisation of the SPDE a \emph{particle}. See Section \ref{subsec:spde_initial_results} for a detailed description of how we generate the SPDE initial conditions. Figures \ref{fig:spde_solution}--\ref{fig:spde_solution-2} show plots of the truth and two particles at the initial time $t_0$, $t_0+3$ ett and $t_0+5$ ett.
        
        \item Using the estimate EOFs, we perform uncertainty quantification tests for the SPDE where the truth is compared with the ensemble one standard deviation region about the ensemble mean at the interior grid points of a $4\times4$ observation grid. We would like the ensemble spread to capture the truth for an adequate period of time starting from an initial ensemble that captures the initial truth. See Section \ref{subsec:spde_uq_results} for a description of the tests and see Figures \ref{fig:mike-cullen-psi}--\ref{fig:mike-cullen-u} for the results. The results show that our parameterisation methodology described in Section \ref{subsec:Methodology} works well at the $64\times64$ resolution level (eight times coarser than the fine grid), as the spread captures the truth for at least $5$ eddy turnover times before significant deviations occur.
        
        \item The SPDE uncertainty quantification tests are repeated for more refined coarse grids of sizes $128\times128$ and $256\times256$. See Section \ref{subsec:spde_uq_results} for a detailed description. See Figures \ref{fig:mike-cullen-psi-multires}--\ref{fig:mike-cullen-u-multires} for the test results where we plot the truths and the ensemble spreads together in a single figure to compare the differences. The results show that as the coarse grid gets refined, the ensemble spread captures the truth for longer time periods, confirming that the parameterisation methodology is consistent with grid refinement. 
        
        \item We also investigate the relative minimum $L^2$ distance between the SPDE ensemble and the truth defined by \eqref{eq:min_l2_distance}. See Section \ref{subsec:spde_uq_results} for a detailed explanation. Figures \ref{fig:min_l2_psi}--\ref{fig:min_l2_u} shows the results. The distance between the SPDE ensemble and the truth diverges as time goes on, indicating that the uncertainty whether the ensemble captures the truth increases with time.
        
        \item The Lie transport noise is not additive thus the SPDE ensemble should not be of Gaussian distribution. We test for this using Quantile-Quantile (QQ) plots and boxplots. The results are shown in Figures \ref{fig:qqplot-psi}--\ref{fig:qqplot-u} for the QQ tests, and Figures \ref{fig:boxplot-psi}--\ref{fig:boxplot-u} for the boxplot tests. Fat tails and non-symmetry in the distribution gives strong evidence to the fact that the ensemble is not Gaussian. See Section \ref{subsec:additional_stats_results} for a more detailed explanation.
\end{itemize}


\section{The two-dimensional incompressible flow equations \label{sec:2dequations}} %


\subsection{Deterministic version}

Let the state space $\mathcal{D}$ be the unit square in $\mathbb{R}^{2}.$
We consider a two dimensional incompressible fluid flow velocity, ${\bf u}$,
defined on $\mathcal{D}$, $\vecu:\mathcal{D}\times\left[0,\infty\right)\rightarrow\mathbb{R}^{2},$
$\vecu\left(x,y,t\right)=\left(u_{1}\left(x,y,t\right),u_{2}\left(x,y,t\right)\right)$,
whose dynamics is governed by the two-dimensional Euler equations with additional
forcing and damping. In what follows, we shall work with the vorticity
version of the Euler equation. 

Let $\omega=\hat{z}\cdot \curl{\bf u}$ denote the vorticity of $\vecu$,
where $\hat{z}$ denotes the $z$-axis. Note that for incompressible flow in two dimensions,
$\omega$ is formally a scalar field. For a scalar field $g:\mathcal{D}\rightarrow\mathbb{R},$
we write $\nabla^{\perp}g=\left(-\partial_{y}g,\partial_{x}g\right)=\hat{z}\times\nabla g.$
We also let $\psi:\mathcal{D}\times\left[0,\infty\right)\rightarrow\mathbb{R}$
denote the \emph{stream function}, another scalar field, related to the fluid velocity and vorticity by ${\bf u}=\nabla^{\perp}\psi$ and $\omega=\Delta\psi$, respectively, where $\Delta = \partial_x^2+\partial_y^2$ is the Laplacian operator in $\mathbb{R}^2$. Note that the existence of the
stream function is guaranteed by the incompressibility assumption.

We can now write down the model equations, as
\begin{eqnarray}
\partial_{t}\omega+\left({\bf u}\cdot\nabla\right)\omega & = & Q-r\omega\label{eq:2DEulerVorticity}\\
{\bf u} & = & \nabla^{\perp}\psi\label{eq:u_is_grad_perp_stream}\\
\Delta\psi & = & \omega\label{eq:laplace_stream_is_vorticity}
\end{eqnarray}
where we have chosen the forcing $Q$ to be
\begin{equation}
Q\left(x,y\right)=\alpha\sin\left(\beta\pi x\right),\qquad (x,y)\in \mathcal{D}\label{eq:forcing}
\end{equation}
and $\alpha,$ $\beta$ and $r$ are constants which have the following roles: $\alpha$ controls
the strength of the forcing; $\beta$ can be interpreted as the number
of gyres in the external forcing; and $r>0$ can be seen as the damping
rate.

We shall consider a slip flow boundary condition 
\begin{equation}
\left.\psi\right|_{\partial\mathcal{D}}=0.\label{eq:boundary_condition_streamfunction}
\end{equation}

This system is a special case of a nonlinear, one-layer quasigeostrophic (QG) model
that is driven by winds above.

\begin{remark}
    The term $\vecu\cdot\nabla\omega$ in equation (\ref{eq:2DEulerVorticity})
    can be expressed as the Jacobian $J\left(\psi,\omega\right)$ for the transformation 
    $d\psi\wedge d\omega=J\left(\psi,\omega\right)dx\wedge dy$, i.e.,
    \[
    \begin{aligned}\left({\bf u}\cdot\nabla\right)\omega & =u\partial_{x}\omega+v\partial_{y}\omega\\
    & =\partial_{y}\psi\partial_{x}\omega-\partial_{x}\psi\partial_{y}\omega\\
    & ={\rm det}\left[\begin{array}{cc}
    \partial_{x}\omega & \partial_{y}\omega\\
    \partial_{x}\psi & \partial_{y}\psi
    \end{array}\right]=:J\left(\psi,\omega\right).
    \end{aligned}
    \]
\end{remark}

\begin{remark}
        \label{rem:pde_exit_uniq_sol} To our knowledge, the mathematical analysis of the solution properties for the damped and forced deterministic system in equation \eqref{eq:2DEulerVorticity} has not yet appeared in the literature. For recent work on the stochastic version, see \cite{CrOLa2018}. 
\end{remark}


\subsubsection{Numerical implementation}

\label{sec:numerical-implementation-(firedrake)-wei}
We solve the system of equations (\ref{eq:2DEulerVorticity}), (\ref{eq:u_is_grad_perp_stream})
and (\ref{eq:laplace_stream_is_vorticity}) using finite element discretisation.
Without the source terms in (\ref{eq:2DEulerVorticity}), energy and
enstrophy are conserved quantities for the same choice of boundary
condition. Thus we follow \cite{bernsen2006dis,gottlieb2005high}
and use a combination of a mixed continuous and discontinuous Galerkin
finite element discretisation scheme and an optimal third order strong
stability preserving Runge-Kutta method for the time stepping, that
conserves numerical energy and enstrophy. We give a description of the numerical procedure. 

\subsubsection*{Streamfunction equation}

Let $H^{1}\left(\Omega\right)$ denote the Sobolev $W^{1,2}\left(\Omega\right)$
space and let $\left\Vert .\right\Vert _{\partial\Omega}$ denote
the $L^{2}\left(\partial\Omega\right)$ norm. For the elliptic equation
(\ref{eq:laplace_stream_is_vorticity}) we obtain its variational
formulation by multiplying both sides by a test function $\phi$ in
$W^{1}\left(\Omega\right):=\left\{ \nu\in H^{1}\left(\Omega\right)\left|\left\Vert \nu\right\Vert _{\partial\Omega}=0\right.\right\} $
then integrating over the domain $\Omega.$ Using integration by parts
we get

\begin{equation}
\begin{aligned}\left\langle \phi,\omega\right\rangle _{\Omega} & =\left\langle \phi,\Delta\psi\right\rangle _{\Omega}\\
& =\int_{\partial\Omega}\phi\nabla\psi\cdot\hat{n}d{\bf r}-\left\langle \nabla\phi,\nabla\psi\right\rangle _{\Omega}\\
& =-\left\langle \nabla\phi,\nabla\psi\right\rangle _{\Omega}
\end{aligned}
\label{eq:weak_formulation_elliptic}
\end{equation}
where the integral over $\partial\Omega$ is zero due to the boundary
condition (\ref{eq:laplace_stream_is_vorticity}). 

Equation (\ref{eq:weak_formulation_elliptic}) is discretised by using a continuous
Galerkin (CG) discretisation scheme. This simplifies the choice of
the discontinuous Galerkin numerical flux for the hyperbolic equation
\eqref{eq:2DEulerVorticity}, see \cite{bernsen2006dis}. This means choosing
approximations of $\psi$ and $\omega$ in the subspace
\[
\mathcal{W}_{h}^{k}=\left\{ \left.\phi_{h}\in W^{1}\left(\Omega\right)\right|\phi_{h}\in C\left(\Omega\right),\left.\phi_{h}\right|_{K}\in\mathcal{P}^{k}\left(K\right)\right\} 
\]
where $\mathcal{P}^{k}\left(K\right)$ is the space of continuous
polynomials of degree at most $k$ on each element $K$ of a triangulation
$\mathcal{T}_{h}$ of the space $\Omega.$ Thus the numerical approximation
of $\psi$ is given by $\psi_{h}\in\mathcal{W}_{h}^{k}$, such that
\begin{equation}
\left\langle \phi_{h},\omega_{h}\right\rangle _{\Omega}=-\left\langle \nabla\phi_{h},\nabla\psi_{h}\right\rangle _{\Omega}\label{eq:discretised_weak_formulation_elliptic}
\end{equation}
for all $\phi_{h}\in\mathcal{W}_{h}^{k}.$ 

\subsubsection*{Vorticity equation}

For the hyperbolic equation (\ref{eq:2DEulerVorticity}), a discontinuous
Galerkin (DG) scheme is used. This leads to the following variational
problem
\begin{equation}
\begin{aligned}\left\langle \nu_{h},\partial_{t}\omega\right\rangle _{K} & =\left\langle \nu_{h},Q-r\omega\right\rangle _{K}+\left\langle \omega{\bf u},\nabla\nu_{h}\right\rangle _{K}-\left\langle \omega{\bf u}\cdot\hat{n},\nu_{h}\right\rangle _{\partial K}\end{aligned}
\label{eq:weak_formulation_hyperbolic}
\end{equation}
for any test function {$\nu_{h}$} in the space of discontinuous test functions
$\mathcal{V}_{h}^{k}=\left\{ \left.v\right|\forall K\in\mathcal{T}_{h},\ \exists\phi_{h}\in\mathcal{P}^{k}\left(K\right):\ \left.v\right|_{K}=\phi_{h}\right\} .$
This choice of $\mathcal{V}_{h}^{k}$ ensures conservation of
energy for the numerical solution of \eqref{eq:2DEulerVorticity} minus
source terms, see \cite{bernsen2006dis}. 

In this DG setup, $\omega$ and {$\nu_{h}$} in (\ref{eq:weak_formulation_hyperbolic})
are discontinuous across elements $K\in\mathcal{T}_{h},$ but $\vecu$
is continuous. The latter is due to the CG discretisation for the
elliptic equation for $\psi$ and the fact that $\vecu\cdot\hat{n}=\nabla^{\perp}\psi\cdot\hat{n}=-\nabla\psi\cdot\hat{\tau}=-d\psi_{h}/d\hat{\tau}$
where $\hat{\tau}$ is the tangential unit vector to $\partial K.$
Thus for the integral along the boundary $\partial K,$ we need to
specify a unique numerical flux for each cell interface. 

Let {$\nu_{h}^{-}:=\lim_{\epsilon\uparrow0} \nu_h \left({\bf x}+\epsilon\hat{n}\right)$} and { $\nu_{h}^{+}:=\lim_{\epsilon\downarrow0}\nu_h \left({\bf x}+\epsilon\hat{n}\right)$} for ${\bf x}\in\partial K.$ Let {$\nu_{h}$} in $\left\langle \omega{\bf u}\cdot\hat{n},\nu_{h}\right\rangle _{\partial K}$ be {$\nu_{h}^{-},$} and replace $\omega\vecu\cdot\hat{n}$ by the numerical flux $\hat{f}\left(\omega_{h}^{+},\omega_{h}^{-},{\bf u}\cdot\hat{n}\right)$ given by the upwind scheme
\[
\hat{f}\left(\omega_{h}^{+},\omega_{h}^{-},{\bf u}\cdot\hat{n}\right)={\bf u}\cdot\hat{n}\begin{cases}
\omega_{h}^{+} & \text{if }{\bf u}\cdot\hat{n}<0\\
\omega_{h}^{-} & \text{if }{\bf u}\cdot\hat{n}\geq0.
\end{cases}
\]
This choice of $\hat{f}$ is consistent, conserves the numerical flux across neighbouring elements, and is $L^{2}$-stable in the enstrophy norm, see \cite{bernsen2006dis}. Note that the choice for $\hat{f}$ with these properties is not unique.

With these choices, the goal is to find $\omega_{h}\in V_{h}^{k}$ such that for all {$\nu_{h}\in V_{h}^{k}$} we have
{\begin{equation}
\begin{aligned}\left\langle \nu_{h},\partial_{t}\omega_{h}\right\rangle _{K} =\left\langle \nu_{h},Q_{h}-r\omega_{h}\right\rangle _{K} +
\left\langle \omega_{h}\nabla^{\perp}\psi_{h}, \nabla \nu_{h}\right\rangle _{K}  -
\left\langle \hat{f}\left(\omega_{h}^{+},\omega_{h}^{-},\nabla^{\perp}\psi_{h} \cdot \hat{n}\right), \nu_{h}^{-}\right\rangle _{\partial K}.
\end{aligned}
\label{eq:discretised_weak_formulation_hyperbolic}
\end{equation}
}
\subsubsection*{Time stepping}

For the time stepping scheme, we follow \cite{gottlieb2005high} and
use a strong stability preserving Runge Kutta method of order 3 (SSPRK3)
with the  Courant--Friedrich--Lewy (CFL) condition being $1/3.$

Writing the finite element spatial discretisation formally as $\partial_{t}\omega=f_{h}\left(\omega\right)$
where $f_{h}$ is the discretisation operator that follows from (\ref{eq:discretised_weak_formulation_hyperbolic})
and (\ref{eq:discretised_weak_formulation_elliptic}), the SSPRK3
time discretisation is as follows 
\begin{align*}
\omega^{(1)} & =\omega^{n}+\Delta f_{h}\left(\omega^{n}\right)\\
\omega^{\left(2\right)} & =\frac{3}{4}\omega^{n}+\frac{1}{4}\left(\omega^{\left(1\right)}+\Delta f_{h}\left(\omega^{\left(1\right)}\right)\right)\\
\omega^{n+1} & =\frac{1}{3}\omega^{n}+\frac{2}{3}\left(\omega^{\left(2\right)}+\Delta f_{h}\left(\omega^{\left(2\right)}\right)\right)
\end{align*}
where $\Delta=t_{n+1}-t_{n}$ each $n$. 

In variational form, we have

\begin{equation}
\begin{aligned}\left\langle v_{h},\omega^{(1)}\right\rangle _{K} & =\left\langle v_{h},\omega^{n}\right\rangle _{K}-\Delta t\left(\left\langle \nabla v_{h},-\omega^{n}\nabla^{\perp}\psi_{h}^{n}\right\rangle _{K}-\left\langle v_{h},Q-r\omega^{n}\right\rangle _{K}+\left\langle v_{h},\omega^{n}\nabla^{\perp}\psi_{h}^{n}\cdot\hat{n}\right\rangle _{\partial K}\right)\\
\left\langle v_{h},\omega^{\left(2\right)}\right\rangle _{K} & =\frac{3}{4}\left\langle v_{h},\omega^{n}\right\rangle _{K}+\frac{1}{4}\left\langle v_{h},\omega^{\left(1\right)}\right\rangle _{K}\\
& \qquad-\frac{\Delta t}{4}\left(\left\langle \nabla v_{h},-\omega^{\left(1\right)}\nabla^{\perp}\psi_{h}^{\left(1\right)}\right\rangle _{K}-\left\langle v_{h},Q-r\omega^{\left(1\right)}\right\rangle _{K}+\left\langle v_{h},\omega^{\left(1\right)}\nabla^{\perp}\psi_{h}^{\left(1\right)}\cdot\hat{n}\right\rangle _{\partial K}\right)\\
\left\langle v_{h},\omega^{n+1}\right\rangle _{K} & =\frac{1}{3}\left\langle v_{h},\omega^{n}\right\rangle _{K}+\frac{2}{3}\left\langle v_{h},\omega^{\left(2\right)}\right\rangle _{K}\\
& \qquad-\frac{2\Delta t}{3}\left(\left\langle -\omega^{\left(2\right)}\nabla^{\perp}\psi_{h}^{\left(2\right)},\nabla v_{h}\right\rangle _{K}-\left\langle v_{h},Q-r\omega^{\left(2\right)}\right\rangle _{K}+\left\langle v_{h},\omega^{\left(2\right)}\nabla^{\perp}\psi_{h}^{\left(2\right)}\cdot\hat{n}\right\rangle _{\partial K}\right)
\end{aligned}
\label{eq:ssprk3}
\end{equation}
for each $K\in\mathcal{T}_{h}.$


We summarise our numerical procedure as Algorithm \ref{alg:det_sys_alg}.
Our implementation of (\ref{eq:ssprk3}), (\ref{eq:discretised_weak_formulation_elliptic})
and (\ref{eq:discretised_weak_formulation_hyperbolic}) is done using
Firedrake\footnote{http://www.firedrakeproject.org/index.html}, which
is an efficient automated finite element method library that employs
the Unified Form Language (UFL), \cite{Rathgeber2016,Dalcin2011,petsc-efficient,petsc-user-ref}.

For the schemes we use, the spatial and time discretisations need
to be chosen so that the CFL condition 
\[
c\leq C_{\text{effective}}=\frac{1}{3}
\]
is satisfied in order to have numerical stability, c.f. \cite{gottlieb2005high}.

\begin{algorithm}
        \caption{\label{alg:det_sys_alg}Solver algorithm for the deterministic system
                (\ref{eq:2DEulerVorticity}) -(\ref{eq:laplace_stream_is_vorticity})}
        
        \begin{algorithmic}[1]
                
                \STATE Let $\Delta t$ and $\Delta x$ be the time discretisation
                step and the spatial discretisation step respectively, such that they
                satisfy the CFL condition of $1/3.$ Let $\omega_{0}$ be a given
                initial vorticity at $t=0.$
                
                \FOR{$t_{i}=i\Delta t$, $i=0,1,2\dots,N-1$, with $t_{N}=T$}
                
                \STATE Set $\omega_{h}^{n}=\omega_{i}.$
                
                \STATE Solve 
                \[
                \left\langle \phi_{h},\omega_{h}^{n}\right\rangle _{\Omega}=-\left\langle \nabla\phi_{h},\nabla\psi_{h}^{n}\right\rangle _{\Omega}
                \]
                to obtain $\psi_{h}^{n}$ which we then use to solve 
                \[
                \left\langle v_{h},\omega_{h}^{(1)}\right\rangle _{K}=\left\langle v_{h},\omega_{h}^{n}\right\rangle _{K}-\Delta t\left(\left\langle \nabla v_{h},-\omega_{h}^{n}\nabla^{\perp}\psi_{h}^{n}\right\rangle _{K}-\left\langle v_{h},Q-r\omega^{n}\right\rangle _{K}+\left\langle v_{h},\omega^{n}\nabla^{\perp}\psi_{h}^{n}\cdot\hat{n}\right\rangle _{\partial K}\right)
                \]
                $K\in\mathcal{T}_{h},$ to obtain $\omega_{h}^{\left(1\right)}.$
                
                \STATE Solve 
                \[
                \left\langle \phi_{h},\omega_{h}^{\left(1\right)}\right\rangle _{\Omega}=-\left\langle \nabla\phi_{h},\nabla\psi_{h}^{\left(1\right)}\right\rangle _{\Omega}
                \]
                to obtain $\psi_{h}^{\left(1\right)}$ which we then use to solve
                \[
                \begin{aligned}\left\langle v_{h},\omega_{h}^{\left(2\right)}\right\rangle _{K} & =\frac{3}{4}\left\langle v_{h},\omega_{h}^{n}\right\rangle _{K}+\frac{1}{4}\left\langle v_{h},\omega_{h}^{\left(1\right)}\right\rangle _{K}\\
                & \qquad-\frac{\Delta t}{4}\left(\left\langle \nabla v_{h},-\omega_{h}^{\left(1\right)}\nabla^{\perp}\psi_{h}^{\left(1\right)}\right\rangle _{K}-\left\langle v_{h},Q-r\omega_{h}^{\left(1\right)}\right\rangle _{K}+\left\langle v_{h},\omega_{h}^{\left(1\right)}\nabla^{\perp}\psi_{h}^{\left(1\right)}\cdot\hat{n}\right\rangle _{\partial K}\right)
                \end{aligned}
                \]
                $K\in\mathcal{T}_{h},$ to obtain $\omega_{h}^{\left(2\right)}.$
                
                \STATE Solve 
                \[
                \left\langle \phi_{h},\omega_{h}^{\left(2\right)}\right\rangle _{\Omega}=-\left\langle \nabla\phi_{h},\nabla\psi_{h}^{\left(2\right)}\right\rangle _{\Omega}
                \]
                to obtain $\psi_{h}^{\left(2\right)}$ which we then use to solve
                
                \[
                \begin{aligned}\left\langle v_{h},\omega_{h}^{n+1}\right\rangle _{K} & =\frac{1}{3}\left\langle v_{h},\omega_{h}^{n}\right\rangle _{K}+\frac{2}{3}\left\langle v_{h},\omega_{h}^{\left(2\right)}\right\rangle _{K}\\
                & \qquad-\frac{2\Delta t}{3}\left(\left\langle -\omega_{h}^{\left(2\right)}\nabla^{\perp}\psi_{h}^{\left(2\right)},\nabla v_{h}\right\rangle _{K}-\left\langle v_{h},Q-r\omega_{h}^{\left(2\right)}\right\rangle _{K}+\left\langle v_{h},\omega_{h}^{\left(2\right)}\nabla^{\perp}\psi_{h}^{\left(2\right)}\cdot\hat{n}\right\rangle _{\partial K}\right)
                \end{aligned}
                \]
                $K\in\mathcal{T}_{h},$ to obtain $\omega_{h}^{n+1}.$
                
                \STATE Set $\omega_{i+1}=\omega_{h}^{n+1}.$
                
                \ENDFOR
                
        \end{algorithmic}
\end{algorithm}


\subsection{Stochastic version}

Let $\left(\Omega,\mathcal{F},\left(\mathcal{F}_{t}\right)_{t\geq0},P\right)$
be a filtered probability space. Let $W_{t}^{i}:\Omega\rightarrow\mathbb{R},$
$i=1,2,\dots,$ be a sequence of independent Brownian motions. A stochastic
version of the Euler fluid equation (\ref{eq:2DEulerVorticity}) as
derived in \cite{Holm2015} is given by the following damped and forced stochastic
partial differential equation (SPDE)
\begin{equation}
\dd q+\mathcal{L}_{\vecu}qdt+\sum_{i=1}^{m}\mathcal{L}_{\boldsymbol{\xi}_{i}}q\circ dW_{t}^{i}=\left(Q-rq\right)dt\label{eq:stochastic2dEuler}
\end{equation}
where the vector fields $\boldsymbol{\xi}_{i}$ represent spatial
correlations defined by a velocity-velocity correlation matrix $C_{ij}=\boldsymbol{\xi}_{i}\boldsymbol{\xi}_{j}^{T}.$
Here $\mathcal{L}_{\vecu}q$ and $\mathcal{L}_{\boldsymbol{\xi}_{i}}q$
denote the Lie derivatives of $q$ with respect to the vector fields
$\vecu$ and $\boldsymbol{\xi}$ respectively. In particular we have
\[
\mathcal{L}_{\vecu}q=\left[\vecu,q\right]:=\left(\vecu\cdot\nabla\right)q-\left(q\nabla\cdot\right)\vecu.
\]

\begin{remark} \label{rem:spde_exist_uniq_sol}For the stochastic
 Euler fluid equations with no forcing in three dimensions defined
 on the torus $\mathbb{T}$, it is shown in \cite{CrFlHo2017} that,
 for an initial vorticity in $W^{2,2}\left(\mathbb{T}\right),$ there
 exists an unique local (strong) solution in the space $W^{2,2}\left(\mathbb{T}\right).$
 For the two dimensional case considered in the present work, a global existence and uniqueness proof
 is being prepared in \cite{CrOLa2018} 
\end{remark}
Equation (\ref{eq:stochastic2dEuler}) arises from the assumption that
the Eulerian transport velocity for this flow is described by the
Stratonovich stochastic differential equation \eqref{StochVF}.


\begin{remark} \label{rem:finite_num_eofs}
        One may ask whether the sum in \eqref{StochVF}
        should be over an infinite number of terms. For simplification, we make the assumption that $m$ is finite. This assumption allows us to avoid certain technical issues when we are interested in the practical aspects for data assimilation.
\end{remark}

Assuming $\vecu$ in \eqref{StochVF} is divergence free, and the $\boldsymbol{\xi}_{i}$ are taken to be the eigenvectors of the velocity-velocity correlation tensor, one can show that $\boldsymbol{\xi}_{i}$ are also divergent free vector fields. Hence for each $\boldsymbol{\xi}_{i},$
there exists a potential function, denoted by $\zeta_{i},$ such that
\[
\boldsymbol{\xi}_{i}=\nabla^{\perp}\zeta_{i}.
\]
Thus \eqref{StochVF} can be expressed in terms of $\psi$ and $\zeta_i$,
\begin{equation}
\dd \tilde{\vecx{x}} =\gradperp\psi dt+\sum_{i=1}^{m}\gradperp\zeta_{i}\circ dW_{t}^{i}.\label{eq:stochastic_transport_velocity_streamfunction_form}
\end{equation}
Expressing the transport velocity in this form is useful because it
allows us to introduce stochastic perturbation (i.e. terms with $\circ dW_{t}^{i}$)
via the streamfunction when solving the SPDE system numerically, thereby
keeping the discretisation of (\ref{eq:stochastic2dEuler}) the same
as the deterministic equation (\ref{eq:2DEulerVorticity}). In other words, upon using
the divergence free properties of $\vecu$ and $\boldsymbol{\xi}_{i}$,
we can rewrite (\ref{eq:stochastic2dEuler}) to obtain
\begin{equation}
\dd q+\nabla^{\perp}\left(\psi dt+\sum_{i=1}^{m}\zeta_{i}\circ dW_{t}^{i}\right)\cdot\nabla q=\left(Q-rq\right)dt
\,,\label{eq:2DEulerSpdeDeterministicForm}
\end{equation}
which has the same form as (\ref{eq:2DEulerVorticity}).
Thus, our numerical algorithm for solving the SPDE system is largely the
same as Algorithm \ref{alg:det_sys_alg}.

We describe the numerical method in the next subsection and show that it is consistent
with the SPDE. 

\begin{remark}
        Equation (\ref{eq:stochastic2dEuler}) is in Stratonovich form. To
        obtain the equivalent It\^o form of (\ref{eq:stochastic2dEuler}) we
        apply the identity
        \begin{equation}
        \int_{0}^{t}\mathcal{L}_{\boldsymbol{\xi}_{i}}q\left(s\right)\circ dW_{s}^{i}=\int_{0}^{t}\mathcal{L}_{\boldsymbol{\xi}_{i}}q\left(s\right)dW_{s}^{i}+\frac{1}{2}\left\langle \mathcal{L}_{\boldsymbol{\xi}_{i}}q,W^{i}\right\rangle _{t}\label{eq:ito_stochastic2deuler}
        \end{equation}
        where $\left\langle .,.\right\rangle _{t}$ is the cross-variation
        process and 
        \begin{align*}
        \left\langle \mathcal{L}_{\boldsymbol{\xi}_{i}}q,W^{i}\right\rangle _{t} & =\mathcal{L}_{\boldsymbol{\xi}_{i}}\left\langle q,W^{i}\right\rangle _{t}\\
                & =\mathcal{L}_{\boldsymbol{\xi}_{i}}\left\langle \int\{(Q-rq)dt-\mathcal{L}_{\vecu}qdt-\sum_{j=1}^{\infty}\mathcal{L}_{\boldsymbol{\xi}_{j}}q\circ dW_{t}^{j}\},W^{i}\right\rangle _{t}\\
                & =\mathcal{L}_{\boldsymbol{\xi}_{i}}\left\langle -\int_{0}^{.}\mathcal{L}_{\boldsymbol{\xi}_{i}}q\circ dW_{s}^{i},W^{i}\right\rangle _{t}\\
                & =\mathcal{L}_{\boldsymbol{\xi}_{i}}\left(-\int_{0}^{t}\mathcal{L}_{\boldsymbol{\xi}_{i}}q\left(s\right)ds\right)\\
                & =-\int_{0}^{t}\mathcal{L}_{\boldsymbol{\xi}_{i}}^{2}q(s)ds
        \end{align*}
        Hence
        \[
        \int_{0}^{t}\mathcal{L}_{\boldsymbol{\xi}_{i}}q\left(s\right)\circ dW_{s}^{i}=\int_{0}^{t}\mathcal{L}_{\boldsymbol{\xi}_{i}}q\left(s\right)dW_{s}^{i}-\frac{1}{2}\int_{0}^{t}\mathcal{L}_{\boldsymbol{\xi}_{i}}^{2}q(s)ds
        \]
        and (\ref{eq:ito_stochastic2deuler}) is thus
        \begin{equation}
        \dd q+\mathcal{L}_{\vecu}qdt+\sum_{i=1}^{m}\mathcal{L}_{\boldsymbol{\xi}_{i}}q\ dW_{t}^{i}=\frac{1}{2}\sum_{i=1}^{m}\mathcal{L}_{\boldsymbol{\xi}_{i}}^{2}q\ dt+(Q-rq)dt\label{eq:ito_spde_euler}
        \end{equation}
        where $\mathcal{L}_{\boldsymbol{\xi}_{i}}^{2}q=\mathcal{L}_{\boldsymbol{\xi}_{i}}\left(\mathcal{L}_{\boldsymbol{\xi}_{i}}q\right)=\left[\boldsymbol{\xi}_{i},\left[\boldsymbol{\xi}_{i},q\right]\right]$
        is the double Lie derivative of $q$ with respect to the divergence free vector field $\boldsymbol{\xi}_{i}$.
\end{remark}

\subsubsection{Numerical implementation}

The SPDE system (\ref{eq:stochastic2dEuler}) has Stratonovich stochastic
terms. Consequently, to solve it numerically, the scheme must take this into
account. Of course one could also work with the corresponding It\^o
form (\ref{eq:ito_spde_euler}), in which case the equation would have
a modified drift term.

To solve the stochastic system (\ref{eq:stochastic2dEuler}), we extend the SSPRK3 scheme used 
in the deterministic case. We will show that the numerical scheme we introduce
is consistent in the sense of Definition \ref{def:consistency}, see
\cite{lang2010lax}.

Let $\left(H_{p},\left(\cdot,\cdot\right)_{H_{p}}\right)$ be a separable
Hilbert space with norm $\left\Vert \cdot\right\Vert _{H_{p}}:=\sqrt{\left(\cdot,\cdot\right)_{H_{p}}}.$
In our setting, we have $H_{p}=W^{p,2},$ for some sufficiently large
$p,$ see Remark \ref{remark:well-posedness}. Let $V_{h}$ be a
finite dimensional subspace of $H_{p}.$ The parameter $h\in\left(0,1\right]$
controls the dimension of $V_{h}.$ For every $h,$ let $P_{h}:H_{p}\rightarrow V_{h}$
denote the Ritz projection operator mapping elements of $H_{p}$ to
the finite dimensional subspace $V_{h}$ such that 
\[
\lim_{h\rightarrow0}\left\Vert P_{h}q-q\right\Vert _{H}=0
\]
for all $q\in H_{p}$ and 
\begin{equation}
\left(P_{h}q,v_{h}\right)_{H_{p}}=\left(q,v_{h}\right)_{H_{p}}\label{eq:Ritz_projection}
\end{equation}
for all $q\in H_{p}$ and $v_{h}\in V_{h}.$

Let $f:H_{p}\times H_{p}\rightarrow H_{p-1}$ denote a nonlinear operator
which is affine in the second variable, and $g^{i}:H_{p}\rightarrow H_{p-1},$
$i=1,2,\dots,m,$ are linear mappings from $H_{p}$ to $H_{p-1}$.
For notational convenience, we shall write $f\left(\cdot\right):=f\left(\cdot,\cdot\right)$
when the two arguments are the same. Consider the following Stratonovich
SPDE 
\begin{equation}
dq(t)=f\left(q(t)\right)dt+\sum_{i=1}^{m}g^{i}\left(q(t)\right)\circ dW_{t}^{i}.\label{eq:stratonovich_spde}
\end{equation}
In our model (\ref{eq:stochastic2dEuler}), $f(q)=-\mathcal{L}_{\vecu}q+(Q-rq)$
and $g^{i}(q)=-\mathcal{L}_{\boldsymbol{\xi}_{i}}q.$ Since $\vecu$
and $q$ satisfy the relation \[\vecu = \nabla^\perp \psi =\nabla^\perp \Delta ^{-1} q,\] we write $f(q)$. 

As noted in Remark \ref{rem:spde_exist_uniq_sol}, for our choice of $f$ and $g^{i}$, for sufficiently large $p$ the SPDE is well-posed, see \cite{CrOLa2018}. However, we do not consider the well-posedness of \eqref{eq:stratonovich_spde} for general $f$ and $g^{i}$, it is beyond the scope of the present work. 

The stochastic SSPRK3 scheme for the SPDE (\ref{eq:stratonovich_spde}) is

\begin{align}
q^{(1)} & =q^{n}+f\left(q^{n}\right)\Delta+\sum_{i=1}^{m}g^{i}\left(q^{n}\right)\Delta W^{i}\nonumber \\
q^{\left(2\right)} & =\frac{3}{4}q^{n}+\frac{1}{4}\left(q^{\left(1\right)}+f\left(q^{\left(1\right)}\right)\Delta+\sum_{i=1}^{m}g^{i}\left(q^{\left(1\right)}\right)\Delta W^{i}\right)\nonumber \\
q^{n+1} & =\frac{1}{3}q^{n}+\frac{2}{3}\left(q^{\left(2\right)}+f\left(q^{\left(2\right)}\right)\Delta+\sum_{i=1}^{m}g^{i}\left(q^{\left(2\right)}\right)\Delta W^{i}\right),\label{eq:ssprk3_stochastic}
\end{align}
which computes the approximation $q^{n+1}$ given $q^{n}.$ We will show
this time stepping method is consistent in the next subsection.

We let $S_{\Delta}:H_{p}\times\Omega\rightarrow H_{p-1}$ denote the
one step stochastic SSPRK3 temporal discretisation, that is 
\[
q^{n+1}=S_{\Delta}\left(q^{n}\right).
\]
The operator $S_{\Delta}$ can be seen as the discrete approximation
of the solution semigroup operator $S\left(t\right):H_{p}\times\Omega\rightarrow H_{p}$
for (\ref{eq:stratonovich_spde}). In other words, for an initial
condition $q_{0}:\Omega\rightarrow H_{p}$ the solution to the SPDE
(\ref{eq:stratonovich_spde}) is given by 
\[
q(t)=S\left(t\right)\left( q_{0}\right).
\]
Note that in the continuous case, no differentiability is lost.

Consider the semi-discrete problem on $V_{h}$ 
\begin{equation}
\dd q_{h}(t)=f_{h}\left(q_{h}(t)\right)dt+\sum_{i=1}^{m}g_{h}^{i}\left(q_{h}(t)\right)\circ dW_{t}^{i}.\label{eq:semi_discrete_spde}
\end{equation}
where $f_{h}$ and $g_{h}^{i}$ are spatial approximations to the
operators $f$ and $g^{i}.$ In our implementation, like in the PDE case, we use 
finite element discretisation to obtain $f_{h}$
and $g_{h}^{i}.$ 
The scheme for the SPDE system is a combination of the stochastic SSPRK3 scheme for the temporal variable, and a mix of continuous and discontinuous Galerkin finite
element approximation for the spatial variables. Algorithm \ref{alg:stochastic_sys_alg} summarises the numerical methodology for the SPDE system. It is largely the same as Algorithm \ref{alg:det_sys_alg}, with the differences (i.e. the additional stochastic terms) highlighted in red. Note that at each corresponding step in the algorithm, we add the perturbations via the streamfunction, see \eqref{eq:2DEulerSpdeDeterministicForm}, the result of which is then used to obtain the velocity field $\vecu$ used in the subsequent numerical step. 

By substituting $q_{h}^{(1)}$ into $q_{h}^{(2)},$ and $q_{h}^{(2)}$
into $q^{n+1}$ in (\ref{eq:ssprk3_stochastic}) and expanding $f_{h}$
and $g_{h}^{i}$, the combined spatial and temporal scheme can be
expressed in leading order terms as 
\begin{equation}
q_{h}^{n+1}=S_{\Delta}\left( q_{h}^{n} \right)=q_{h}^{n}+f_{h}\left(q_{h}^{n}\right)\Delta+\sum_{i=1}^{m}g_{h}^{i}\left(q_{h}^{n}\right)\Delta W^{i}+\frac{1}{2}\sum_{i,j=1}^{m}g_{h}^{i}g_{h}^{j}\left(q_{h}^{n}\right)\Delta W^{i}\Delta W^{j}+H.O.T.\label{eq:ssprk3_leading_order}
\end{equation}
where $H.O.T.$ denotes \emph{higher order terms}.

\begin{algorithm}
        \caption{\label{alg:stochastic_sys_alg}Solver algorithm for the SPDE system}
        
        \begin{algorithmic}[1] 
                
                \STATE Let $\Delta t$ and $\Delta x$ be the time discretisation
                step and the spatial discretisation step respectively, such that they
                satisfy the CFL condition of $1/3.$ Let $q_{0}$ be a given initial
                vorticity at $t=0.$ 
                
                \FOR{$t_i = i\Delta t$, $i=0,1,2\dots,N-1$, with $t_N = T$}
                
                \STATE Set $q_{h}^{n}=q_{i}.$
                
                \textcolor{red}{\STATE Let $\theta_{i}:=\sum_{j}^{N}\zeta_{j}\Delta W^{j}$
                        for iid $\Delta W^{j}\sim\mathcal{N}\left(0,\Delta t\right)$}
                
                \STATE Solve 
                \[
                \left\langle \phi_{h},q_{h}^{n}\right\rangle _{\Omega}=-\left\langle \nabla\phi_{h},\nabla\psi_{h}^{n}\right\rangle _{\Omega}
                \]
                to obtain $\psi_{h}^{n}$. \textcolor{red}{Let $\tilde{\psi}_{h}^{n}:=\psi_{h}^{n}+\theta_{i}$
                        which we then use to solve }
                \[
                \left\langle v_{h},q_{h}^{(1)}\right\rangle _{K}=\left\langle v_{h},q_{h}^{n}\right\rangle _{K}-\Delta t\left(\left\langle \nabla v_{h},-q_{h}^{n}\nabla^{\perp}\tilde{\psi}_{h}^{n}\right\rangle _{K}-\left\langle v_{h},Q-rq^{n}\right\rangle _{K}+\left\langle v_{h},q^{n}\nabla^{\perp}\tilde{\psi}_{h}^{n}\cdot\hat{n}\right\rangle _{\partial K}\right)
                \]
                $K\in\mathcal{T}_{h},$ to obtain $q_{h}^{\left(1\right)}.$
                
                \STATE Solve
                \[
                \left\langle \phi_{h},q_{h}^{\left(1\right)}\right\rangle _{\Omega}=-\left\langle \nabla\phi_{h},\nabla\psi_{h}^{\left(1\right)}\right\rangle _{\Omega}
                \]
                to obtain $\psi_{h}^{\left(1\right)}.$ \textcolor{red}{Let $\tilde{\psi}_{h}^{\left(1\right)}:=\psi_{h}^{\left(1\right)}+\theta_{i}$
                        which we then use to solve }
                \[
                \begin{aligned}\left\langle v_{h},q_{h}^{\left(2\right)}\right\rangle _{K} & =\frac{3}{4}\left\langle v_{h},q_{h}^{n}\right\rangle _{K}+\frac{1}{4}\left\langle v_{h},q_{h}^{\left(1\right)}\right\rangle _{K}\\
                & \qquad-\frac{\Delta t}{4}\left(\left\langle \nabla v_{h},-q_{h}^{\left(1\right)}\nabla^{\perp}\tilde{\psi}_{h}^{\left(1\right)}\right\rangle _{K}-\left\langle v_{h},Q-rq_{h}^{\left(1\right)}\right\rangle _{K}+\left\langle v_{h},q_{h}^{\left(1\right)}\nabla^{\perp}\tilde{\psi}_{h}^{\left(1\right)}\cdot\hat{n}\right\rangle _{\partial K}\right)
                \end{aligned}
                \]
                $K\in\mathcal{T}_{h},$ to obtain $q_{h}^{\left(2\right)}.$
                
                \STATE Solve
                \[
                \left\langle \phi_{h},q_{h}^{\left(2\right)}\right\rangle _{\Omega}=-\left\langle \nabla\phi_{h},\nabla\psi_{h}^{\left(2\right)}\right\rangle _{\Omega}
                \]
                to obtain $\psi_{h}^{\left(2\right)}.$ \textcolor{red}{Let $\tilde{\psi}_{h}^{\left(2\right)}:=\psi_{h}^{\left(2\right)}+\theta_{i}$
                        which we then use to solve }
                
                \[
                \begin{aligned}\left\langle v_{h},q_{h}^{n+1}\right\rangle _{K} & =\frac{1}{3}\left\langle v_{h},q_{h}^{n}\right\rangle _{K}+\frac{2}{3}\left\langle v_{h},q_{h}^{\left(2\right)}\right\rangle _{K}\\
                & \qquad-\frac{2\Delta t}{3}\left(\left\langle -q_{h}^{\left(2\right)}\nabla^{\perp}\tilde{\psi}_{h}^{\left(2\right)},\nabla v_{h}\right\rangle _{K}-\left\langle v_{h},Q-rq_{h}^{\left(2\right)}\right\rangle _{K}+\left\langle v_{h},q_{h}^{\left(2\right)}\nabla^{\perp}\tilde{\psi}_{h}^{\left(2\right)}\cdot\hat{n}\right\rangle _{\partial K}\right)
                \end{aligned}
                \]
                $K\in\mathcal{T}_{h},$ to obtain $q_{h}^{n+1}.$
                
                \STATE Set $q_{i+1}=q_{h}^{n+1}.$ 
                
                \ENDFOR
                
        \end{algorithmic}
\end{algorithm}

\subsubsection{Consistency of the numerical method for the SPDE}

We now define consistency for the time stepping scheme for the SPDE
(\ref{eq:stratonovich_spde}). 

The approximation operator $S_{\Delta}$ can be decomposed into a
deterministic part $S_{\Delta}^{d}$ and a stochastic part $S_{\Delta}^{s}.$
The deterministic part correspond to the SSPRK3 discretisation \eqref{eq:ssprk3}
for the PDE system \eqref{eq:2DEulerVorticity}. The stochastic part
$S_{\Delta}^{s}$ should satisfy additional compatibility condition, see Definition \ref{def:stochastic_compatibility}, in order to be consistent
with Stratonovich integrals.

\begin{definition}
                The local truncation error $e_{j}\left(\Delta\right)$ of the discretisation 
                scheme (\ref{eq:ssprk3_stochastic}) is defined by 
                \begin{equation}
                e_{j}\left(\Delta\right)=q\left(t_{j+1}\right)-S_{\Delta}q\left(t_{j}\right).\label{eq:lte}
                \end{equation}
                The corresponding deterministic local truncation error is 
                \begin{equation}
                e_{j}^{d}\left(\Delta\right)=\omega\left(t_{j+1}\right)-S_{\Delta}^{d}\omega\left(t_{j}\right),\label{eq:deterministic_lte}
                \end{equation}
                where $\omega$ solves the deterministic system \eqref{eq:2DEulerVorticity}.
\end{definition}

\begin{definition}
        \label{def:stochastic_compatibility}For some
        $\gamma>1$, the discrete approximation operators $S_{\Delta}^{s}$
        is called $\mathcal{F}$-compatible if $S_{\Delta}^{s}$ is $\mathcal{F}_{t_{j+1}}$
        measurable and 
        \begin{equation}
        E\left(\left.S_{\Delta}^{s}\left(q\left(t_{j}\right)\right)\right|\mathcal{F}_{t_{j}}\right)=\frac{1}{2}\Delta\sum_{i=1}^{m}g^{i}g^{i}q\left(t_{j}\right)+\mathcal{O}(\Delta^{\gamma})\label{eq:stochastic_compatibility}
        \end{equation}
        for all $j=0,\dots,n-1.$ 
\end{definition}

\begin{definition}{[}Consistency{]} 
                \label{def:consistency} We say
                the numerical scheme $S_{\Delta}$ is consistent in mean square of
                order $\gamma>1$ with respect to (\ref{eq:stratonovich_spde}) if
                there exists a constant $c$ independent of $\Delta\in(0,T]$ and
                if for all $\epsilon>0,$ there exist $\eta,\delta>0$ such that for
                all $0<\Delta<\delta,$ and $j\in\left\{ 1,2,\dots,N\right\} $ 
                \begin{equation}
                E\left(\left\Vert e_{j}\left(\Delta\right)\right\Vert _{H}^{2}\right)<c\Delta^{\gamma}\label{eq:consistency}
                \end{equation}
                and 
                \begin{equation}
                E\left(\left\Vert e_{j}^{d}\left(\Delta\right)\right\Vert _{H}\right)<c\Delta^{\gamma}\label{eq:deterministic_lte_consistency}
                \end{equation}
                and $S_{\Delta}^{s}$ is $\mathcal{F}$-compatible. 
\end{definition}

Henceforth we introduce
a few notational simplifications. Let $f_{s}(q_{t})$ denote $f\left(q_{s},q_{t}\right)$
when $f$ depends on the solution at two different times. In our model,
this means $f_{s}(q_{t})=-\mathcal{L}_{{\bf u}_{s}}q_{t}+(Q-rq_{t}),$
and without the subscript we have $f(q_{t})=-\mathcal{L}_{{\bf u}_{t}}q_{t}+(Q-rq_{t}).$
Also, since $f_{s}(\cdot)$ is affine, we can express it
as the summation of a linear part and a translation, $f_{s}(\cdot)=A_{s}(\cdot)+B,$ where $A_{s}$ denotes the linear part, and $B$ denotes the translation.

\begin{assumption} \label{assump:regularity}
        We assume the following are bounded, 
        
        $E\left(\sup_{k}\left\Vert f_{k}q\left(t_{n}\right)-f\left(q^{n}\right)\right\Vert _{H}^{2}\right)$,
        
        $E\left(\sup_{s}\sup_{r}\left\Vert A_{s}f_{r}\left(q_{r}\right)\right\Vert _{H}^{2}\right)$,
        
        $E\left(\sup_{s}\sup_{r}\left\Vert A_{s}g^{i}g^{i}\left(q_{r}\right)\right\Vert _{H}^{2}\right)$,
        
        $E\left(\sup_{s}\sup_{r}\left\Vert A_{s}g^{i}\left(q_{r}\right)\right\Vert _{H}^{2}\right)$,
        
        $E\left(\sup_{r}\left\Vert g^{i}f_{r}\left(q_{r}\right)\right\Vert _{H}^{2}\right)$,
        
        $E\left(\sup_{r}\left\Vert g^{i}g^{j}\left(q_{r}\right)\right\Vert _{H}^{2}\right)$,
        
        $E\left(\sup_{r}\left\Vert g^{i}g^{j}g^{j}\left(q_{r}\right)\right\Vert _{H}^{2}\right)$.
        
        We also assume that the terms in \emph{H.O.T.} in \eqref{eq:ssprk3_leading_order} are bounded in expected $H$ norm squared.
\end{assumption}
{
\begin{remark}\label{remark:well-posedness}
	Assumption A\ref{assump:regularity} holds provided the SPDE is well-posed and for all
	$T>0$ and for sufficiently large $p$, we have 
	\[
	E\left(\sup_{t\in\left[0,T\right]}\left\Vert q(t)\right\Vert _{p,2}^{2}\right)<\infty,
	\]
	see \cite{CrOLa2018}.
\end{remark}
}
\begin{lemma}
        \label{lem:consistency} Assuming the SPDE (\ref{eq:stratonovich_spde})
        is well-posed, and Assumption \ref{assump:regularity} is satisfied, the numerical scheme
        $S_{\Delta}$ described by (\ref{eq:ssprk3_stochastic}) is consistent
        with $\gamma=2$. 
\end{lemma}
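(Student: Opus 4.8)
The plan is to verify the three requirements of Definition \ref{def:consistency} in turn: the deterministic truncation bound \eqref{eq:deterministic_lte_consistency}, the $\mathcal{F}$-compatibility of the stochastic part, and the mean-square truncation bound \eqref{eq:consistency}, all with $\gamma=2$. The deterministic bound is the easiest: when all noise is switched off, $S_\Delta$ reduces to the classical third-order SSPRK3 map $S_\Delta^d$ of \eqref{eq:ssprk3}, whose one-step truncation error against the smooth solution $\omega$ of \eqref{eq:2DEulerVorticity} is $O(\Delta^4)$, and since $\Delta^4<\Delta^2$ for small $\Delta$ this gives \eqref{eq:deterministic_lte_consistency}. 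Both genuinely stochastic claims rest on the leading-order expansion \eqref{eq:ssprk3_leading_order} of the one-step map together with the It\^o form \eqref{eq:ito_spde_euler} of the exact dynamics; the bookkeeping rule throughout is that in $L^2$ one has $\Delta W^i\sim\Delta^{1/2}$ and an iterated double integral $\sim\Delta$, while a pure-drift increment is $\sim\Delta$.

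For $\mathcal{F}$-compatibility I would isolate the noise-carrying terms of \eqref{eq:ssprk3_leading_order} as $S_\Delta^s(q(t_j))$ and take the conditional expectation given $\mathcal{F}_{t_j}$. Since the increments over $[t_j,t_{j+1}]$ are independent of $\mathcal{F}_{t_j}$ with $E(\Delta W^i)=0$ and $E(\Delta W^i\Delta W^k)=\delta_{ik}\Delta$, every term linear in the noise averages to zero and the quadratic term collapses to
\[
E\!\left( \tfrac{1}{2}\sum_{i,k} g^i g^k(q(t_j))\,\Delta W^i\Delta W^k \ \Big|\ \mathcal{F}_{t_j}\right)=\tfrac{1}{2}\Delta\sum_i g^i g^i(q(t_j)).
\]
The remaining higher-order noise terms have conditional mean $O(\Delta^2)$ by Assumption \ref{assump:regularity}, so $S_\Delta^s$ satisfies \eqref{eq:stochastic_compatibility} with $\gamma=2$; this is exactly the discrete image of the Stratonovich-to-It\^o drift correction $\tfrac12\sum_i g^i g^i q$ appearing in \eqref{eq:ito_spde_euler}.

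The core of the argument is the mean-square bound \eqref{eq:consistency}. I would write the exact increment $q(t_{j+1})-q(t_j)$ from \eqref{eq:ito_spde_euler} and apply a stochastic (It\^o--Taylor) expansion about $q(t_j)$, using the linearity of the $g^i$ and the affineness of $f$. Collecting terms by $L^2$-order produces, as leading contributions, $f(q(t_j))\Delta$, $\sum_i g^i(q(t_j))\Delta W^i$, the It\^o correction $\tfrac12\sum_i g^i g^i(q(t_j))\Delta$, and the second-order iterated-integral term $\sum_{i,k} g^i g^k(q(t_j))\,J^{ki}$, where $J^{ki}:=\int_{t_j}^{t_{j+1}}(W^k_s-W^k_{t_j})\,dW^i_s$. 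Subtracting the scheme expansion \eqref{eq:ssprk3_leading_order} and using $\Delta W^i\Delta W^k=J^{ik}+J^{ki}+\delta_{ik}\Delta$, the drift, single-noise and It\^o-correction terms cancel exactly, leaving as dominant discrepancy only the commutator contribution
\[
e_j(\Delta)=\tfrac{1}{2}\sum_{i,k}[g^i,g^k](q(t_j))\,J^{ki}+R,
\]
in which the diagonal $i=k$ terms vanish because $[g^i,g^i]=0$, and $R$ gathers all remainders. Since $E\big((J^{ki})^2\big)=\tfrac12\Delta^2$ for $i\neq k$, the commutator sum is $O(\Delta)$ in $L^2$, hence $O(\Delta^2)$ in mean square, while $R$ (higher It\^o--Taylor terms and the higher-order terms of \eqref{eq:ssprk3_leading_order}) is controlled in expected squared $H$-norm by Assumption \ref{assump:regularity} at order $\Delta^3$ or smaller. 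This gives $E(\|e_j(\Delta)\|_H^2)<c\Delta^2$ and completes consistency with $\gamma=2$.

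The main obstacle, and where I would spend most care, is the term-by-term matching of the two expansions in the correct $L^2$-order together with the rigorous control of the remainder $R$. The delicate point is that the symmetric product $\tfrac12\Delta W^i\Delta W^k$ generated by the Runge--Kutta stages reproduces only the \emph{symmetrised} double stochastic integral, not the individual ordered integrals $J^{ki}$; the mismatch is precisely the L\'evy-area (commutator) contribution, which is genuinely of size $\Delta$ in $L^2$ and therefore pins the order at exactly $\gamma=2$ and no higher. One must also confirm that substituting the stages $q^{(1)},q^{(2)}$ into $q^{n+1}$ really does reproduce \eqref{eq:ssprk3_leading_order} to the stated order, and that each appeal to Assumption \ref{assump:regularity} is matched to one of its listed bounded quantities (for instance $g^i g^j$, $g^i g^j g^j$, $A_s g^i$ or $g^i f_r$), so that every remainder is legitimately $O(\Delta^2)$ in mean square.
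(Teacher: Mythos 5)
Your proposal is correct, and for the deterministic bound and the $\mathcal{F}$-compatibility it coincides with the paper's argument (conditional expectations of $\Delta W^i\Delta W^j$ give $\delta_{ij}\Delta$, and the noise-free scheme is classical SSPRK3). For the core mean-square bound \eqref{eq:consistency}, however, you take a genuinely different and sharper route. The paper never exploits the cancellation you identify: it splits $e_n(\Delta)$ into four blocks via $(x_1+\dots+x_n)^2\le n\sum x_k^2$ -- the drift difference $\int (f(q_s)-f(q^n))\,ds$, the diffusion difference $\sum_i\int(g^i(q_s)-g^i(q^n))\,dW_s^i$, the mismatch between $\tfrac12\sum_i\int g^ig^i(q_s)\,ds$ and $\tfrac12\sum_{i,j}g^ig^j(q^n)\Delta W^i\Delta W^j$, and the higher-order terms -- and then bounds each block separately by Cauchy--Schwarz and the It\^o isometry, using Assumption \ref{assump:regularity}. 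This works because every block is already individually $\mathcal{O}(\Delta^2)$ in mean square (in particular the iterated integral $\sum_{i,j}\int\int g^ig^j\,dW^j\,dW^i$ and the product $\Delta W^i\Delta W^j$ each have second moment of order $\Delta^2$), so no cancellation is needed to reach $\gamma=2$. Your It\^o--Taylor matching, with $\Delta W^i\Delta W^k=J^{ik}+J^{ki}+\delta_{ik}\Delta$ reducing the leading discrepancy to the L\'evy-area commutator $\tfrac12\sum_{i,k}[g^i,g^k](q(t_j))J^{ki}$, buys something the paper does not state: it shows that for non-commuting $g^i$ the order $\gamma=2$ is attained exactly and cannot be improved by this scheme, and it pinpoints the structural source of the error. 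The price is that you must carry out and control the full stochastic Taylor expansion of both the exact increment and the three-stage scheme, whereas the paper's blunter block-by-block estimate gets the same $\gamma=2$ with less bookkeeping; in both cases every remainder is covered by one of the listed bounds in Assumption \ref{assump:regularity}, as you note.
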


We prove Lemma \ref{lem:consistency} next. To prove Lemma \ref{lem:consistency}, first note that the mean square of the local truncation error (\ref{eq:consistency})
can be bounded as follows. 

\begin{lemma}
        \begin{align}
        E\left(\left\Vert e_{n}\left(\Delta\right)\right\Vert _{H}^{2}\right) & \leq4E\left(\left\Vert \int_{t_{n}}^{t_{n+1}}\left(f\left(q(s)\right)-f\left(q^{n}\right)\right)ds\right\Vert _{H}^{2}\right)\nonumber \\
        & +4E\left(\left\Vert \sum_{i=1}^{m}\int_{t_{n}}^{t_{n+1}}\left(g^{i}\left(q_{s}\right)-g^{i}\left(q^{n}\right)\right)dW_{s}^{i}\right\Vert _{H}^{2}\right)\nonumber \\
        + & 4E\left(\left\Vert \frac{1}{2}\sum_{i=1}^{m}\left(\int_{t_{n}}^{t_{n+1}}g^{i}g^{i}\left(q_{s}\right)ds-\sum_{j=1}^{m}g^{i}g^{j}\left(q^{n}\right)\Delta W^{i}\Delta W^{j}\right)\right\Vert _{H}^{2}\right)\nonumber \\
        & +4E\left(\left\Vert H.O.T. \right\Vert _{H}^{2}\right)\label{eq:lte_decomposition}
        \end{align}
\end{lemma}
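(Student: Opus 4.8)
The plan is to establish \eqref{eq:lte_decomposition} purely by decomposing the local truncation error $e_n(\Delta)$ of \eqref{eq:lte} into four explicit pieces and then applying the elementary Hilbert-space inequality $\|\sum_{k=1}^4 v_k\|_H^2 \le 4\sum_{k=1}^4 \|v_k\|_H^2$ before taking expectations. No stochastic estimates are needed at this stage: this lemma is the algebraic decomposition that isolates the terms to be bounded later, in the proof of Lemma \ref{lem:consistency}.

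First I would write the exact-solution increment over $[t_n,t_{n+1}]$ in It\^o form. Converting the Stratonovich integral in \eqref{eq:stratonovich_spde} to It\^o exactly as in the passage leading to \eqref{eq:ito_spde_euler} (the cross-variation of $g^i(q)$ with $W^i$ produces the drift $\tfrac12\sum_i g^ig^i(q)$, since the $g^i$ are linear) gives
\[
q(t_{n+1}) - q(t_n) = \int_{t_n}^{t_{n+1}} f(q(s))\,ds + \sum_{i=1}^m \int_{t_n}^{t_{n+1}} g^i(q(s))\,dW^i_s + \frac12\sum_{i=1}^m\int_{t_n}^{t_{n+1}} g^ig^i(q(s))\,ds.
\]
Next I would write the one-step map $S_\Delta(q^n)$ in its leading-order expanded form \eqref{eq:ssprk3_leading_order}. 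Initialising the scheme at the exact value, i.e. taking $q^n = q(t_n)$ as dictated by \eqref{eq:lte}, and using $\int_{t_n}^{t_{n+1}} f(q^n)\,ds = f(q^n)\Delta$ together with $\int_{t_n}^{t_{n+1}} g^i(q^n)\,dW^i_s = g^i(q^n)\Delta W^i$, I subtract the two expressions.

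The subtraction groups $e_n(\Delta)$ into the four terms appearing in \eqref{eq:lte_decomposition}: the drift discrepancy $\int_{t_n}^{t_{n+1}}(f(q(s)) - f(q^n))\,ds$; the martingale discrepancy $\sum_i\int_{t_n}^{t_{n+1}}(g^i(q_s) - g^i(q^n))\,dW^i_s$; the correction discrepancy $\tfrac12\big(\sum_i\int_{t_n}^{t_{n+1}} g^ig^i(q_s)\,ds - \sum_{i,j} g^ig^j(q^n)\Delta W^i\Delta W^j\big)$, in which the exact It\^o correction drift is compared against the scheme's double-noise term; and the higher-order remainder $H.O.T.$ from \eqref{eq:ssprk3_leading_order}. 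Applying the four-term inequality termwise and taking $E$ then yields \eqref{eq:lte_decomposition}.

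There is no serious obstacle in this lemma itself; the only point requiring care is bookkeeping---keeping the Stratonovich-to-It\^o correction $\tfrac12\sum_i g^ig^i(q(s))$ on the exact side and pairing it with the scheme's $\tfrac12\sum_{i,j} g^ig^j(q^n)\Delta W^i\Delta W^j$ inside the single correction term, rather than splitting them across terms, so that the subsequent argument can exploit $E[\Delta W^i\Delta W^j\mid\mathcal{F}_{t_n}] = \delta_{ij}\Delta$. The genuine work---showing each of these four terms is $\mathcal{O}(\Delta^2)$ in mean square under Assumption \ref{assump:regularity}, and verifying the $\mathcal{F}$-compatibility of $S_\Delta^s$ from Definition \ref{def:stochastic_compatibility}---is deferred to the proof of Lemma \ref{lem:consistency}.
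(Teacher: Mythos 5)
Your proposal is correct and follows essentially the same route as the paper: write the exact solution over $[t_n,t_{n+1}]$ in It\^o integral form, subtract the leading-order expansion \eqref{eq:ssprk3_leading_order} of $S_\Delta(q^n)$ with $q^n=q(t_n)$, and apply the inequality $\left(x_1+\cdots+x_4\right)^2\leq 4\left(x_1^2+\cdots+x_4^2\right)$ before taking expectations. Your remark about keeping the It\^o correction drift paired with the scheme's double-noise term in a single discrepancy is exactly the bookkeeping the paper's decomposition performs.
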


\begin{proof} 
        Writing the SPDE (\ref{eq:stratonovich_spde}) in It\^o
        integral form we have 
        \begin{equation}
        q\left(t_{n+1}\right)  =q\left(t_{n}\right)+\int_{t_{n}}^{t_{n+1}}f\left(q_{s}\right)ds+\sum_{i=1}^{m}\int_{t_{n}}^{t_{n+1}}g^{i}\left(q_{s}\right)dW_{s}^{i}
        +\frac{1}{2}\sum_{i=1}^{m}\int_{t_{n}}^{t_{n+1}}g^{i}g^{i}\left(q_{s}\right)ds \label{eq:ito_spde_integral_form}
        \end{equation}
        
        Thus, using (\ref{eq:ssprk3_leading_order}) we get
        
        \begin{align}
        E\left(\left\Vert q\left(t_{n+1}\right)-S_{\Delta}\left(q^{n}\right)\right\Vert _{H}^{2}\right) & =E\left(\left\Vert \int_{t_{n}}^{t_{n+1}}\left(f\left(q_{s}\right)-f\left(q^{n}\right)\right)ds\right.\right.\nonumber \\
        & +\sum_{i=1}^{m}\int_{t_{n}}^{t_{n+1}}\left(g^{i}\left(q_{s}\right)-g^{i}\left(q^{n}\right)\right)dW_{s}^{i}\nonumber \\
        +\frac{1}{2}\sum_{i=1}^{m} & \left(\int_{t_{n}}^{t_{n+1}}g^{i}g^{i}\left(q_{s}\right)ds-\sum_{j=1}^{m}g^{i}g^{j}\left(q^{n}\right)\Delta W^{i}\Delta W^{j}\right)\nonumber \\
        - & \left.\left.H.O.T.\right\Vert _{H}^{2}\right)\label{eq:consistency_error}
        \end{align}
        Using the inequality $\left(x_{1}+x_{2}+\dots+x_{n}\right)^{2}\leq n\left(x_{1}^{2}+x_{2}^{2}+\dots+x_{n}^{2}\right)$,
        we have the result. 
\end{proof} 

We bound each term in (\ref{eq:lte_decomposition}) individually.

\begin{lemma}
        \[
        E\left(\left\Vert \int_{t_{n}}^{t_{n+1}}\left(f\left(q_{s}\right)-f\left(q^{n}\right)\right)ds\right\Vert _{H}^{2}\right)=\mathcal{O}\left(\Delta^{2}\right)
        \]
\end{lemma}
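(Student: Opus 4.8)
The plan is to spend one factor of $\Delta$ on a Cauchy--Schwarz inequality in time, which converts the squared norm of the time integral into a time integral of the squared norm, and then to recover the second factor of $\Delta$ from the length of the integration window, once the integrand has been bounded uniformly in expectation using Assumption A\ref{assump:regularity}.

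First I would apply the triangle inequality for the Bochner integral followed by Cauchy--Schwarz in the time variable (against the constant $1$), which pathwise gives
\[
\left\Vert \int_{t_{n}}^{t_{n+1}}\bigl(f(q_{s})-f(q^{n})\bigr)\,ds\right\Vert _{H}^{2}\le \Delta\int_{t_{n}}^{t_{n+1}}\left\Vert f(q_{s})-f(q^{n})\right\Vert _{H}^{2}\,ds,
\]
with $\Delta=t_{n+1}-t_{n}$. Taking expectations and using Tonelli to interchange $E$ with the positive time integral yields
\[
E\left(\left\Vert \int_{t_{n}}^{t_{n+1}}\bigl(f(q_{s})-f(q^{n})\bigr)\,ds\right\Vert _{H}^{2}\right)\le \Delta\int_{t_{n}}^{t_{n+1}}E\left(\left\Vert f(q_{s})-f(q^{n})\right\Vert _{H}^{2}\right)ds.
\]
Everything now reduces to a uniform bound on the integrand.

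To bound the integrand I would exploit the affine structure $f_{s}(\cdot)=A_{s}(\cdot)+B$. Since $q^{n}=q(t_{n})$, so that $f(q^{n})=f_{n}(q^{n})$ and $f(q_{s})=f_{s}(q_{s})$, I would split the increment into a transported-vorticity part and an advecting-velocity part,
\[
f(q_{s})-f(q^{n})=f_{s}(q_{s})-f_{n}(q^{n})=A_{s}\bigl(q_{s}-q^{n}\bigr)+\bigl(f_{s}(q^{n})-f_{n}(q^{n})\bigr).
\]
The velocity-change term is dominated in expected squared $H$-norm by $E\bigl(\sup_{k}\Vert f_{k}q(t_{n})-f(q^{n})\Vert_{H}^{2}\bigr)$, which is one of the quantities assumed bounded in A\ref{assump:regularity}. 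For the first term I would insert the It\^o form of the increment $q_{s}-q^{n}=\int_{t_{n}}^{s}f(q_{r})\,dr+\sum_{i}\int_{t_{n}}^{s}g^{i}(q_{r})\,dW_{r}^{i}+\tfrac12\sum_{i}\int_{t_{n}}^{s}g^{i}g^{i}(q_{r})\,dr$, apply the linear operator $A_{s}$ termwise, and bound the result by the remaining supremum quantities $E(\sup_{s}\sup_{r}\Vert A_{s}f_{r}(q_{r})\Vert_{H}^{2})$, $E(\sup_{s}\sup_{r}\Vert A_{s}g^{i}(q_{r})\Vert_{H}^{2})$ and $E(\sup_{s}\sup_{r}\Vert A_{s}g^{i}g^{i}(q_{r})\Vert_{H}^{2})$ of A\ref{assump:regularity}, together with the It\^o isometry for the martingale part. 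This shows $E(\Vert f(q_{s})-f(q^{n})\Vert_{H}^{2})\le c$ for a constant $c$ independent of $s$, $n$ and $\Delta$, and the remaining integral $\int_{t_{n}}^{t_{n+1}}ds=\Delta$ supplies the second factor, so the whole expression is $\le c\,\Delta^{2}=\mathcal{O}(\Delta^{2})$.

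The main obstacle is precisely the uniform-in-$n$ boundedness of the integrand: the constant must be independent of the step index and of $\Delta$, which is why the estimate is routed through the supremum quantities of Assumption A\ref{assump:regularity} and the global moment bound of Remark \ref{remark:well-posedness} rather than through naive time-continuity of $s\mapsto f(q_{s})$ --- the solution path is only H\"older-$\tfrac12$ in time owing to the martingale part, so a pointwise Lipschitz-in-time argument is unavailable. Once the integrand is controlled by a fixed constant, the two factors of $\Delta$ combine to give the claimed bound; note that the splitting above in fact makes each piece of the integrand $\mathcal{O}(\Delta)$, so the estimate is comfortably stronger than the $\mathcal{O}(\Delta^{2})$ required here.
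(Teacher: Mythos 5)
Your proof is correct and follows essentially the same route as the paper: both arguments rest on the affine splitting $f_s(q_s)-f_n(q^n)=A_s(q_s-q^n)+\bigl(f_s q(t_n)-f(q^n)\bigr)$, the It\^o integral form of the increment $q_s-q^n$, the It\^o isometry for the martingale part, and the supremum bounds of Assumption A\ref{assump:regularity}; the only difference is one of ordering, since you apply Cauchy--Schwarz to the outer time integral before expanding, whereas the paper expands first and estimates each iterated integral separately (thereby recording that those pieces are actually $\mathcal{O}(\Delta^3)$ and $\mathcal{O}(\Delta^4)$). One caveat: your closing remark that the estimate is ``comfortably stronger'' than $\mathcal{O}(\Delta^2)$ is not justified, because the velocity-change piece $f_s q(t_n)-f(q^n)$ is only assumed \emph{bounded} in mean square, not $\mathcal{O}(\Delta)$, so its contribution is genuinely $\Delta^2\,E\bigl(\sup_k\Vert f_k q(t_n)-f(q^n)\Vert_H^2\bigr)$ and the overall rate cannot be improved beyond $\mathcal{O}(\Delta^2)$ under Assumption A\ref{assump:regularity}.
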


\begin{proof}Using (\ref{eq:ito_spde_integral_form}), we have 
        \[
        f\left(q_{s}\right)=f_{s}q(t_{n})+\int_{t_{n}}^{s}A_{s}f_{r}\left(q_{r}\right)dr+\sum_{i=1}^{m}\int_{t_{n}}^{s}A_{s}g^{i}\left(q_{r}\right)dW_{r}^{i}+\frac{1}{2}\sum_{i=1}^{m}\int_{t_{n}}^{s}A_{s}g^{i}g^{i}\left(q_{r}\right)dr
        \]
        where $A_{s}$ is the linear part of $f_{s}.$ Hence 
        \[
        \begin{aligned}E\left(\left\Vert \int_{t_{n}}^{t_{n+1}}\left(f_{s}\left(q_{s}\right)-f\left(q^{n}\right)\right)ds\right\Vert _{H}^{2}\right) & =E\left(\left\Vert \int_{t_{n}}^{t_{n+1}}f_{s}q(t_{n})-f(q^{n})ds+\int_{t_{n}}^{t_{n+1}}\int_{t_{n}}^{s}A_{s}f_{r}\left(q_{r}\right)drds\right.\right.\\
        +\sum_{i=1}^{m}\int_{t_{n}}^{t_{n+1}} & \int_{t_{n}}^{s}A_{s}g^{i}\left(q_{r}\right)dW_{r}^{i}ds\left.\left.+\frac{1}{2}\sum_{i=1}^{m}\int_{t_{n}}^{t_{n+1}}\int_{t_{n}}^{s}A_{s}g^{i}g^{i}\left(q_{r}\right)drds\right\Vert _{H}^{2}\right).
        \end{aligned}
        \]
        Using the Cauchy--Schwarz inequality we have
        \[
        \begin{aligned}E\left(\left\Vert \int_{t_{n}}^{t_{n+1}}f_{s}q\left(t_{n}\right)-f\left(q^{n}\right)ds\right\Vert _{H}^{2}\right) & \leq\Delta\int_{t_{n}}^{t_{n+1}}E\left(\left\Vert f_{s}q\left(t_{n}\right)-f\left(q^{n}\right)\right\Vert _{H}^{2}\right)ds\\
        & \leq E\left(\sup_{k}\left\Vert f_{k}q\left(t_{n}\right)-f\left(q^{n}\right)\right\Vert _{H}^{2}\right)\Delta^{2}
        \end{aligned}
        \]
        \[
        \begin{aligned}E\left(\left\Vert \int_{t_{n}}^{t_{n+1}}\int_{t_{n}}^{s}A_{s}f_{r}\left(q_{r}\right)drds\right\Vert _{H}^{2}\right) & \leq\Delta\int_{t_{n}}^{t_{n+1}}\left(s-t_{n}\right)\int_{t_{n}}^{s}E\left(\left\Vert A_{s}f_{r}\left(q_{r}\right)\right\Vert _{H}^{2}\right)drds\\
        & \leq E\left(\sup_{s}\sup_{r}\left\Vert A_{s}f_{r}\left(q_{r}\right)\right\Vert _{H}^{2}\right)\Delta\int_{t_{n}}^{t_{n+1}}\left(s-t_{n}\right)^{2}ds\\
        & \leq E\left(\sup_{s}\sup_{r}\left\Vert A_{s}f_{r}\left(q_{r}\right)\right\Vert _{H}^{2}\right)\frac{1}{3}\Delta^{4}
        \end{aligned}
        \]
        \begin{align*}
        E\left(\left\Vert \frac{1}{2}\sum_{i=1}^{m}\int_{t_{n}}^{t_{n+1}}\int_{t_{n}}^{s}A_{s}g^{i}g^{i}\left(q_{r}\right)drds\right\Vert _{H}^{2}\right) & \leq\frac{m}{4}\sum_{i=1}^{m}\Delta\int_{t_{n}}^{t_{n+1}}(s-t_{n})\int_{t_{n}}^{s}E\left(\left\Vert A_{s}g^{i}g^{i}\left(q_{r}\right)\right\Vert _{H}^{2}\right)drds\\
        & \leq\frac{m^{2}}{12}E\left(\max_{i}\sup_{s}\sup_{r}\left\Vert A_{s}g^{i}g^{i}\left(q_{r}\right)\right\Vert _{H}^{2}\right)\Delta^{4}.
        \end{align*}
        Using the Cauchy--Schwarz inequality and It\^o isometry we have
        \begin{align*}
        E\left(\left\Vert \sum_{i=1}^{m}\int_{t_{n}}^{t_{n+1}}\int_{t_{n}}^{s}A_{s}g^{i}\left(q_{r}\right)dW_{r}^{i}ds\right\Vert _{H}^{2}\right) & \leq m\Delta\sum_{i=1}^{m}\int_{t_{n}}^{t_{n+1}}\int_{t_{n}}^{s}E\left(\left\Vert A_{s}g^{i}\left(q_{r}\right)\right\Vert _{H}^{2}\right)drds\\
        & \leq\frac{m^{2}}{2}E\left(\max_{i}\sup_{s}\sup_{r}\left\Vert A_{s}g^{i}\left(q_{r}\right)\right\Vert _{H}^{2}\right)\Delta^{3}.
        \end{align*}
        Collect the bounds together to obtain the result.
\end{proof}

\begin{lemma}
        \[
        E\left(\left\Vert \sum_{i=1}^{m}\int_{t_{n}}^{t_{n+1}}\left(g^{i}\left(q_{s}\right)-g^{i}\left(q^{n}\right)\right)dW_{s}^{i}\right\Vert _{H}^{2}\right)=\mathcal{O}\left(\Delta^{2}\right)
        \]
\end{lemma}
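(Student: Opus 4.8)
The plan is to exploit the \emph{linearity} of each $g^i$, which is the structural feature distinguishing this term from the drift term treated in the previous lemma. In the local truncation error the one-step operator is fed the exact solution, so $q^n = q(t_n)$. Since $g^i$ is linear I would first write $g^i(q_s) - g^i(q^n) = g^i(q_s - q^n)$ and then substitute the It\^o integral representation \eqref{eq:ito_spde_integral_form} of the increment $q_s - q(t_n)$ over $[t_n,s]$. Commuting the bounded linear operator $g^i$ through the integrals gives
\[
g^i(q_s) - g^i(q^n) = \int_{t_n}^s g^i f(q_r)\,dr + \sum_{j=1}^m \int_{t_n}^s g^i g^j(q_r)\,dW_r^j + \frac12\sum_{j=1}^m \int_{t_n}^s g^i g^j g^j(q_r)\,dr.
\]

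Next I would insert this expression into the outer stochastic integral $\sum_i\int_{t_n}^{t_{n+1}}(\cdots)\,dW_s^i$ and apply the It\^o isometry. Because the driving Brownian motions $W^i$ are independent, the cross terms with distinct outer indices vanish, leaving
\[
E\Big(\big\Vert \textstyle\sum_i\int_{t_n}^{t_{n+1}}(g^i(q_s)-g^i(q^n))\,dW_s^i\big\Vert_H^2\Big) = \sum_i \int_{t_n}^{t_{n+1}} E\big(\Vert g^i(q_s)-g^i(q^n)\Vert_H^2\big)\,ds.
\]
It then suffices to show $E(\Vert g^i(q_s)-g^i(q^n)\Vert_H^2) = \mathcal O(\Delta)$ uniformly for $s\in[t_n,t_{n+1}]$, since integrating an $\mathcal O(\Delta)$ bound over an interval of length $\Delta$ produces the desired $\mathcal O(\Delta^2)$.

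To obtain the $\mathcal O(\Delta)$ bound on the inner quantity I would treat its three contributions separately, exactly as in the preceding lemma. The two drift-type integrals $\int_{t_n}^s g^i f(q_r)\,dr$ and $\frac12\sum_j\int_{t_n}^s g^i g^j g^j(q_r)\,dr$ are handled by Cauchy--Schwarz, giving a bound of order $(s-t_n)^2 = \mathcal O(\Delta^2)$, controlled by the finiteness of $E(\sup_r\Vert g^i f_r(q_r)\Vert_H^2)$ and $E(\sup_r\Vert g^i g^j g^j(q_r)\Vert_H^2)$ in Assumption~\ref{assump:regularity}. The dominant contribution is the stochastic integral $\sum_j\int_{t_n}^s g^i g^j(q_r)\,dW_r^j$: a second application of the It\^o isometry bounds its squared expectation by $\sum_j\int_{t_n}^s E(\Vert g^i g^j(q_r)\Vert_H^2)\,dr \leq \mathrm{const}\cdot(s-t_n) = \mathcal O(\Delta)$, using the bound on $E(\sup_r\Vert g^i g^j(q_r)\Vert_H^2)$. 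Summing the three pieces yields $E(\Vert g^i(q_s)-g^i(q^n)\Vert_H^2) = \mathcal O(\Delta)$, and the outer integration closes the estimate.

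The calculation is essentially routine once the linearity step is in place; the only point demanding care is the bookkeeping of the iterated (double) stochastic integral and the verification that the independence of the $W^i$ genuinely eliminates the outer cross terms, so that a clean single sum of It\^o isometries applies. This is the step I would expect to be the main, if modest, obstacle.
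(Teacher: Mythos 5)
Your proof is correct and follows essentially the same route as the paper: both exploit the linearity of $g^i$ to push it through the It\^o integral representation of $q_s - q(t_n)$, decompose the result into the same three iterated-integral contributions, and control them with the same bounds from Assumption~\ref{assump:regularity}. The only difference is cosmetic bookkeeping --- you apply the outer It\^o isometry first and then bound the inner increment by $\mathcal{O}(\Delta)$, whereas the paper expands fully into iterated integrals and bounds each one directly --- and both yield the dominant $\mathcal{O}(\Delta^2)$ term from the double stochastic integral.
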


\begin{proof}
        Using (\ref{eq:ito_spde_integral_form}), we obtain
        \[
        g^{i}\left(q_{s}\right)=g^{i}q(t_{n})+\int_{t_{n}}^{s}g^{i}f_{r}\left(q_{r}\right)dr+\sum_{j=1}^{m}\int_{t_{n}}^{s}g^{i}g^{j}\left(q_{r}\right)dW_{r}^{j}+\frac{1}{2}\sum_{j=1}^{m}\int_{t_{n}}^{s}g^{i}g^{j}g^{j}\left(q_{r}\right)dr.
        \]
        
        Hence
        \[
        \begin{aligned}E\left(\left\Vert \sum_{i=1}^{m}\int_{t_{n}}^{t_{n+1}}\left(g^{i}\left(q_{s}\right)-g^{i}\left(q^{n}\right)\right)dW_{s}^{i}\right\Vert _{H}^{2}\right) & =E\left(\left\Vert \sum_{i=1}^{m}\int_{t_{n}}^{t_{n+1}}\int_{t_{n}}^{s}g^{i}f_{r}\left(q_{r}\right)drdW_{s}^{i}\right.\right.\\
        & +\sum_{i=1}^{m}\sum_{j=1}^{m}\int_{t_{n}}^{t_{n+1}}\int_{t_{n}}^{s}g^{i}g^{j}\left(q_{r}\right)dW_{r}^{j}dW_{s}^{i}\\
        & \left.\left.+\frac{1}{2}\sum_{i=1}^{m}\sum_{j=1}^{m}\int_{t_{n}}^{t_{n+1}}\int_{t_{n}}^{s}g^{i}g^{j}g^{j}\left(q_{r}\right)drdW_{s}^{i}\right\Vert _{H}^{2}\right).
        \end{aligned}
        \]
        We bound each term individually and obtain the following
        
        \begin{align*}
        E\left(\left\Vert \sum_{i=1}^{m}\int_{t_{n}}^{t_{n+1}}\int_{t_{n}}^{s}g^{i}f_{r}\left(q_{r}\right)drdW_{s}^{i}\right\Vert _{H}^{2}\right) & \leq m\sum_{i=1}^{m}\int_{t_{n}}^{t_{n+1}}E\left(\left\Vert \int_{t_{n}}^{s}g^{i}f_{r}\left(q_{r}\right)dr\right\Vert _{H}^{2}\right)ds\\
        & \leq\frac{m^{2}}{3}E\left(\max_{i}\sup_{r}\left\Vert g^{i}f_{r}\left(q_{r}\right)\right\Vert _{H}^{2}\right)\Delta^{3}
        \end{align*}
        \begin{align*}
        E\left(\left\Vert \sum_{i=1}^{m}\sum_{j=1}^{m}\int_{t_{n}}^{t_{n+1}}\int_{t_{n}}^{s}g^{i}g^{j}\left(q_{r}\right)dW_{r}^{j}dW_{s}^{i}\right\Vert _{H}^{2}\right) & \leq\frac{m^{4}}{2}E\left(\max_{i,j}\sup_{r}\left\Vert g^{i}g^{j}\left(q_{r}\right)\right\Vert _{H}^{2}\right)\Delta^{2}
        \end{align*}
        \begin{align*}
        E\left(\left\Vert \frac{1}{2}\sum_{i=1}^{m}\sum_{j=1}^{m}\int_{t_{n}}^{t_{n+1}}\int_{t_{n}}^{s}g^{i}g^{j}g^{j}\left(q_{r}\right)drdW_{s}^{i}\right\Vert _{H}^{2}\right) & \leq\frac{m^{4}}{12}E\left(\max_{i,j}\sup_{r}\left\Vert g^{i}g^{j}g^{j}\left(q_{r}\right)\right\Vert _{H}^{2}\right)\Delta^{3}.
        \end{align*}
        Collect the bounds together to obtain the result.
\end{proof}

\begin{lemma}
       \[
       E\left(\left\Vert \frac{1}{2}\sum_{i=1}^{m}\int_{t_{n}}^{t_{n+1}}g^{i}g^{i}\left(q_{s}\right)ds-\frac{1}{2}\sum_{i=1}^{m}\sum_{j=1}^{m}g^{i}g^{j}\left(q^{n}\right)\Delta W^{i}\Delta W^{j}\right\Vert _{H}^{2}\right)=\mathcal{O}\left(\Delta^{2}\right)
       \]
\end{lemma}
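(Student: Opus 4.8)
The plan is to exploit the index structure of the double increment $\Delta W^i\Delta W^j$ by separating its diagonal ($i=j$) and off-diagonal ($i\neq j$) parts, matching the diagonal part against the Itô correction integral and showing the off-diagonal part has the right variance. First I would split the discretised sum as
\[
\frac12\sum_{i,j=1}^{m} g^i g^j(q^n)\Delta W^i\Delta W^j
= \frac12\sum_{i=1}^{m} g^i g^i(q^n)(\Delta W^i)^2
+ \frac12\sum_{i\neq j} g^i g^j(q^n)\Delta W^i\Delta W^j,
\]
pairing the diagonal piece with the single-index integral $\tfrac12\sum_i\int_{t_n}^{t_{n+1}} g^i g^i(q_s)\,ds$. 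Since the scheme is fed the exact solution, $q^n=q(t_n)$ is $\mathcal{F}_{t_n}$-measurable and each $\Delta W^i$ is independent of $\mathcal{F}_{t_n}$; this independence is what drives every estimate below.

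For the diagonal error I would insert $g^i g^i(q^n)\Delta$ and write, for each $i$,
\[
\int_{t_n}^{t_{n+1}} g^i g^i(q_s)\,ds - g^i g^i(q^n)(\Delta W^i)^2
= \int_{t_n}^{t_{n+1}}\big(g^i g^i(q_s) - g^i g^i(q^n)\big)\,ds
+ g^i g^i(q^n)\big(\Delta - (\Delta W^i)^2\big).
\]
The first term is treated exactly as in the preceding lemmas: because $g^i$ is linear, the integrand equals $g^i g^i(q_s - q(t_n))$, whose mean-square $H$-norm is $\mathcal{O}(s-t_n)$ (the leading contribution of $q_s-q(t_n)$ in \eqref{eq:ito_spde_integral_form} being its martingale part), so Cauchy--Schwarz over an interval of length $\Delta$ gives $\mathcal{O}(\Delta^3)$. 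The second term is the essential one: conditioning on $\mathcal{F}_{t_n}$ and using the chi-square moments $E[(\Delta W^i)^2-\Delta]=0$ and $E[((\Delta W^i)^2-\Delta)^2]=2\Delta^2$, its mean-square norm equals $2\Delta^2\,E\|g^i g^i(q^n)\|_H^2=\mathcal{O}(\Delta^2)$ by Assumption~A\ref{assump:regularity}.

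For the off-diagonal error I would expand $E\|\tfrac12\sum_{i\neq j} g^i g^j(q^n)\Delta W^i\Delta W^j\|_H^2$ into a double sum over pairs $(i,j)$ and $(k,l)$ with $i\neq j$, $k\neq l$, and condition on $\mathcal{F}_{t_n}$. By independence of the Brownian motions the fourth moment $E[\Delta W^i\Delta W^j\Delta W^k\Delta W^l]$ vanishes unless $\{i,j\}=\{k,l\}$, in which case it equals $\Delta^2$; hence only $(k,l)\in\{(i,j),(j,i)\}$ survive and the off-diagonal error collapses to $\tfrac14\sum_{i\neq j}\big(E\|g^i g^j(q^n)\|_H^2 + E(g^i g^j(q^n),g^j g^i(q^n))_H\big)\Delta^2=\mathcal{O}(\Delta^2)$, again using the boundedness in Assumption~A\ref{assump:regularity}. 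Adding the diagonal $\mathcal{O}(\Delta^3)+\mathcal{O}(\Delta^2)$ and off-diagonal $\mathcal{O}(\Delta^2)$ contributions yields the stated bound.

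The main obstacle is precisely the second diagonal term together with the off-diagonal sum: this is where the scheme's quadratic noise term $\tfrac12\sum_{i,j} g^i g^j\Delta W^i\Delta W^j$ is required to reproduce the Itô--Stratonovich drift correction $\tfrac12\sum_i g^i g^i\Delta$, but only in conditional mean, so one must separately control the fluctuations about that mean in mean square. The two facts that make this succeed are the variance identity $E[((\Delta W^i)^2-\Delta)^2]=2\Delta^2$ and the vanishing of every mixed fourth Brownian moment carrying an unmatched index, both consequences of the independence of the $W^i$ and the $\mathcal{F}_{t_n}$-measurability of $q^n$.
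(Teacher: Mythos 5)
Your proof is correct, but it takes a genuinely different and more refined route than the paper's. The paper does not attempt any cancellation between the It\^o-correction integral and the double Wiener increment: it simply replaces $g^ig^i(q_s)$ by $g^ig^i(q(t_n))$ plus a remainder absorbed into \emph{H.O.T.}, and then bounds the two remaining terms separately by the triangle inequality, using $E\bigl[(\Delta W^i)^2(\Delta W^j)^2\bigr]=\mathcal{O}(\Delta^2)$ so that each piece is individually $\mathcal{O}(\Delta^2)$ in mean square --- which already gives the claimed order. Your diagonal/off-diagonal split, the identity $E\bigl[((\Delta W^i)^2-\Delta)^2\bigr]=2\Delta^2$, and the vanishing of unmatched fourth Brownian moments are not needed for the stated $\mathcal{O}(\Delta^2)$ bound, but they buy something real: they exhibit exactly where the scheme's quadratic noise term reproduces the Stratonovich drift correction in conditional mean (this is the content of the compatibility condition \eqref{eq:stochastic_compatibility}, which the paper verifies separately at the end of the consistency argument), and they would be the right starting point if one wanted a sharper rate than $\gamma=2$. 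One caveat: your estimate of the first diagonal term requires mean-square bounds on triple compositions such as $g^ig^ig^j(q_r)$ and on $g^ig^if_r(q_r)$, which are of the same character as, but not literally among, the quantities listed in Assumption \ref{assump:regularity} (the paper lists $g^ig^jg^j$ and $g^if_r$, and hides the analogous terms of its own expansion inside the assumption that the \emph{H.O.T.} are bounded); you should state this extension of the assumption explicitly rather than rely on it silently.
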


\begin{proof} We have
        \begin{align*}
        & E\left(\left\Vert \frac{1}{2}\sum_{i=1}^{m}\int_{t_{n}}^{t_{n+1}}g^{i}g^{i}\left(q_{s}\right)ds-\frac{1}{2}\sum_{i=1}^{m}\sum_{j=1}^{m}g^{i}g^{j}\left(q^{n}\right)\Delta W^{i}\Delta W^{j}\right\Vert _{H}^{2}\right)\\
        \leq & \frac{m}{4}\sum_{i=1}^{m}E\left(\left\Vert \int_{t_{n}}^{t_{n+1}}g^{i}g^{i}\left(q_{s}\right)ds-\sum_{j=1}^{m}g^{i}g^{j}\left(q^{n}\right)\Delta W^{i}\Delta W^{j}\right\Vert _{H}^{2}\right)\\
        \leq & \frac{m}{4}\sum_{i=1}^{m}E\left(\left\Vert \int_{t_{n}}^{t_{n+1}}g^{i}g^{i}\left(q(t_{n})\right)ds-\sum_{j=1}^{m}g^{i}g^{j}\left(q^{n}\right)\Delta 
        W^{i}\Delta W^{j}+\int_{t_{n}}^{t_{n+1}}\int_{t_{n}}^{s}H.O.T.\,ds\right\Vert _{H}^{2}\right)\\
        \leq & \frac{m}{4}\Delta^{2}\sum_{i=1}^{m}4\left[E\left(\left\Vert g^{i}g^{i}\left(q(t_{n})\right)\right\Vert _{H}^{2}\right)+m\sum_{j=1}^{m}E\left(\left\Vert g^{i}g^{j}\left(q^{n}\right)\right\Vert _{H}^{2}\right)\right]\\
        & \quad+\frac{m}{4}\sum_{i=1}^{m}2E\left(\left\Vert \int_{t_{n}}^{t_{n+1}}\int_{t_{n}}^{s} H.O.T.\,ds\right\Vert _{H}^{2}\right)
        \end{align*}
\end{proof}

By combining the estimates together we obtain the consistency condition
(\ref{eq:consistency}). For (\ref{eq:deterministic_lte_consistency}),
the deterministic part is simply the deterministic SSPRK3 scheme \eqref{eq:ssprk3},
see \citet{gottlieb2005high}. The compatibility condition \eqref{eq:stochastic_compatibility}
for the stochastic part follows from the leading order term expression
\eqref{eq:ssprk3_leading_order} of the stochastic SSPRK3 scheme
and the fact that
\[
E\left(\left.\sum_{j=1}^{m}g^{i}g^{j}\left(q^{n}\right)\Delta W^{i}\Delta W^{j}\right|\mathcal{F}_{t_{n}}\right)=E\left(\left.g^{i}g^{i}\left(q^{n}\right)\Delta W^{i}\Delta W^{i}\right|\mathcal{F}_{t_{n}}\right)=g^{i}g^{i}\left(q^{n}\right)\Delta
\]
as $W^{i}$ and $W^{j}$ are independent for $i\neq j.$

\section{Calibration of the correlation eigenvectors \label{sec:calibration}} %

\subsection{Methodology} \label{subsec:Methodology}

In the stochastic geophysical fluid dynamics framework, SPDEs are derived from the
starting assumption that (averaged) fluid particles satisfy the
equation
\begin{equation}
  \label{eq:barx}
\dd \vecx{x}(a,t) = \bar{\vecu}(\vecx{x}(a,t),t)d t + \sum_{i=1}^m
\vecx{\xi}_i(\vecx{x}(a,t)) \circ d W_t^{i},
\end{equation}
where $a$ is the Lagrangian label. The assumption \eqref{eq:barx}
leads, for example, to the Eulerian stochastic QG equation,
\begin{equation} \label{eq:sqg}
  \dd \bar{q}(\vecx{x},t) + \left(\bar{\vecu}(\vecx{x},t) d t + \sum_i\vecx{\xi}_i(\vecx{x},t)\circ
  d W_t^{i}\right)\cdot\nabla q(\vecx{x},t) = 0.
\end{equation}
Equation \eqref{eq:sqg} is what we actually solve. Equation \eqref{eq:barx} is not
explicitly solved. However, \eqref{eq:barx} describes the motion of fluid particles under
the SPDE solution, and is used to derive the SPDE. 

The goal of the stochastic PDE is to model the coarse-grained
components of a deterministic PDE that exhibits rapidly fluctuating
components. We can estimate the components $\vecx{\xi}_i$ in the stochastic
term by comparing \eqref{eq:barx} with the deterministic equation for
unapproximated trajectories,
\begin{equation}
  \label{eq:dx}
\diff {\vecx{x}}(a,t) = {\vecu}(\vecx{x}(a,t),t)d t, \quad \vecx{x}(a,0)=\vecx{x}_0^{a}
\end{equation}
moving with the unapproximated velocity $\vecu$ and starting from $\vecx{x}_0^{a}$. We assume that the
velocity can be written as $\vecu=\bar{\vecu}+\vecx{\zeta}$, where $\bar{\vecu}$ is a
spatially-filtered velocity that can be represented accurately in a
coarse-grid simulation. By comparing \eqref{eq:barx} and \eqref{eq:dx},
\begin{align*}
  \dd \vecx{x}(a,t) & = \bar{\vecu}(\vecx{x}(a,t),t)d t + \sum_{i=1}^m
  \vecx{\xi}_i(\vecx{x}(a,t)) \circ d W_t^{i} \\
  & \approx \vecu(\vecx{x}(a,t),t) dt,
\end{align*}
where we determine $\dd \vecx{x}(a,t)$ at the coarse resolution from $\vecu(\vecx{x}(a,t),t)\,dt$ at the fine resolution, 
we see that we are seeking an approximation such that
\begin{equation} \label{eq:lagrangian_approximation_model}
  \sum_i\vecx{\xi}_i(\vecx{x}(a,t),t)\circ   d W_t^{i}
  \approx
  \vecu(\vecx{x}(a,t),t)d t - \bar{\vecu}(\vecx{x}(a,t),t) d t.
\end{equation}

Our methodology is as follows. We spin up a fine grid simulation from
$t=-T_{spin}$ to $t=0$ (till some statistical equilibrium is reached),
then we record velocity time series from $t=0$ to $t=M\Delta t$, where
$\Delta t=k\delta t$ and $\delta t$ is the fine grid timestep. We
define $\vecx{X}_{ij}^0$ as coarse grid points.

For each $m=0,1,\ldots,M-1$, we
\begin{enumerate}
\item Solve $\dot{\vecx{X}}_{ij}(t)=\vecu(\vecx{X}_{ij}(t),t)$ with initial condition
  $\vecx{X}_{ij}(m\Delta t)=\vecx{X}^0_{ij}$, where $\vecu(x,t)$ is the solution from
  the fine grid simulation.
\item Compute $\bar{\vecu}_{ij}(t)$ by spatially averaging $u$ over the coarse
  grid box size around gridpoint $ij$. 
\item Compute $\bar{\vecx{X}}_{ij}$ by solving
  $\dot{\bar{\vecx{X}}}_{ij}(t) = \bar{\vecu}_{ij}(t)$ with the same initial condition.
\item Compute the difference $\Delta \vecx{X}_{ij}^m = \bar{\vecx{X}}_{ij}((m+1)\Delta t)
  - \vecx{X}_{ij}((m+1)\Delta t)$, which measures the
  error between the fine and coarse trajectory.
\end{enumerate}

Having obtained $\Delta \vecx{X}_{ij}^m$, we would like to extract the basis
for the noise. This amounts to a Gaussian model of the form
\[
\frac{\Delta \vecx{X}_{ij}^m}{\sqrt{\Delta t}} = \bar{\Delta \vecx{X}_{ij}} + \sum_{k=1}^N \vecx{\xi}_{ij}^k\Delta W^k_m,
\]
where $\Delta W^k_m$ are i.i.d. standard Gaussian random variables. 

We estimate $\vecx{\xi}$ by minimising
\[
\mathbb{E}\left[\left\|\sum_{ijm}\frac{\Delta \vecx{X}_{ij}^m}{\sqrt{\delta t}} - \bar{\Delta \vecx{X}_{ij}} - \sum_{k=1}^N \vecx{\xi}_{ij}^k\Delta W^k_m\right\|^2\right],
\]
where the choice of $N$ can be informed by using EOFs.

{
\begin{remark}[EOF]
	Empirical orthogonal functions (EOFs) can be thought of as principal components that correspond to the spatial correlations of a field, see \cite{Hannachi2004primer,HaJoSt2007}. In our case, EOFs are the eigenvectors of the velocity-velocity spatial covariance tensor. Writing the data time series $\Delta \vecx{X}_{ij}^m$, $m=0,\dots,M-1$ as a matrix $\tilde{F}$ whose entries are two dimensional vectors, and whose rows (row index $m$) correspond to serialised $\Delta \vecx{X}_{ij}^m$. Let $F:=\text{detrend}(\tilde{F})$ where the $\text{detrend}$ function removes the column mean from each entry. We then estimate the spatial covariance tensor by computing $R:=\frac{1}{M-1}F^TF$. We take the EOFs to be the eigenvectors of $R$, ranked in descending order according to the eigenvalues.
\end{remark}}


\subsection{Numerical experiments \label{sec:numerical_experiments}}

\subsubsection{Fine grid PDE solution and its coarse graining \label{subsec:pde_results}}

We solve the PDE system \eqref{eq:2DEulerVorticity}--\eqref{eq:boundary_condition_streamfunction} on a fine grid of size $512\times512$. Our choices for $\alpha$ and $\beta$ in the forcing term \eqref{eq:forcing} are $0.1$ and $8$, respectively. We apply a coarse-graining procedure to the fine grid solution to obtain its coarse-grained version on a coarse grid of size $64\times64$, which we call the \emph{truth}. The coarse-graining procedure comprises of two steps. First we apply spatial averaging, using for example the Helmholtz operator, to the fine grid streamfunction. Then this spatially-averaged streamfunction is directly projected onto the coarse grid. The coarse-grained versions of the vorticity and velocity fields are obtained from the coarse-grained streamfunction.

We choose the following initial configuration for the vorticity, denoted by $\omega_{\text{spin}}$,
\begin{equation}
\begin{aligned}
\omega_{\text{spin}} & =\sin(8\pi x)\sin(8\pi y)+0.4\cos(6\pi x)\cos(6\pi y)\\
& \qquad+0.3\cos(10\pi x)\cos(4\pi y) +0.02\sin(2\pi y)+0.02\sin(2\pi x) \label{eq:omega_spin}
\end{aligned}
\end{equation}
from which we spin--up the system until an energy equilibrium state seems to have been reached.  This equilibrium state, denoted by $\omega_{\text{initial}}$, is then chosen as the initial condition for our numerical experiments.

For our numerical experiments, time is measured in terms of \emph{large eddy turnover time}, henceforth abbreviated to \emph{ett}. It describes the time scale of the large scale flow features and is defined by $\tau_{L}=L/U$, where $L$ is the length scale of the largest eddy and $U$ is the mean velocity. {Using the numerical PDE solution we estimate that, in our setup $1$ ett is equivalent to $2.5$ time units corresponding to the deterministic system.}

Figure \ref{fig:omega_spin} shows a plot of $\omega_{\text{spin}}.$ As indicated in the legend of the plot, the red and blue colours represent the different signs of vorticity. The colour shades indicate the different levels of magnitude of the function. Figure \ref{fig:energy_series} shows a plot of the kinetic energy time series computed for the numerical PDE solution for $186$  ett, starting with the configuration $\omega_{\text{spin}}$. We see that the energy reaches an approximate equilibrium point after $40$ ett, or equivalently $100$ times units, at which point we take the numerical solution to be the initial condition $\omega_{\text{initial}}$. This chosen equilibrium point is marked as a red dot in Figure \ref{fig:energy_series}. 

Figure \ref{fig:pde_solution_t0} shows plots of vorticity $\omega$ (left column), velocity ${\vecu}$ (middle column) and streamfunction $\psi$ (right column) corresponding to the numerical PDE solution at time $t=0$. The top row shows the plots which correspond to the fine grid solution, and the bottom row shows the plots which correspond to the truth, i.e. the coarse-grained fine grid solution. Here the fine grid vorticity is exactly $\omega_{\text{initial}}$. For the vorticity scalar field, the red and blue colours represent opposing signs of the function, and the colour shades indicate the different levels of magnitude of the function. For the velocity field, scaled arrow fields are plotted to indicate the direction and magnitude of the velocity vectors at each spatial location, and the colours highlight the magnitude of the velocity vectors. For the streamfunction scalar field, the colours indicate the contour lines of the function, along which the velocity vectors travel. Due to coarse-graining, only the large scale features remain in the plots for the truth. This is most apparent for vorticity because it is the least smooth of the three functions. The loss of small scale details is also noticeable for the velocity field. For the streamfunction, one can see the coarse-grained streamfunction have slightly smoother contours when compared with the fine grid streamfunction. Additionally, the coarse-grained plots show weaker magnitudes for the vorticity and velocity fields when compared with the fine grid solutions.

Figure \ref{fig:pde_solution_t146} shows plots of vorticity, velocity and streamfunction corresponding to the numerical PDE solution at time $t=146$ ett with similar features to Figure \ref{fig:pde_solution_t0}.

\begin{figure}[htb]
    \begin{minipage}{0.45\textwidth}%
            \includegraphics[width=1\textwidth, right]{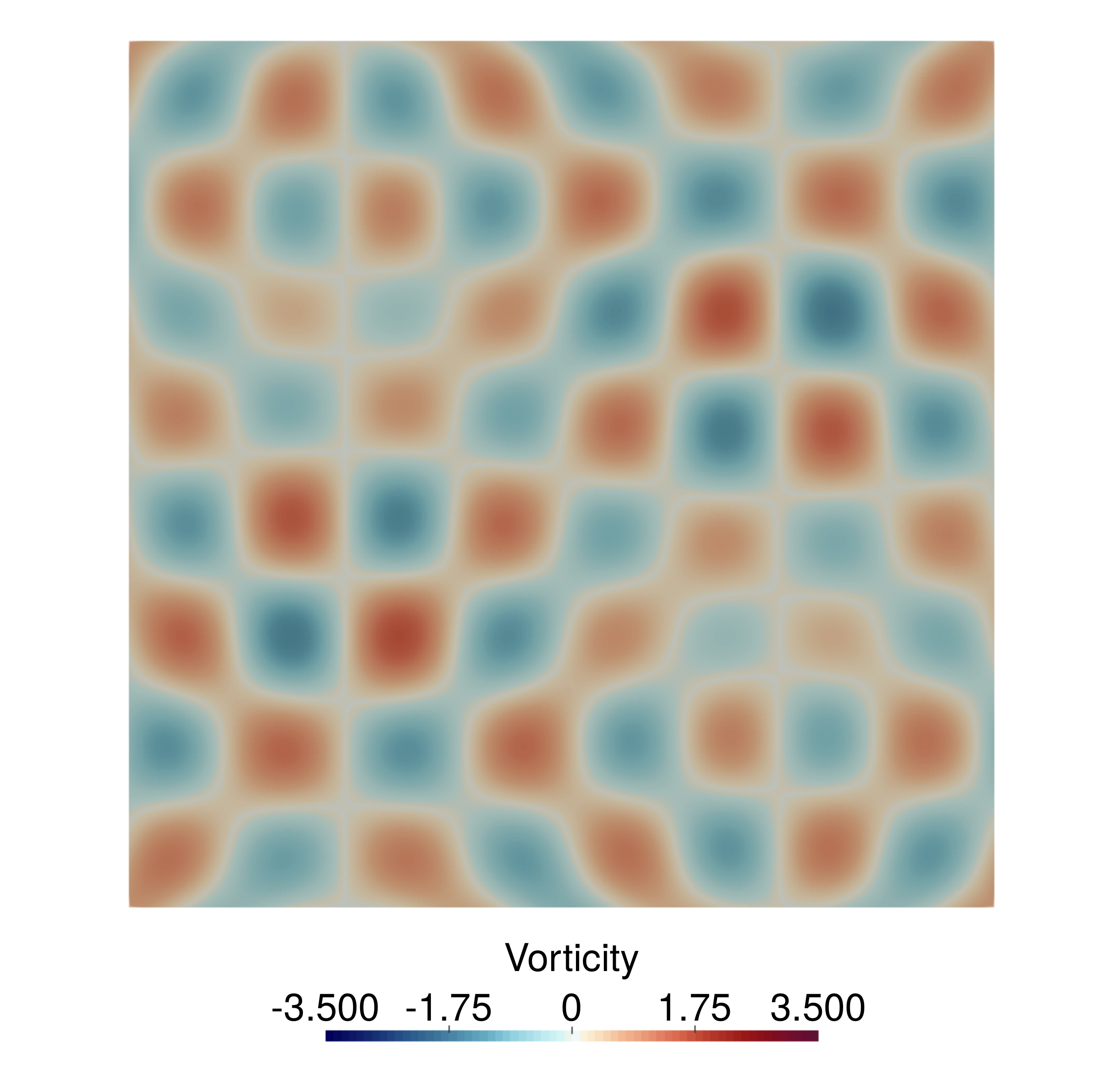}%
            \caption{\label{fig:omega_spin} This figure shows a plot of our chosen initial configuration $\omega_{\text{spin}}$ for the vorticity, given by Equation \eqref{eq:omega_spin}, from which we spin-up the PDE system until some energy equilibrium state. The red and blue colours represent opposing signs of the function. The colour shades indicate the different levels of magnitude of the function. See Section \ref{subsec:pde_results}.}
    \end{minipage}\hfill{\quad}%
    \begin{minipage}{0.45\textwidth}%
            \centering
                    \includegraphics[width=1\textwidth]{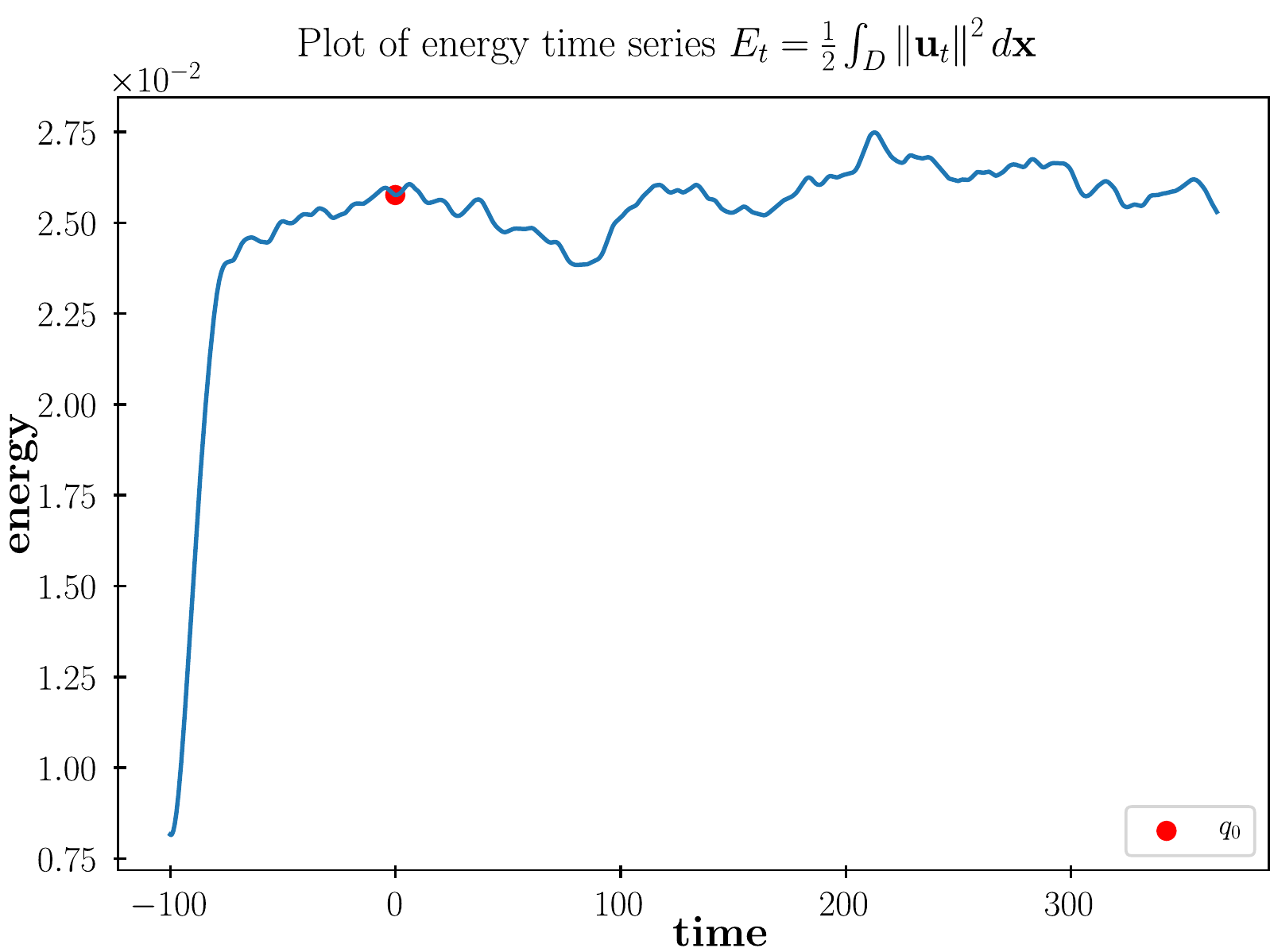}%
            \caption{\label{fig:energy_series} This figure shows a plot of the kinetic energy time series computed for the numerical PDE solution for $186$ large eddy turnover times (abbrev. ett), or equivalently $465$ time units, starting from the chosen initial configuration $\omega_{\text{spin}}$, see \eqref{eq:omega_spin}. The system reaches an approximate energy equilibrium state after $40$ ett, or $100$ time units, and the solution at this equilibrium point is set to be the initial condition $\omega_{\text{initial}}$ from which we start our numerical experiments. The plotted red dot marks the chosen energy equilibrium point. See Section \ref{subsec:pde_results}.}
    \end{minipage}
\end{figure}

\begin{figure}
        \begin{centering}
                \begin{minipage}[t]{1\textwidth}%
                        \begin{center}
                                \includegraphics[width=1\textwidth]{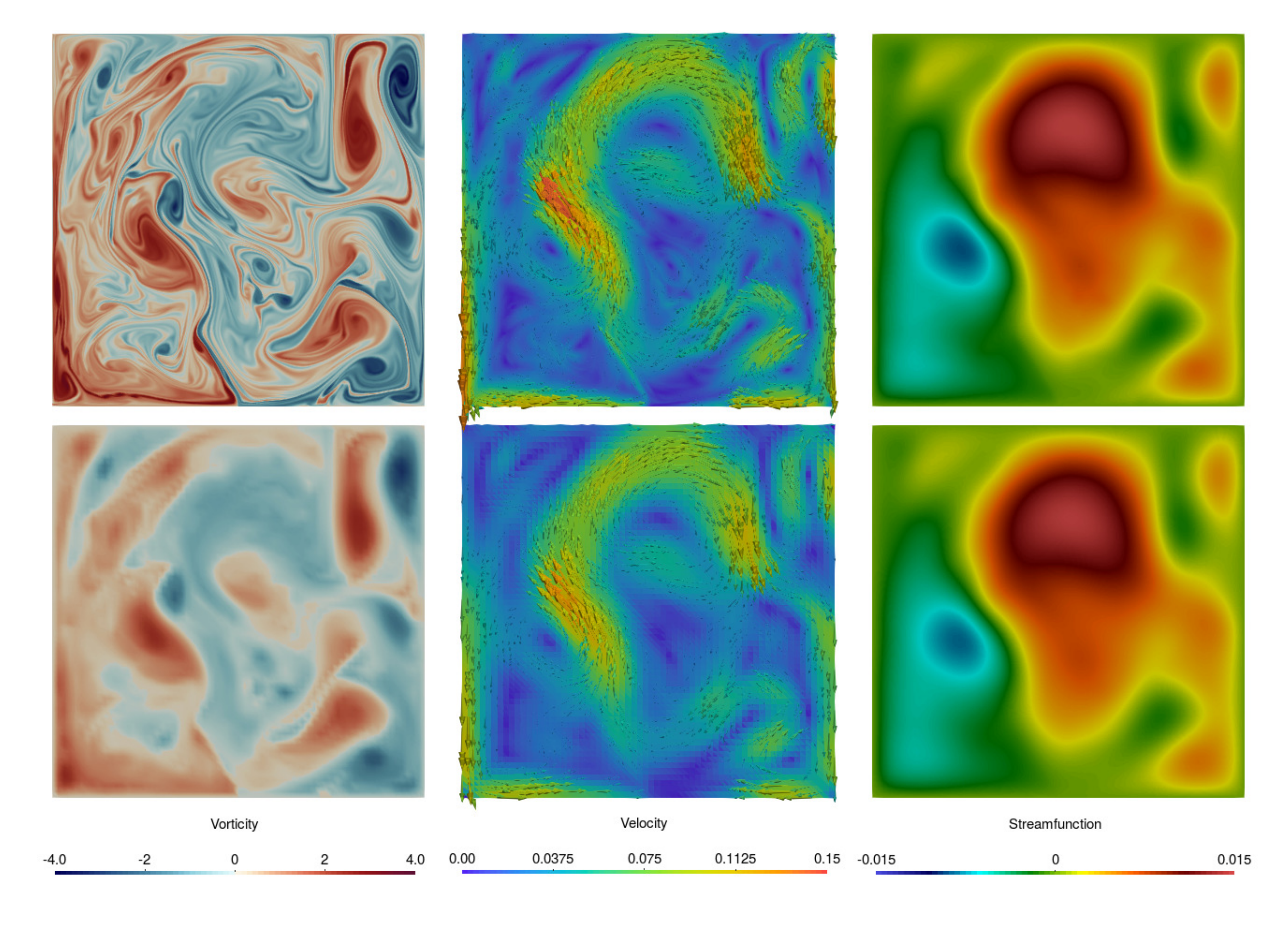}
                                \par\end{center}%
                                \centering{}\caption{\label{fig:pde_solution_t0} Plot of the numerical PDE solution at the initial time $t_{\text{initial}}$. This corresponds to the initial condition $\omega_{\text{initial}}$ (set after spin-up) from which we start our numerical experiments. The plots show vorticity (left column), velocity (middle column) and streamfunction (right column) for the truth (top row), which is defined on the fine grid, and its coarse-graining (bottom row), which is defined on the coarse grid. The coarse-graining is done via spatial averaging and projection of the fine grid streamfunction to the coarse grid. For the vorticity scalar field, the red and blue colours represent opposing signs of the function, and the colour shades indicate the different levels of magnitude of the function. For the velocity field, scaled arrow fields are plotted to indicate the direction and magnitude of the velocity vectors at each spatial location, and the colours highlight the magnitude of the velocity vectors. For the streamfunction scalar field, the colours indicate the contour lines of the function, along which the velocity vectors travel. Due to coarse-graining, only the large scale features remain. This is most apparent for the vorticity because it is the least smooth of the three functions. The loss of small scale details is also noticeable for the velocity field. For the streamfunction, one can see the coarse-grained streamfunction have slightly smoother contours when compared with the fine grid streamfunction. Additionally, the coarse-grained plots show weaker magnitudes for the vorticity and velocity fields when compared with the fine grid solutions. See Section \ref{subsec:pde_results}.}
                \end{minipage}
                \par\end{centering}
\end{figure}

\begin{figure}
        \begin{centering}
                \begin{minipage}[t]{1\textwidth}%
                        \begin{center}
                                \includegraphics[width=1\textwidth]{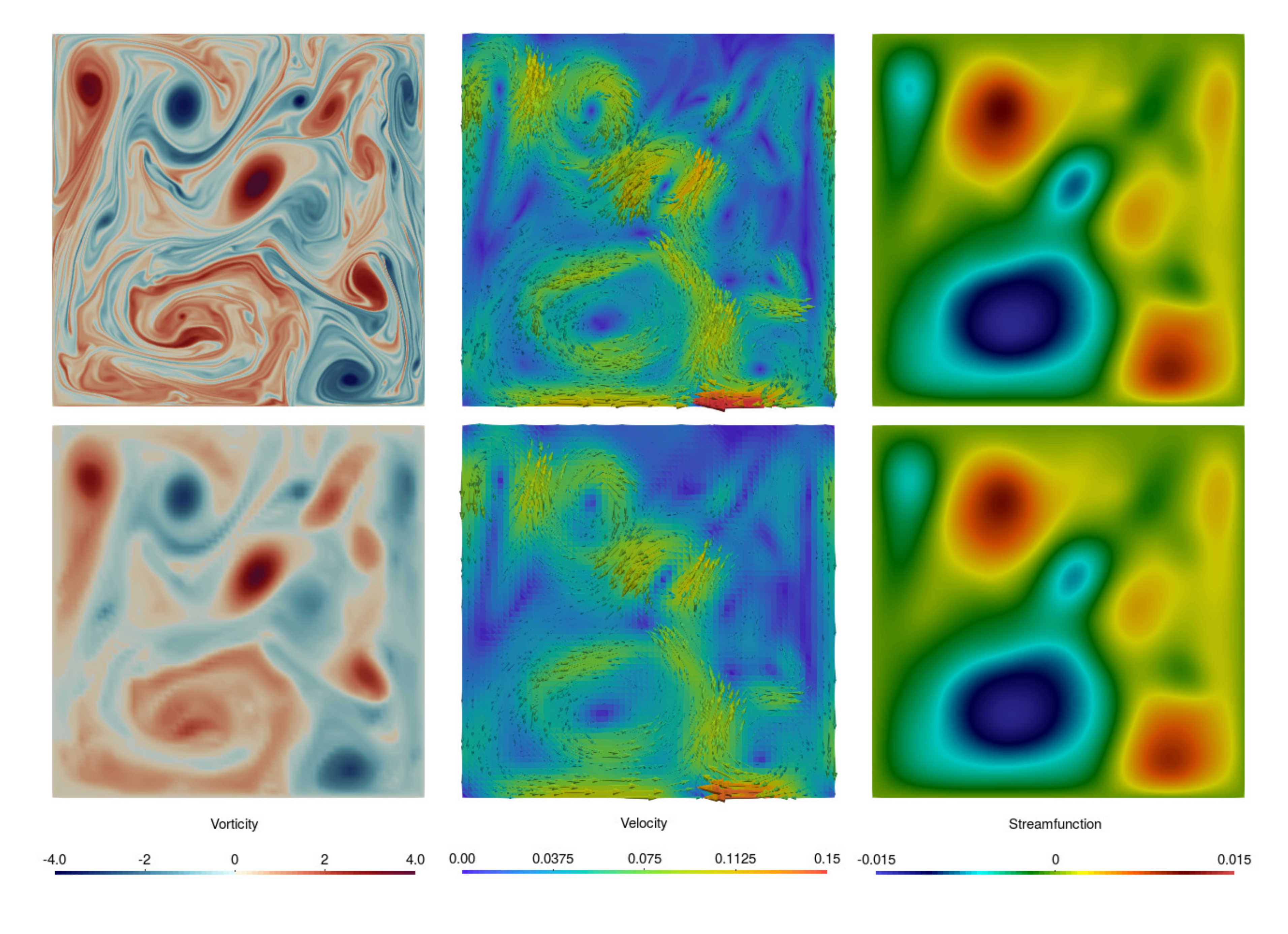}
                                \par\end{center}%
                        \centering{}\caption{\label{fig:pde_solution_t146} Plot of the numerical PDE solution at the final time of the PDE simulation, $t=t_{\text{initial}} + 146$ large eddy turnover times (abbrev. ett), i.e. the solution after $146$ ett starting from the initial vorticity $\omega_{\text{initial}}$. The plots show vorticity (left column), velocity (middle column) and streamfunction (right column) for the truth (top row), which is defined on the fine grid, and its coarse-graining (bottom row), which is defined on the coarse grid. The coarse-graining is done via spatial averaging and projection of the fine grid streamfunction to the coarse grid. For the vorticity scalar field, the red and blue colours represent opposing signs of the function, and the colour shades indicate the different levels of magnitude of the function. For the velocity field, scaled arrow fields are plotted to indicate the direction and magnitude of the velocity vectors at each spatial location, and the colours highlight the magnitude of the velocity vectors. For the streamfunction scalar field, the colours indicate the contour lines of the function, along which the velocity vectors travel. Due to coarse-graining, only the large scale features remain. This is most apparent for the vorticity because it is the least smooth of the three functions. The loss of small scale details is also noticeable for the velocity field. For the streamfunction, one can see the coarse-grained streamfunction have slightly smoother contours when compared with the fine grid streamfunction. Additionally, the coarse-grained plots show weaker magnitudes for the vorticity and velocity fields when compared with the fine grid solutions. See Section \ref{subsec:pde_results}.}
                \end{minipage}
                \par\end{centering}
\end{figure}

\subsubsection{Lagrangian trajectories and estimating the correlation eigenvectors \label{subsec:Lagrangian-trajectories-tests}}

Following the methodology described in Section \ref{subsec:Methodology}, we compute each $\Delta \vecx{X}_{ij}^{m},$ $m=0,1,\dots,M-1,$ with the time length $\Delta t$ equals to the coarse resolution time step, i.e. given the spatial resolution, $\Delta t$ satisfies the corresponding Courant-Friedrichs-Lewy (CFL) condition for the PDE system \eqref{eq:2DEulerVorticity}--\eqref{eq:boundary_condition_streamfunction}, see Section \ref{sec:numerical-implementation-(firedrake)-wei}. The computed $\Delta \vecx{X}_{ij}^{m}$ are then used to estimate the correlation eigenvectors $\vecx{\xi}_i,$ see Section \ref{subsec:Methodology}. 

We have assumed the sum $\sum_i \vecx{\xi}_i \circ dW_t^i$  is finite, see Remark \ref{rem:finite_num_eofs}. Let $n_{\xi}$ denote the number of $\vecx{\xi}_i$. Our choice for $n_{\xi}$ is informed by the computed eigenvalues, so that a given percentage of the total variability in $\Delta \vecx{X}_{ij}$ is captured. 

Figure \ref{fig:EOF-normalised-spectrum-64} shows a plot of the normalised spectrum. To illustrate how we choose $n_{\xi}$, the coloured dots: cyan, magenta and red, mark the number of $\vecx{\xi}_i$ required to capture $50\%$, $70\%$ and $90\%$ of the total variability in $\Delta \vecx{X}_{ij}$ respectively. As shown in the plot, to capture $50\%$ of the total variability, we need $n_{\xi}=51$; to capture $70\%$ of the total variability, we need $n_{\xi}=107$; to capture $90\%$ of the total variability, we need $n_{\xi}=225$. Note that for this numerical experiment, the coarse grid is of size $64\times64$. At this resolution, there are $4096$ EOFs in total.

We substitute the computed $\vecx{\xi}_i$ into \eqref{eq:barx} and simulate an ensemble of independent realisations of \eqref{eq:barx} in order to do Lagrangian trajectory uncertainty quantification tests. In these tests we wish to see if the estimated $\vecx{\xi}_i$ are adequate so that the approximation \eqref{eq:lagrangian_approximation_model} holds.

Figure \ref{fig:Lagrangian-trajectory-uq} shows a plot of the result for one indexed time interval (time interval indexed by $m$ in the methodology described in Section \ref{subsec:Methodology}), where we have simulated $200$ Lagrangian trajectories driven by the stochastic equation \eqref{eq:barx}, with the number of EOFs capturing $90\%$ of the total variance. We denote this by $n_{\xi} \equiv 90\%$. The stochastic trajectory positions are coloured using blue, and the deterministic trajectory positions are coloured using red. The length of time, $\Delta t$, over which \eqref{eq:barx} and \eqref{eq:dx} are simulated, corresponds to one coarse resolution PDE CFL time step and is too small to show significant deviations between the stochastic trajectories and the deterministic trajectory. Nevertheless, the result shows that at each position, the ensemble perfectly captures the deterministic trajectory. 

\begin{figure}[htb]
        \begin{minipage}[t]{0.48\textwidth}%
                \begin{center}
                        \includegraphics[width=1\textwidth]{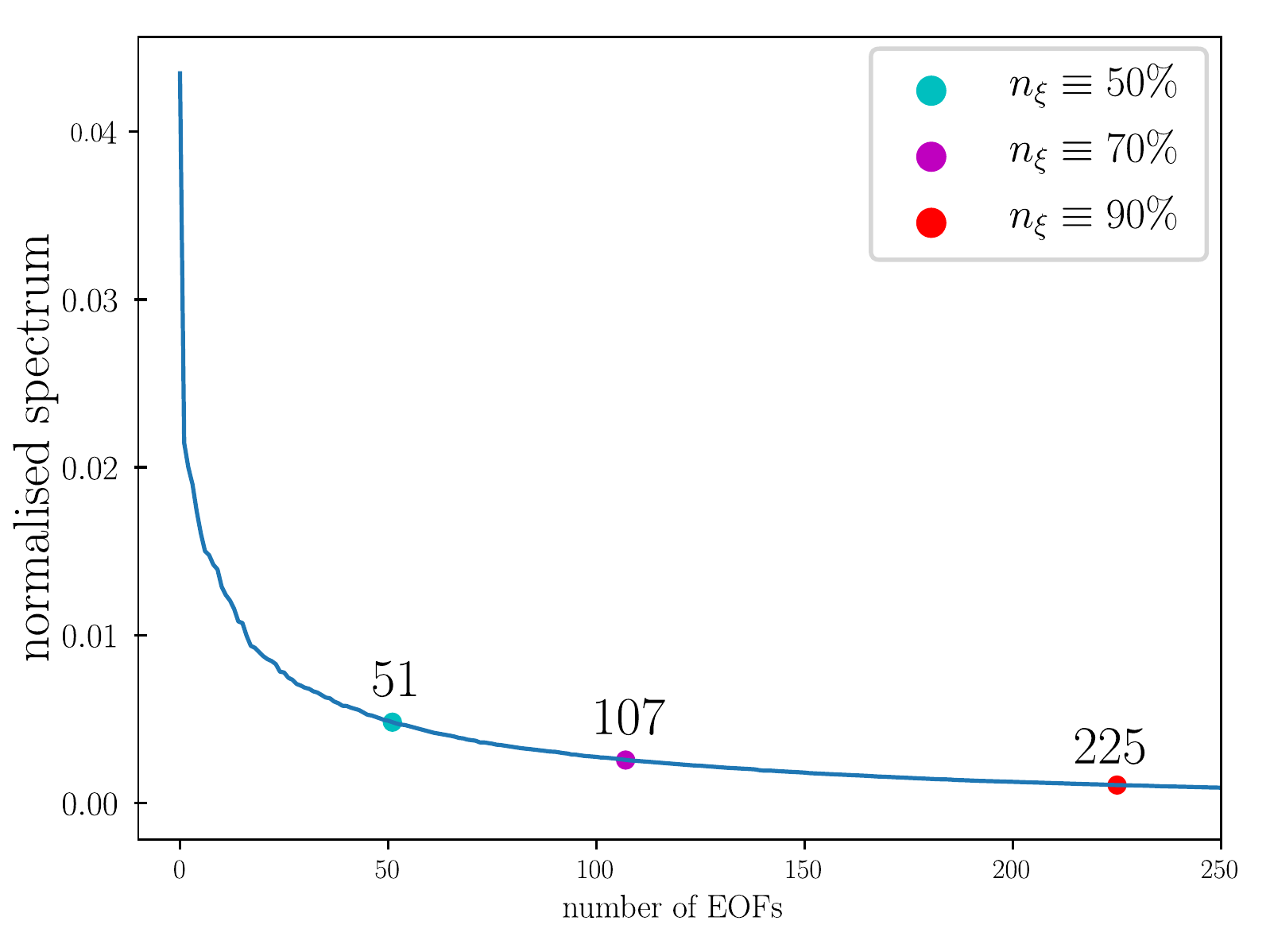}
                        \par\end{center}
                \caption{\label{fig:EOF-normalised-spectrum-64}EOF normalised spectrum for coarse grid
                        size $64\times64$. To illustrate how we choose $n_{\xi}$, the coloured dots: cyan, magenta and red, mark the number of $\vecx{\xi}_i$ required to capture $50\%$, $70\%$ and $90\%$ of the total variability in $\Delta \vecx{X}_{ij}$ respectively. As shown in the plot, to capture $50\%$ of the total variability, we need $n_{\xi}=51$; to capture $70\%$ of the total variability, we need $n_{\xi}=107$; to capture $90\%$ of the total variability, we need $n_{\xi}=225$. Note that at this resolution, there are $4096$ EOFs in total. See Section \ref{subsec:Lagrangian-trajectories-tests}.}
        \end{minipage}\hfill{}%
        \begin{minipage}[t]{0.48\textwidth}%
                \begin{center}
                        \includegraphics[width=1\textwidth]{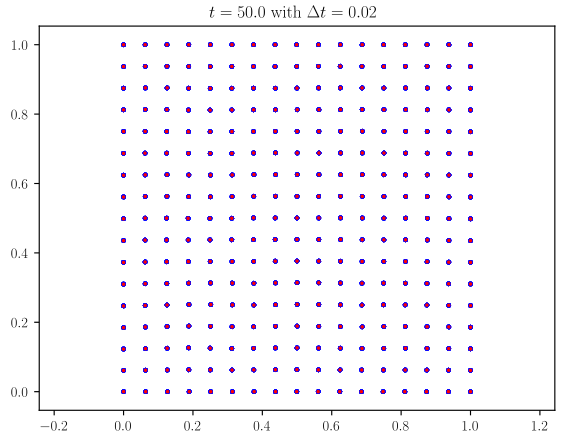}
                        \par\end{center}
                \caption{\label{fig:Lagrangian-trajectory-uq}Lagrangian trajectory uncertainty quantification corresponding to the time interval $[50,52)$ in simulation time
                        units, see Section \ref{subsec:Methodology}. There are $200$ stochastic Lagrangian trajectories (driven by the stochastic equation \eqref{eq:barx}), using a number of EOFs capturing $90\%$ of the total variance, $n_{\xi}\equiv90\%$. The length of time, $\Delta t$, over which \eqref{eq:barx} and \eqref{eq:dx} are simulated, corresponds to one coarse resolution PDE CFL time step. The stochastic trajectory positions are coloured using blue, and the deterministic trajectory positions are coloured using red. 
                        The time step $\Delta t$ is too small to show significant deviations between the stochastic trajectories and the deterministic trajectory. Nevertheless, the plot shows that at each position, the ensemble perfectly captures the deterministic trajectory. See Section \ref{subsec:Lagrangian-trajectories-tests}.}
        \end{minipage}
\end{figure}

We also apply the methodology described in Section \ref{subsec:Methodology} to more refined coarse grids, in particular grids of size $128\times128$ and $256\times256$, in order to investigate the impact of mesh refinement on uncertainty quantification for the SPDE. The results will be shown in the next subsection. 

Figure \ref{fig:EOF-normalised-spectrum-128} and Figure \ref{fig:EOF-normalised-spectrum-256} show plots of the normalised spectrum for coarse grids of size $128\times128$ and $256\times256$ respectively. The same coloured dots: cyan, magenta and red, are used to indicate the number of EOFs needed to capture $50\%$, $70\%$ and $90\%$ of the total variability in $\Delta \vecx{X}_{ij}$, as is shown in Figure \ref{fig:EOF-normalised-spectrum-64}. The results show that at each variance level, as the coarse grid gets refined, $n_{\xi}$ gets larger. For example, to capture $90\%$ variance, $n_{\xi} = 225$ for $64\times64$ coarse grid, $n_{\xi} = 277$ for $128\times128$ coarse grid, and $n_{\xi}=300$ for $256\times256$ coarse grid. At each resolution, \eqref{eq:barx} and \eqref{eq:dx} are simulated over a length of time, $\Delta t$, equivalent to one CFL time step for that resolution. Therefore it is likely that the deviations in $\Delta \vecx{X}_{ij}$ become more homogeneous as $\Delta t$ decreases, resulting in more EOFs required to explain a given percentage of the total variation.

\begin{figure}
        \begin{minipage}[t]{0.48\textwidth}%
                \begin{center}
                        \includegraphics[width=1\textwidth]{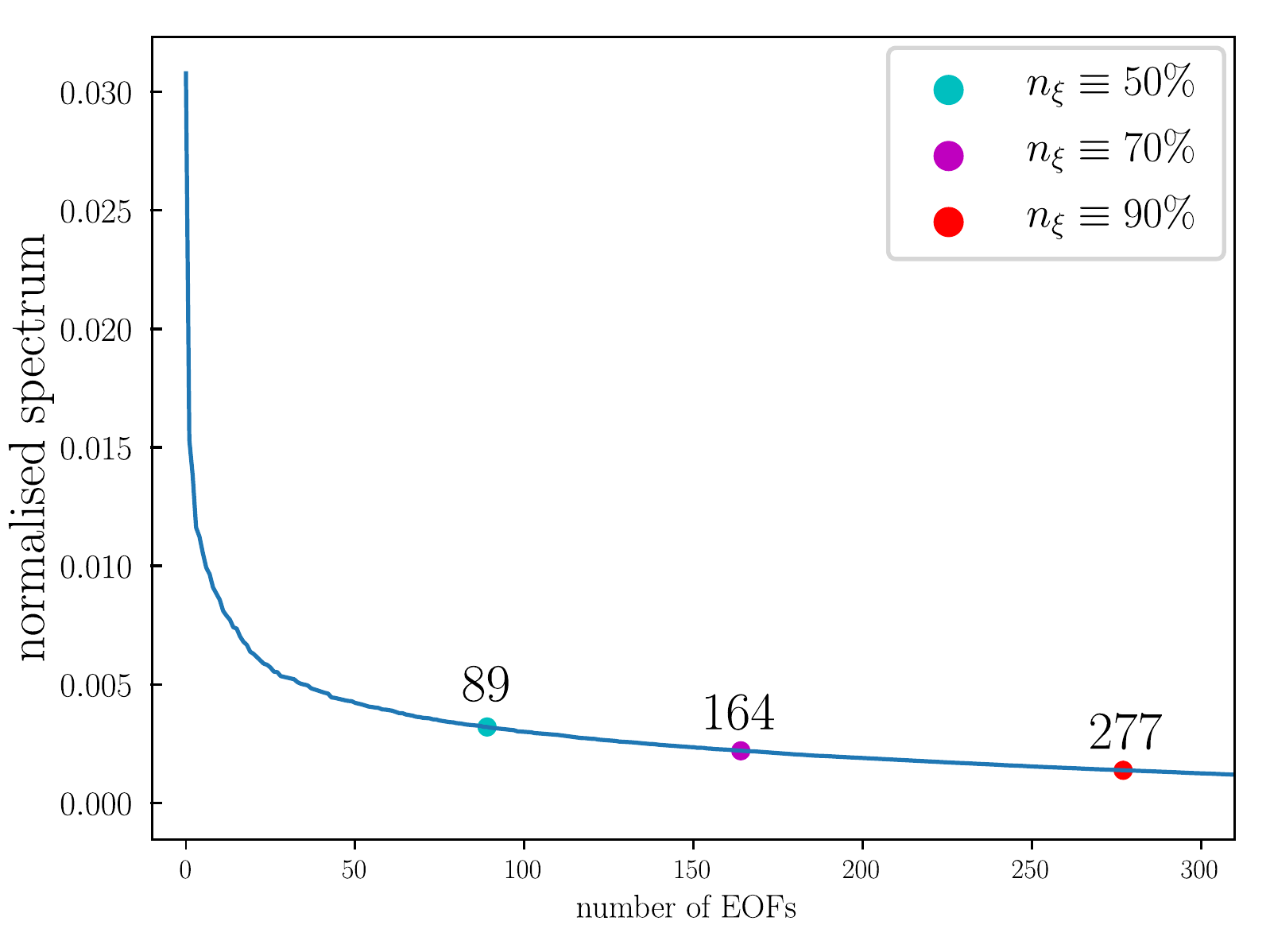}
                        \par\end{center}
                \caption{\label{fig:EOF-normalised-spectrum-128}EOF normalised spectrum for coarse grid
                        size $128\times128$. To illustrate how we choose $n_{\xi}$, the coloured dots: cyan, magenta and red, mark the number of $\vecx{\xi}_i$ required to capture $50\%$, $70\%$ and $90\%$ of the total variability in $\Delta \vecx{X}_{ij}$ respectively. As shown in the plot, to capture $50\%$ of the total variability, we need $n_{\xi}=89$; to capture $70\%$ of the total variability, we need $n_{\xi}=164$; to capture $90\%$ of the total variability, we need $n_{\xi}=277$. Note that at this resolution, there are $16384$ EOFs in total. See Section \ref{subsec:Lagrangian-trajectories-tests}.}
        \end{minipage}\hfill{}%
        \begin{minipage}[t]{0.48\textwidth}%
                \begin{center}
                        \includegraphics[width=1\textwidth]{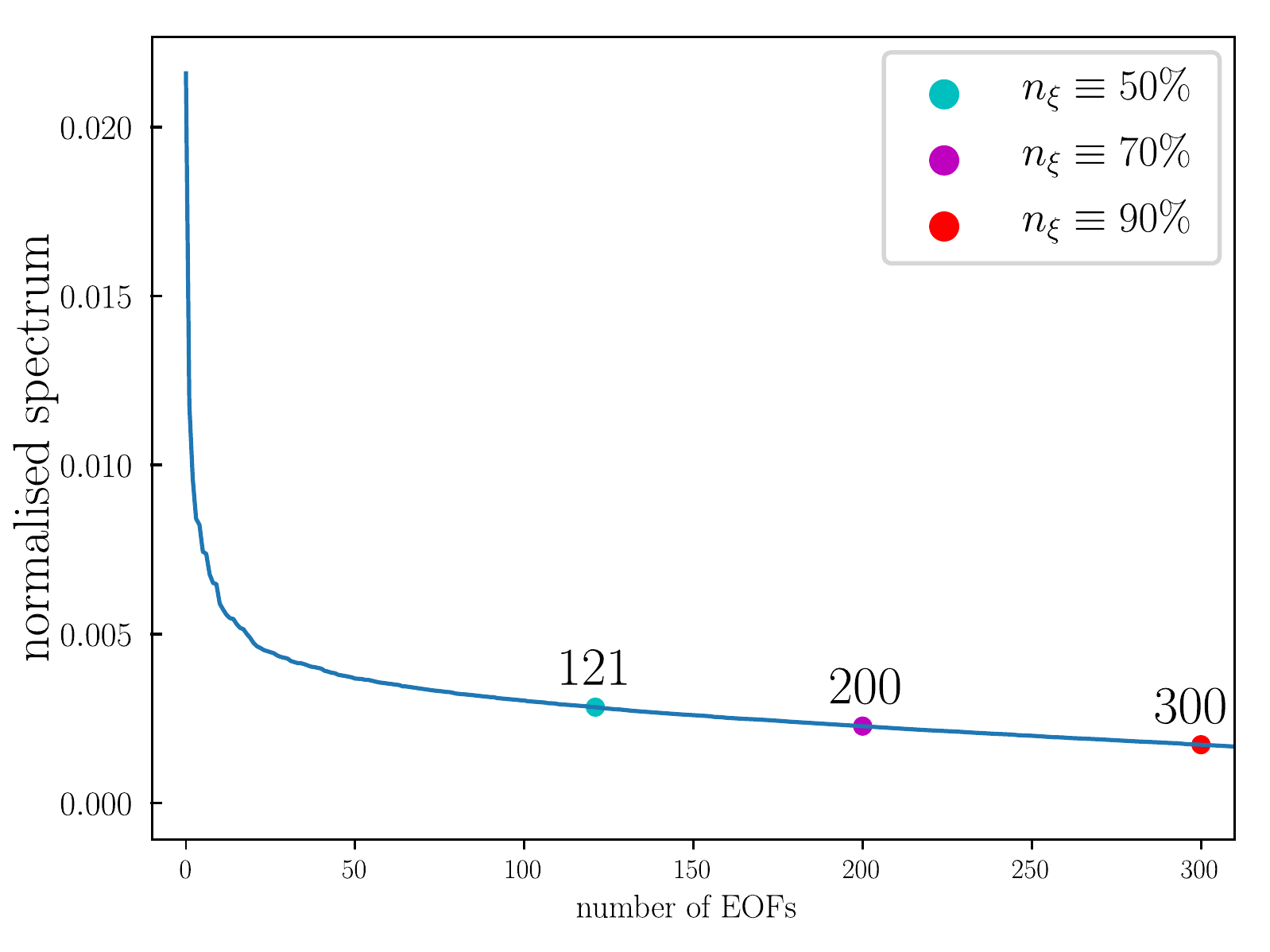}
                        \par\end{center}
                \caption{\label{fig:EOF-normalised-spectrum-256}EOF normalised spectrum for coarse grid
                        size $256\times256$. To illustrate how we choose $n_{\xi}$, the coloured dots: cyan, magenta and red, mark the number of $\vecx{\xi}_i$ required to capture $50\%$, $70\%$ and $90\%$ of the total variability in $\Delta \vecx{X}_{ij}$ respectively. As shown in the plot, to capture $50\%$ of the total variability, we need $n_{\xi}=121$; to capture $70\%$ of the total variability, we need $n_{\xi}=200$; to capture $90\%$ of the total variability, we need $n_{\xi}=300$. Note that at this resolution, there are $65536$ EOFs in total. See Section \ref{subsec:Lagrangian-trajectories-tests}.}
        \end{minipage}
\end{figure}

\subsubsection{SPDE ensemble and SPDE initial conditions \label{subsec:spde_initial_results}}

The SPDE \eqref{eq:stochastic2dEuler} is simulated on a coarse grid of size $64\times64$. An ensemble of $N_{p}$ initial conditions for the SPDE are generated, each of which leads to an independent
realisation of the SPDE. Motivated by data assimilation and in particular particle filtering vocabulary, we call each realisation in the ensemble a \emph{particle}. We denote the particles by $\hat{q}^{i},$ $i=1,\dots,N_{p}.$

The goal when generating the initial conditions is to obtain an ensemble which contain particles that are `close' to the truth. More precisely, we would like an initial prior distribution in which the truth lies in the concentration of the probability density. To achieve this, we take a truth $\omega_{t_0}$ (see Remark \ref{remark:pde_period_division}) and deform it using the following `modified' Euler equation:
\begin{equation}
\partial_{t}\omega_{t_0}+\beta_{i}\vecu(\tau_{i})\cdot\nabla\omega_{t_0}=0\label{eq:initial_cond_deformation}
\end{equation}
where $\beta_{i}\sim\mathcal{N}(0,\epsilon),$ $i=1,\dots,N_{p}$ are centered Gaussian weights with an apriori variance parameter $\epsilon,$ and $\tau_{i}\sim\mathcal{U}\left(t_{\text{initial}},t_0\right),$ $i=1,\dots,N_{p}$ are uniform random numbers. Thus each $\vecu\left(\tau_{i}\right)$ corresponds to a PDE solution in the time period $\left[t_{\text{initial}},t_0\right)$. Equation \eqref{eq:initial_cond_deformation} is solved for one or two ett to obtain a deformation $\hat{\omega}_{t_0}^{i}$ of the coarse grained initial condition $\omega_{t_0}$. These are then used as initial conditions for the SPDE realisations, i.e. $\hat{q}_{t_0}^{i}:=\hat{\omega}_{t_0}^{i},$ $i=1,\dots,N_{p}.$ 

\begin{remark}\label{remark:pde_period_division}
        At the start of Section \ref{sec:numerical_experiments} we discussed simulating the PDE on the fine grid. The PDE initial condition $\omega_{\text{initial}}$ was obtained after spin--up. The overall simulation time interval from the point of $\omega_{\text{initial}}$ is of length $146$ ett. Let $t_{\text{initial}}$ denote the time point that corresponds to $\omega_{\text{initial}}$. We divide the overall simulation time interval into two halves $\left[ t_{\text{initial}}, t_{0} \right)$  and $\left[t_0, 146\right].$ The PDE solutions in the period $\left[ t_{\text{initial}}, t_{0} \right)$ are used for generating the initial condition ensemble for the SPDE, see \eqref{eq:initial_cond_deformation}. This way, the velocity field used in \eqref{eq:initial_cond_deformation} is physical. $t_0$ is set as the initial time to start the SPDE uncertainty quantification numerical experiments. 
\end{remark}

\begin{figure}
        \begin{centering}
                \noindent\begin{minipage}[t]{1\textwidth}%
                        \begin{center}
                                \includegraphics[width=1\textwidth]{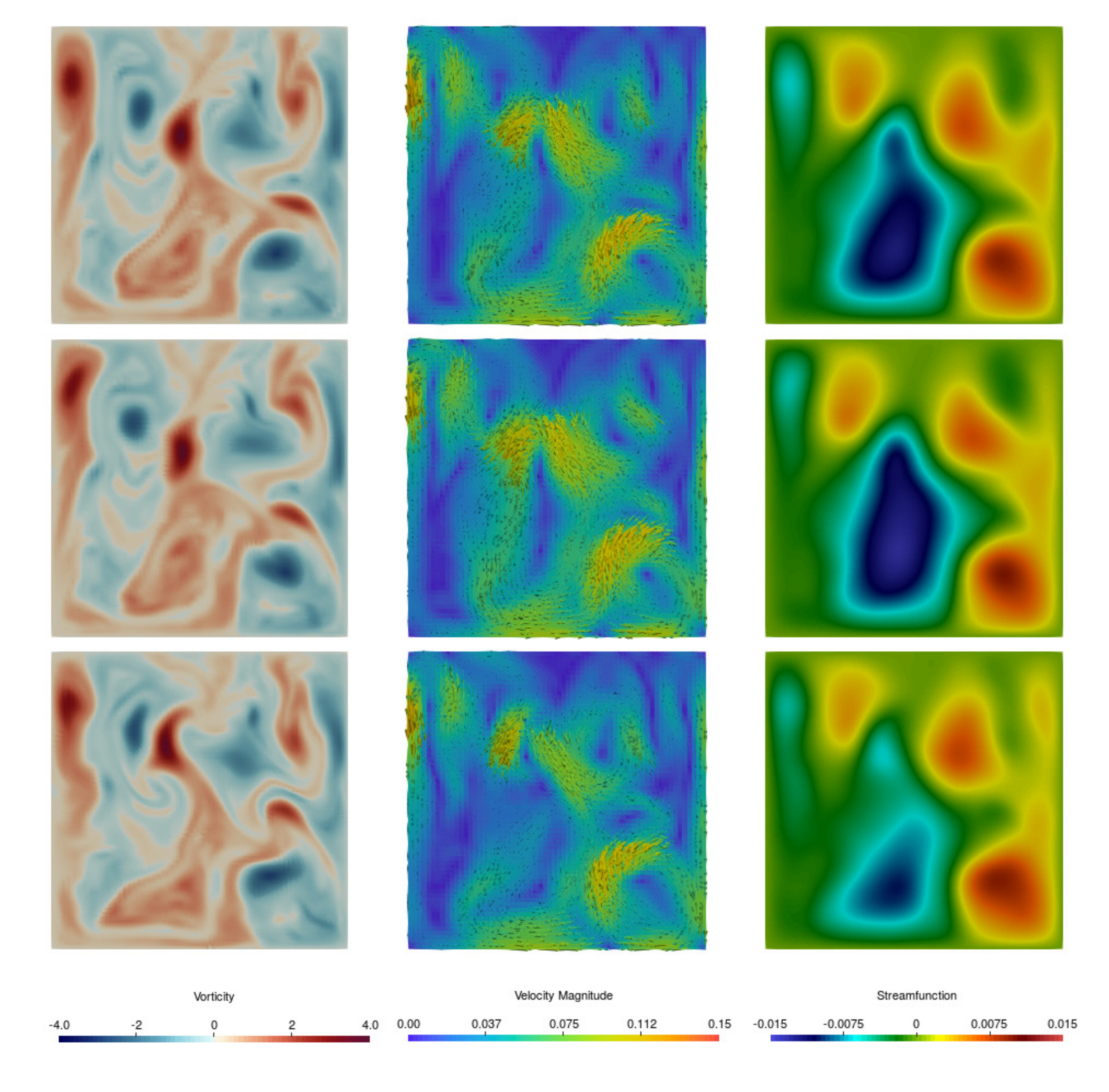}
                                \par\end{center}%
                \end{minipage}
                \par\end{centering}
        \centering{}\caption{\label{fig:spde_solution} The plots show vorticity (left column), velocity (middle column) and streamfunction (right column) for the coarse-grained fine solution (top row), which we call the \emph{truth}, and two independent realisations of the SPDE (middle and bottom rows) at time $t_{0}$, which we call \emph{particles}. The particles at this time point are generated using the deformation procedure \eqref{eq:initial_cond_deformation}. For the vorticity scalar field, the red and blue colours represent opposing signs of the function, and the colour shades indicate the different levels of magnitude of the function. For the velocity field, scaled arrow fields are plotted to indicate the direction and magnitude of the velocity vectors at each spatial location, and the colours highlight the magnitude of the velocity vectors. For the streamfunction scalar field, the colours indicate the contour lines of the function, along which the velocity vectors travel. The middle particle seems to be `closer' to the truth in terms of the visible large scale features than the bottom particle, see \eqref{eq:initial_cond_deformation}. See Section \ref{subsec:spde_initial_results}.}
\end{figure}

\begin{figure}
        \begin{centering}
                \noindent\begin{minipage}[t]{1\textwidth}%
                        \begin{center}
                                \includegraphics[width=1\textwidth]{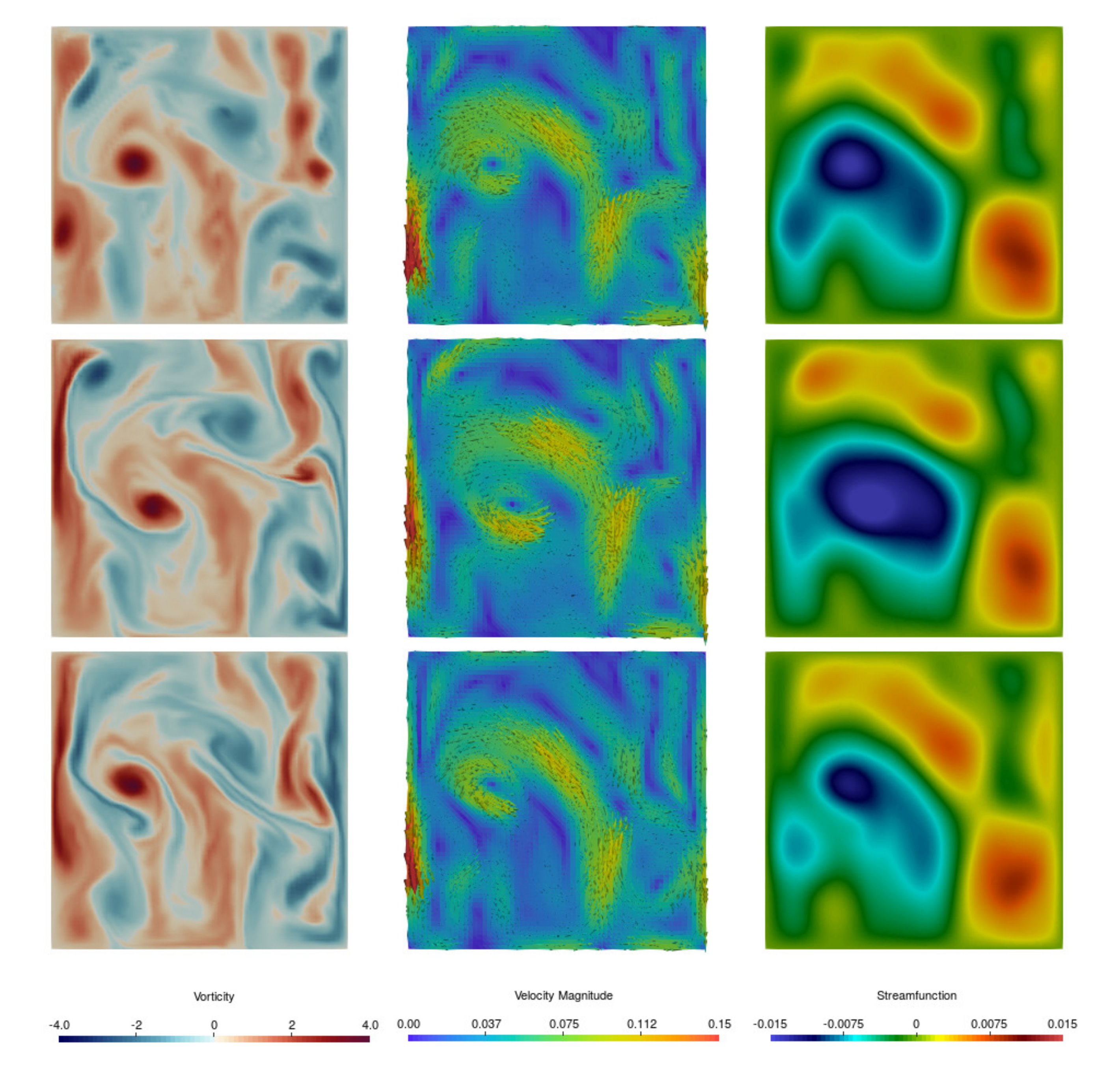}
                                \par\end{center}%
                \end{minipage}
                \par\end{centering}
        \centering{}\caption{\label{fig:spde_solution-1} Following on from Figure \ref{fig:spde_solution}, the plots show vorticity (left column), velocity (middle column) and streamfunction (right column) for the truth (top row), and the two particles (middle and bottom rows) at time $t_{0} + 3$ ett. For the vorticity scalar field, the red and blue colours represent opposing signs of the function, and the colour shades indicate the different levels of magnitude of the function. For the velocity field, scaled arrow fields are plotted to indicate the direction and magnitude of the velocity vectors at each spatial location, and the colours highlight the magnitude of the velocity vectors. For the streamfunction scalar field, the colours indicate the contour lines of the function, along which the velocity vectors travel. The middle particle started  `closer' to the truth in terms of the visible large scale features at time $t_{0}$ than the bottom particle, see Figure \ref{fig:spde_solution}, however as can be seen, different large and small scale features to the truth have developed. Comparing the streamfunction plots, the bottom particle seems to have diverged from the truth even further. See Section \ref{subsec:spde_initial_results}.}
\end{figure}

\begin{figure}
        \begin{centering}
                \noindent\begin{minipage}[t]{1\textwidth}%
                        \begin{center}
                                \includegraphics[width=1\textwidth]{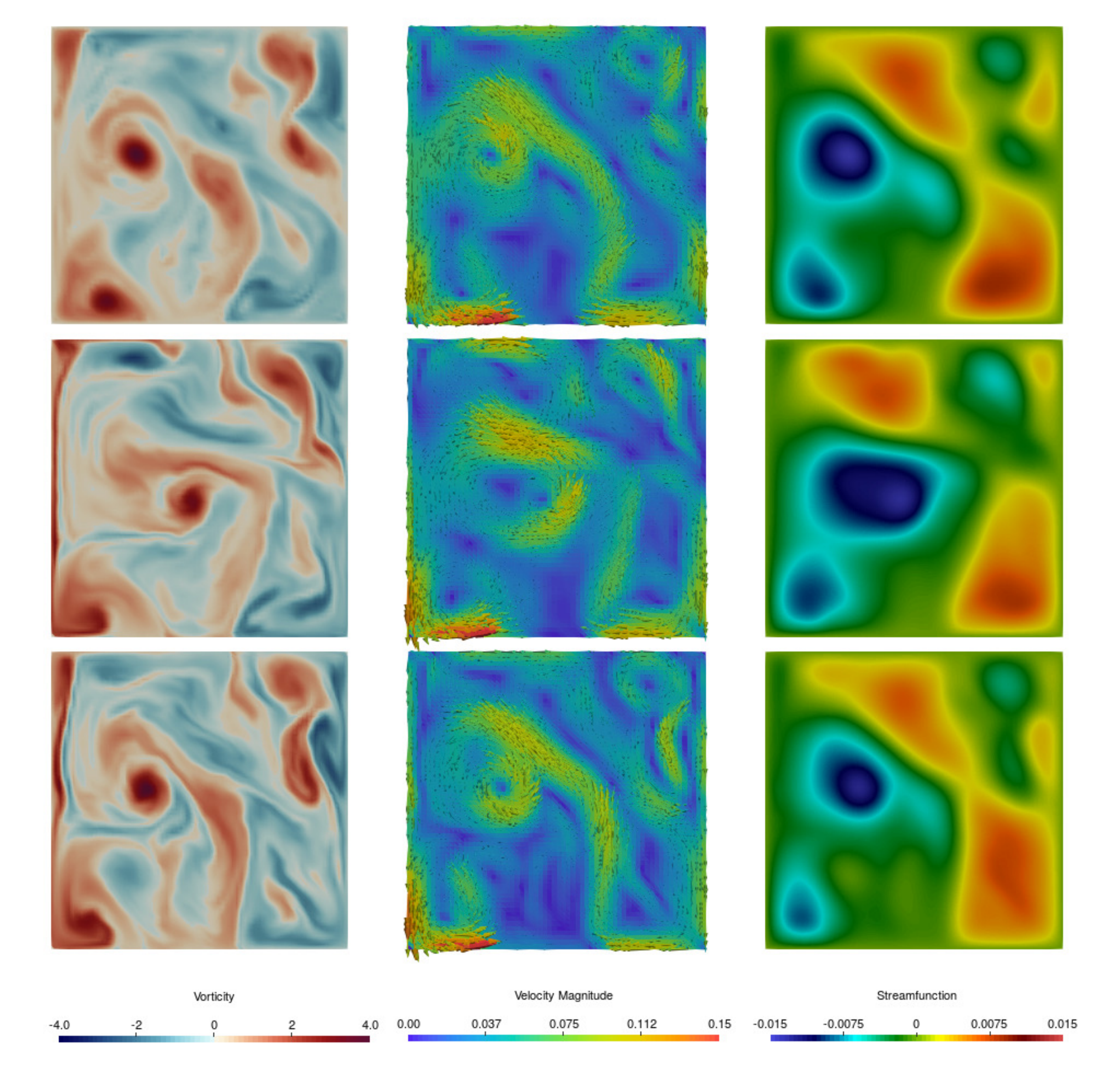}
                                \par\end{center}%
                \end{minipage}
                \par\end{centering}
        \centering{}\caption{\label{fig:spde_solution-2} Following on from Figure \ref{fig:spde_solution-1}, the plots show vorticity (left column), velocity (middle column) and streamfunction (right column) for the truth (top row), and the two particles (middle and bottom rows) at time $t_{0} + 5$ ett. For the vorticity scalar field, the red and blue colours represent opposing signs of the function, and the colour shades indicate the different levels of magnitude of the function. For the velocity field, scaled arrow fields are plotted to indicate the direction and magnitude of the velocity vectors at each spatial location, and the colours highlight the magnitude of the velocity vectors. For the streamfunction scalar field, the colours indicate the contour lines of the function, along which the velocity vectors travel. The middle particle started  `closer' to the truth in terms of the visible large scale features at time $t_{0}$ than the bottom particle, see Figure \ref{fig:spde_solution}, however as can be seen, different large and small scale features to the truth have developed. Comparing the streamfunction plots, the bottom particle seems to have diverged from the truth even further. See Section \ref{subsec:spde_initial_results}.}
\end{figure}

Figures \ref{fig:spde_solution}, \ref{fig:spde_solution-1} and \ref{fig:spde_solution-2} show plots for the truth and two independent realisations of the SPDE at times $t_0$, $t_0 + 3$ ett and $t_0 + 5$ ett, respectively. In each figure, the plots show vorticity (left column), velocity (middle column) and streamfunction (right column) for the \emph{truth} (top row), and the two \emph{particles} (middle and bottom rows).

In Figure \ref{fig:spde_solution} the particles are generated using the deformation procedure \eqref{eq:initial_cond_deformation}. Intuitively, deformations of the truth hope to capture the idea of ``location uncertainty'' in the initial conditions. 
The middle particle seems to be `closer' to the truth in terms of the visible large scale features than the bottom particle.

At time $t_0 + 3$ ett, which is shown in Figure \ref{fig:spde_solution-1}, although the middle particle started  `closer' to the truth than the bottom particle at time $t_0$, different large and small scale features to the truth have developed. Comparing the streamfunction plots, the bottom particle seems to have diverged from the truth even further. Seen at $t_0 + 5$ ett, which is shown in Figure \ref{fig:spde_solution-2}, the diverging features in the two particles develop further, representing increasing uncertainty as time goes on. Data assimilation techniques can be applied to incorporate observation data in order to correct for the increasing uncertainty.

\subsubsection{SPDE uncertainty quantification \label{subsec:spde_uq_results}}

In this section we show uncertainty quantification test results which test
our stochastic parameterisation for the Euler equation \eqref{eq:2DEulerVorticity}. 

Each particle in the initial ensemble generated using the deformation procedure \eqref{eq:initial_cond_deformation} is { independently} evolved forward using the SPDE \eqref{eq:stochastic2dEuler} for a time length of $20$ ett. Two main sets of tests are done: uncertainty quantification and distance between the ensemble and the truth.

For uncertainty quantification, consider a uniform grid of size $4\times4$ (see Remark \ref{remark:observationgrid}). At each of its interior points, we plot the ensemble one standard deviation region about the ensemble mean and compare that with the truth at the same location. We also look at how the one standard deviation region is affected by changing the number of EOFs used and the number of particles in the ensemble. The tests are done for vorticity, streamfunction and velocity separately. These results are shown in Figures \ref{fig:mike-cullen-psi}---\ref{fig:mike-cullen-u}. In each plot, the solid line represents the truth and the coloured regions represent the one standard deviation regions. The results are plotted for discrete ett time values and are linearly interpolated in between times. 

The solid lines in all the plots start within their respective spreads, see the deformation procedure \eqref{eq:initial_cond_deformation}. As can be seen, the spreads capture the solid lines for roughly $4$ or $5$ ett before deviating at certain grid locations, for example see Figure \ref{fig:mike-cullen-psi}. The streamfunction is the smoothest of the three functions, thus in Figure \ref{fig:mike-cullen-psi} the solid lines and the one standard deviation regions show smoother features compared to those in the other figures. For example compare Figure \ref{fig:mike-cullen-psi} with Figure \ref{fig:mike-cullen-q}.

\begin{remark} \label{remark:observationgrid}
        For data assimilation, we consider a observation grid of size $4\times4$ and thus would like the parameterisation methodology to work well at grid points which correspond to the observation grid.
\end{remark}

For a fixed number of particles in the ensemble, $N_p=500$, we look at the differences between the one standard deviation regions due to using different number of EOFs: $n_{\xi}\equiv0.5$ versus $n_{\xi}\equiv0.9$. The results are shown in the left hand side plots in each of the Figures \ref{fig:mike-cullen-psi}---\ref{fig:mike-cullen-u}, where the pink regions correspond to $n_{\xi}\equiv0.9$ and the grey regions correspond to $n_{\xi}\equiv0.5$. We expect differences in spread size and location but the differences shown in the plots are insignificant.

With $n_{\xi}\equiv0.5$ fixed, we compare the differences between the one standard deviation regions due to changing the number of particles in the ensemble: $N_p = 500$ versus $N_p = 225$. The results are shown in the right hand plots in each of the Figures  \ref{fig:mike-cullen-psi}---\ref{fig:mike-cullen-u}, where the pink regions correspond to $N_p = 225$ and the grey regions correspond to $N_p=500$. Again the differences are insignificant.

\begin{figure}
        \begin{minipage}[t]{0.49\textwidth}%
                \begin{center}
                        \includegraphics[width=1\textwidth]{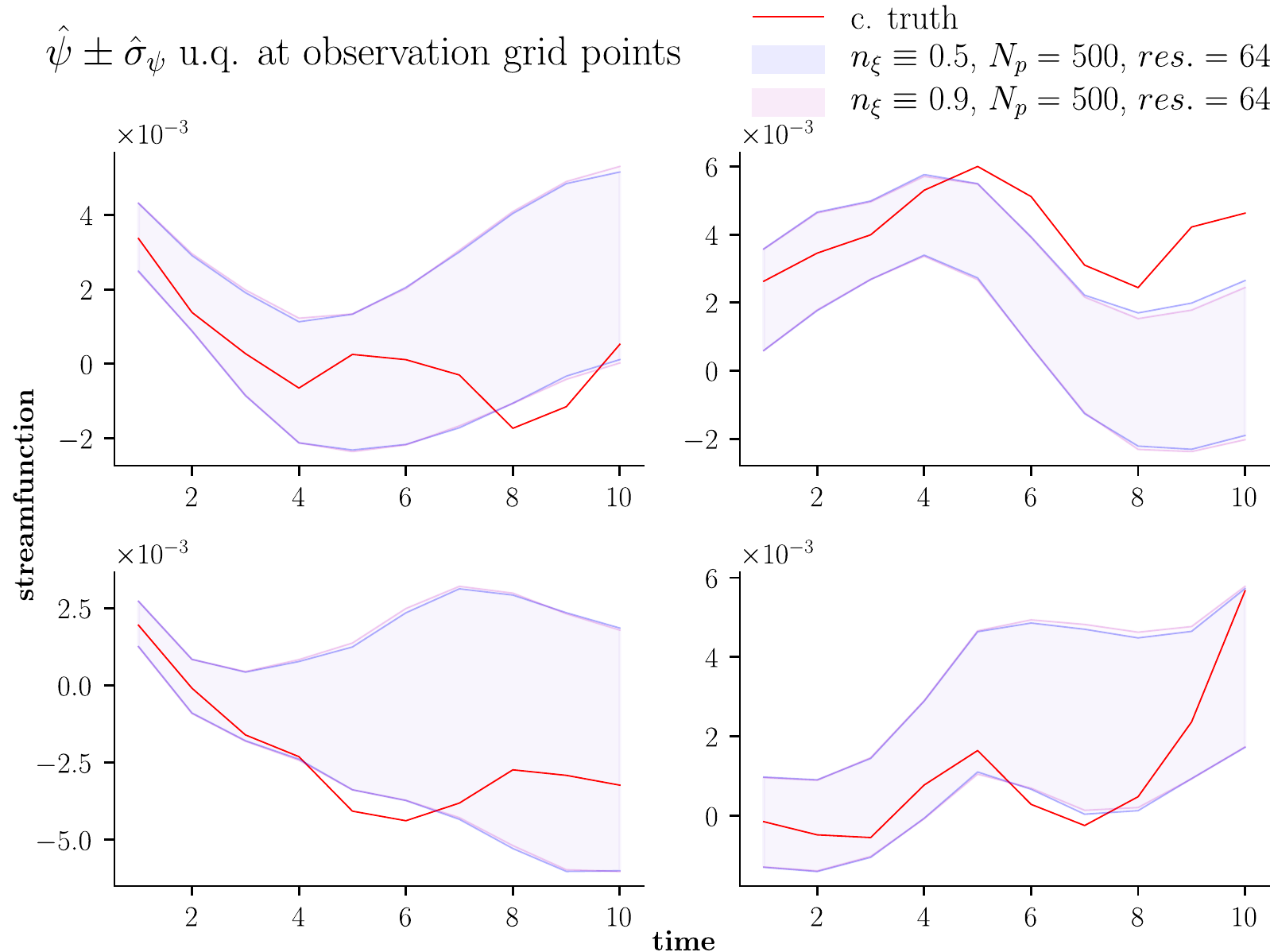}
                        \par\end{center}%
        \end{minipage}\hfill{}%
        \begin{minipage}[t]{0.49\textwidth}%
                \begin{center}
                        \includegraphics[width=1\textwidth]{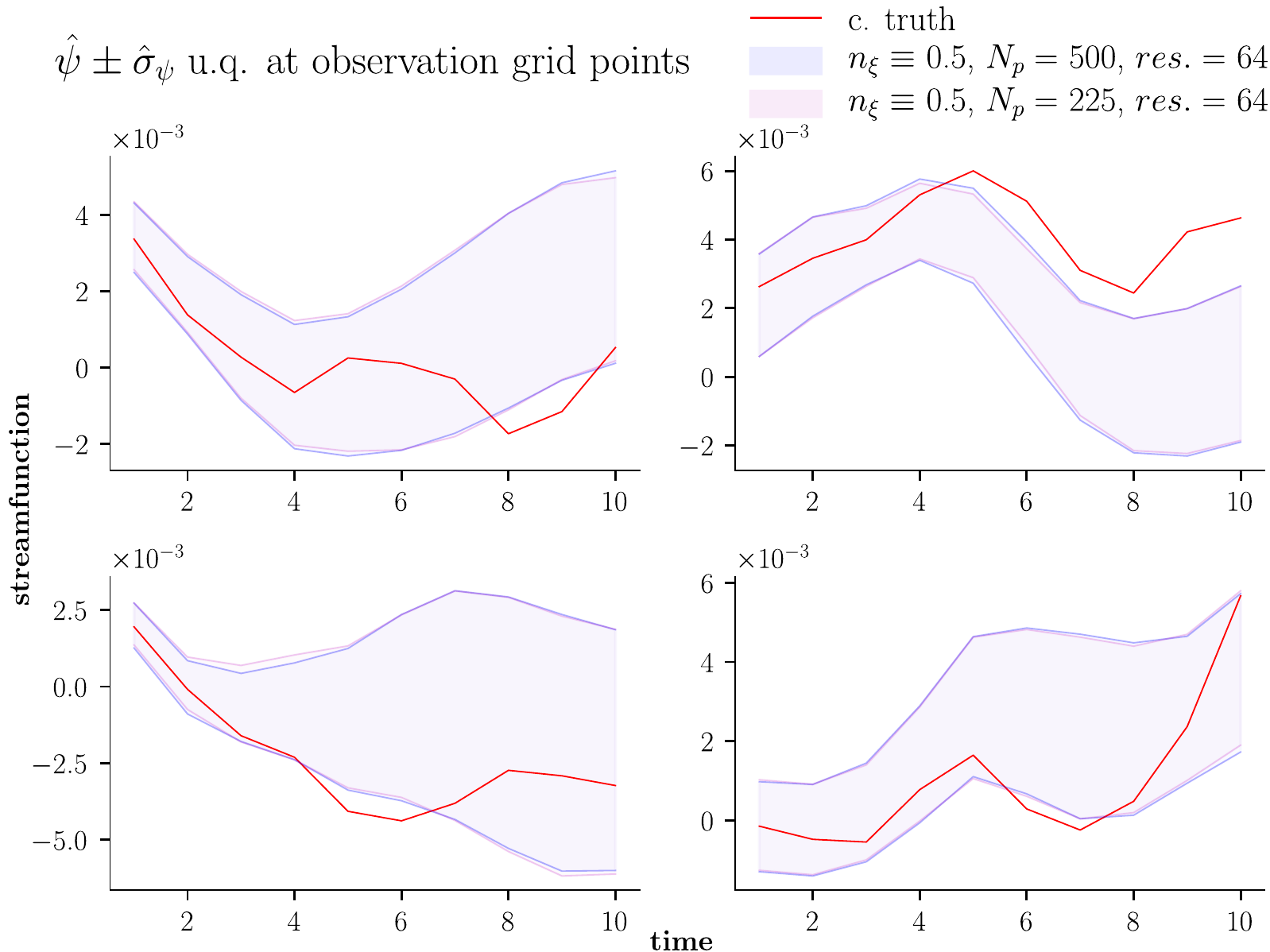}
                        \par\end{center}%
        \end{minipage}
        
        \caption{\label{fig:mike-cullen-psi} Uncertainty quantification plots comparing the truth with the ensemble one standard deviation region about the ensemble mean for the streamfunction at four interior grid points of a $4\times4$ observation grid. In each plot, the solid line represents the truth and the coloured regions represent the one standard deviation regions. In the figure on the left, for a fixed ensemble size ($N_p = 500$), we compare the spread differences at the individual observation grid points due to using a different number of EOFs: $n_{\xi}\equiv0.9$ (pink) versus $n_{\xi}\equiv0.5$ (grey). In the figure on the right, for a fixed the number of EOFs ($n_{\xi}\equiv0.5$), we compare the spreads differences at individual grid points due having a different number of particles in the ensemble: $N_p=500$ (grey) versus $N_p=225$ (pink). The results are plotted for discrete ett time values and are linearly interpolated in between times. The solid lines in all the plots start within their respective spreads, see \eqref{eq:initial_cond_deformation}. As can be seen, the spreads capture the solid lines for roughly $4$ or $5$ ett before deviating at certain grid locations. We expect differences in spread size and location but the differences shown in the plots are insignificant. See Section \ref{subsec:spde_uq_results}.}
\end{figure}

\begin{figure}
        \begin{minipage}[t]{0.49\textwidth}%
                \begin{center}
                        \includegraphics[width=1\textwidth]{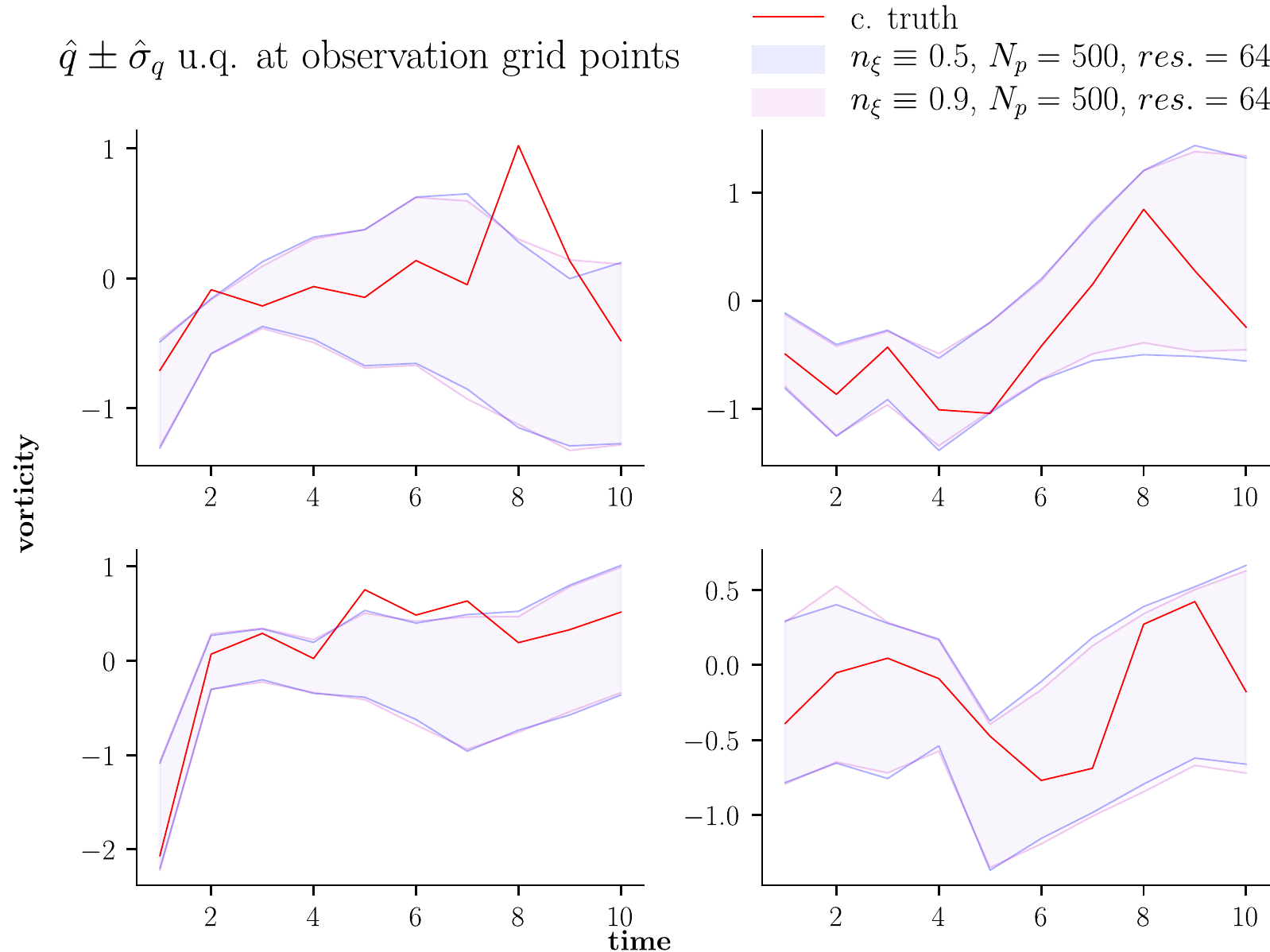}
                        \par\end{center}%
        \end{minipage}\hfill{}%
        \begin{minipage}[t]{0.49\textwidth}%
                \begin{center}
                        \includegraphics[width=1\textwidth]{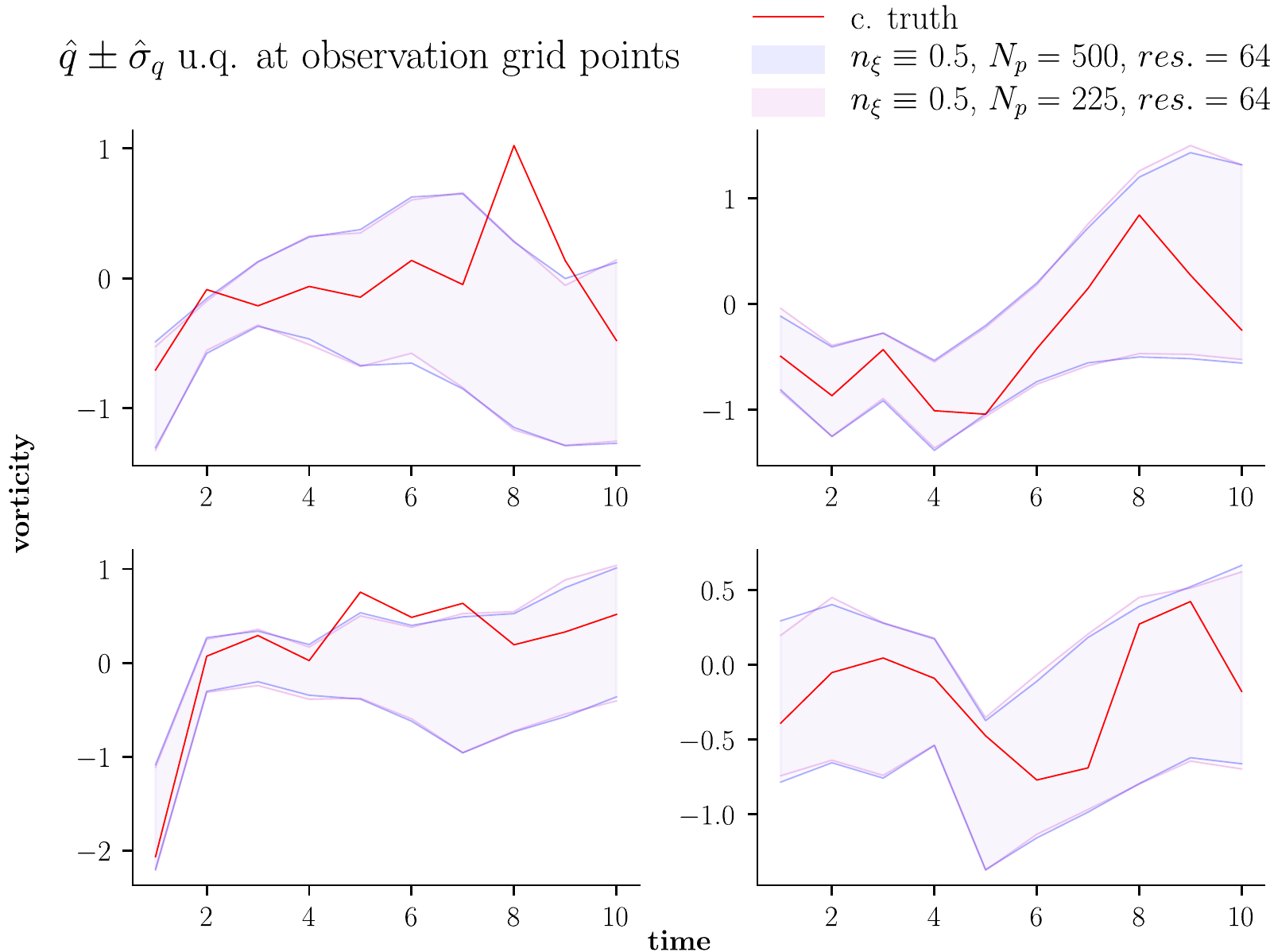}
                        \par\end{center}%
        \end{minipage}
        
        \caption{\label{fig:mike-cullen-q}Uncertainty quantification plots comparing the truth with the ensemble one standard deviation region about the ensemble mean for the vorticity at four interior grid points of a $4\times4$ observation grid. In each plot, the solid line represents the truth and the coloured regions represent the one standard deviation regions. In the figure on the left, for a fixed ensemble size ($N_p = 500$), we compare the spread differences at the individual observation grid points due to using a different number of EOFs: $n_{\xi}\equiv0.9$ (pink) versus $n_{\xi}\equiv0.5$ (grey). In the figure on the right, for a fixed the number of EOFs ($n_{\xi}\equiv0.5$), we compare the spreads differences at individual grid points due having a different number of particles in the ensemble: $N_p=500$ (grey) versus $N_p=225$ (pink). The results are plotted for discrete ett time values and are linearly interpolated in between times. The solid lines in all the plots start within their respective spreads, see \eqref{eq:initial_cond_deformation}. As can be seen, the spreads capture the solid lines for roughly $4$ or $5$ ett before deviating at certain grid locations. We expect differences in spread size and location but the differences shown in the plots are insignificant. See Section \ref{subsec:spde_uq_results}.}
\end{figure}

\begin{figure}
        \begin{minipage}[t]{0.49\textwidth}%
                \begin{center}
                        \includegraphics[width=1\textwidth]{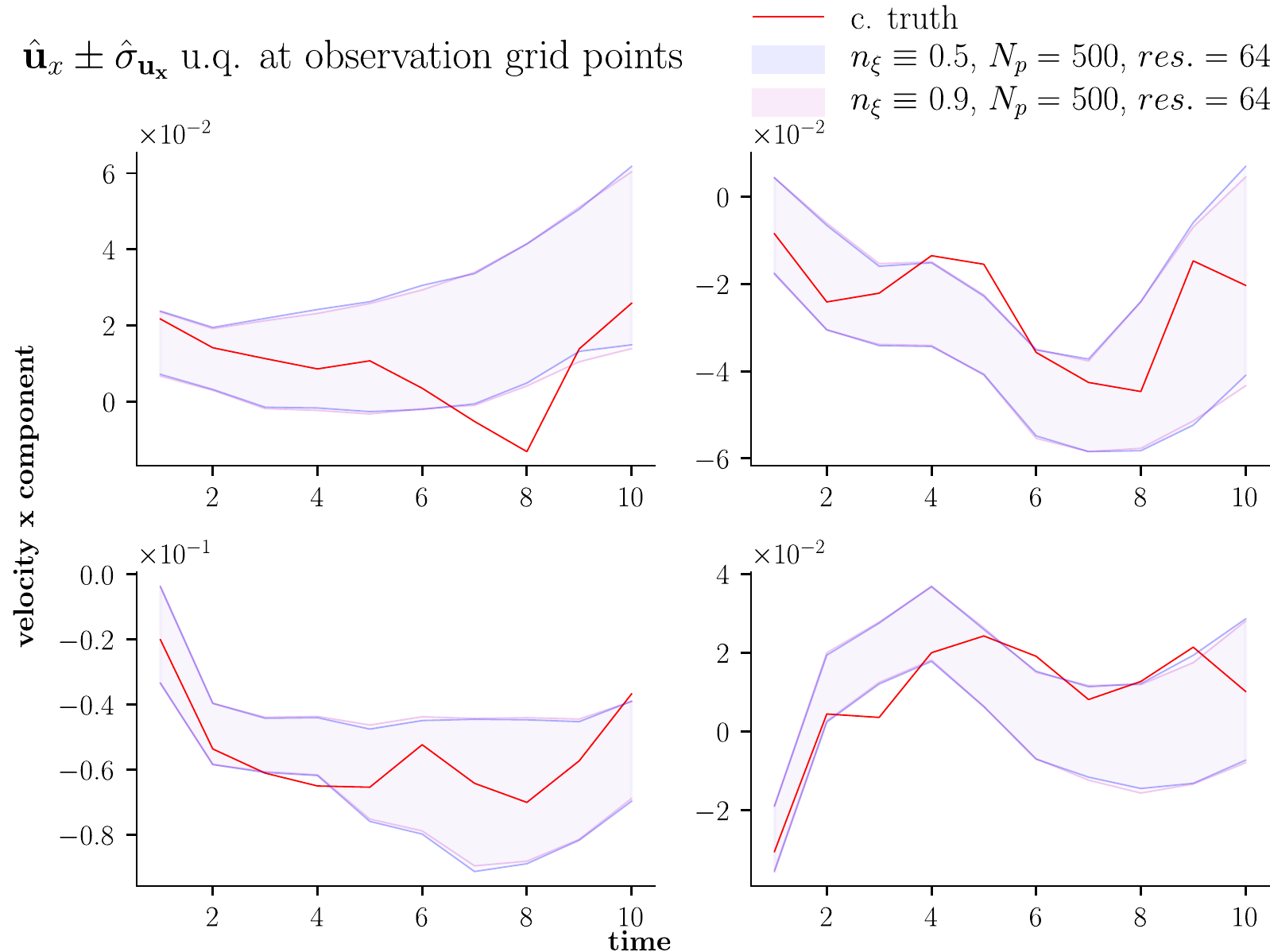}
                        \par\end{center}%
        \end{minipage}\hfill{}%
        \begin{minipage}[t]{0.49\textwidth}%
                \begin{center}
                        \includegraphics[width=1\textwidth]{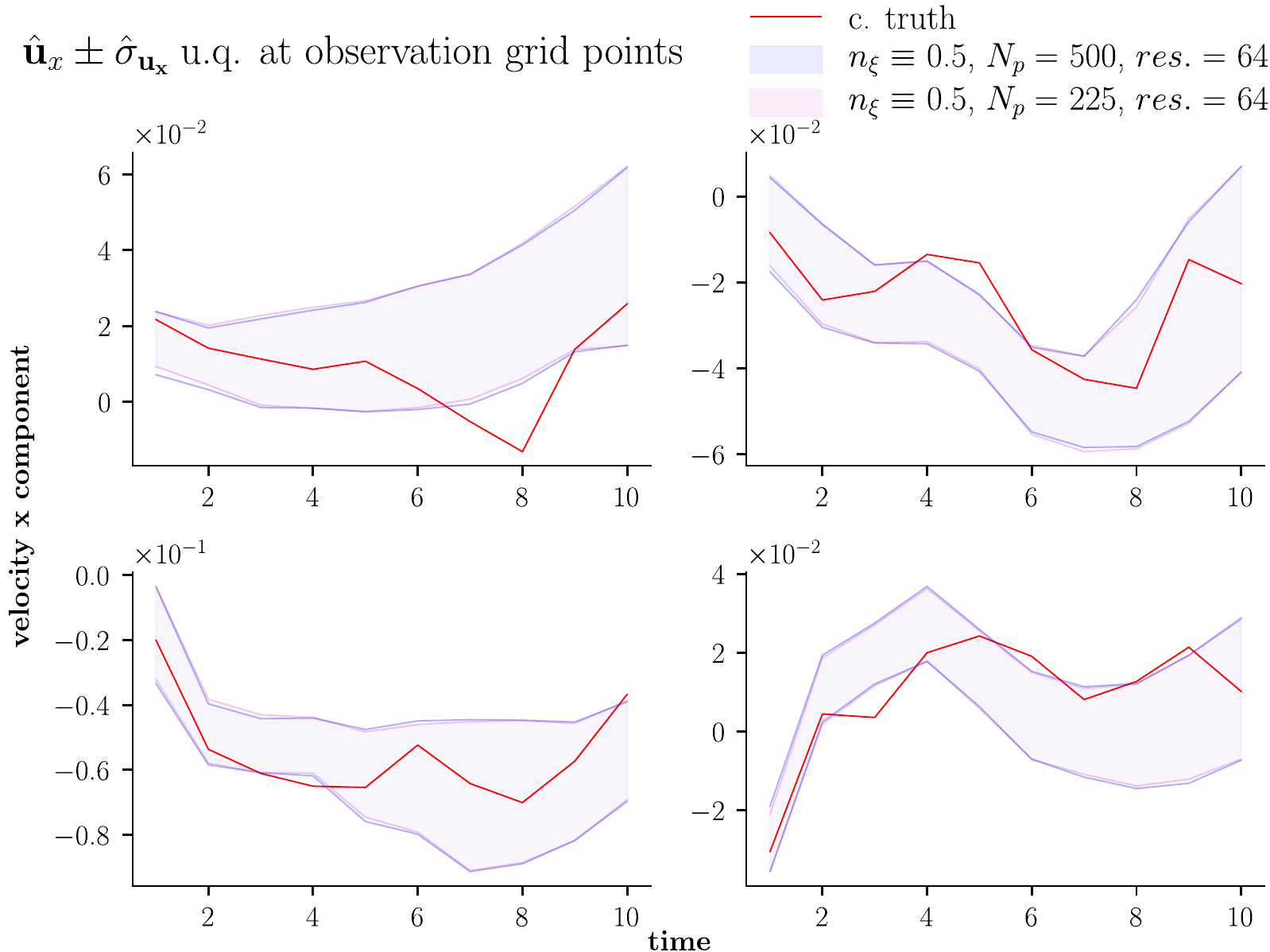}
                        \par\end{center}%
        \end{minipage}
        
        \smallskip{}
        \begin{minipage}[t]{0.49\textwidth}%
                \begin{center}
                        \includegraphics[width=1\textwidth]{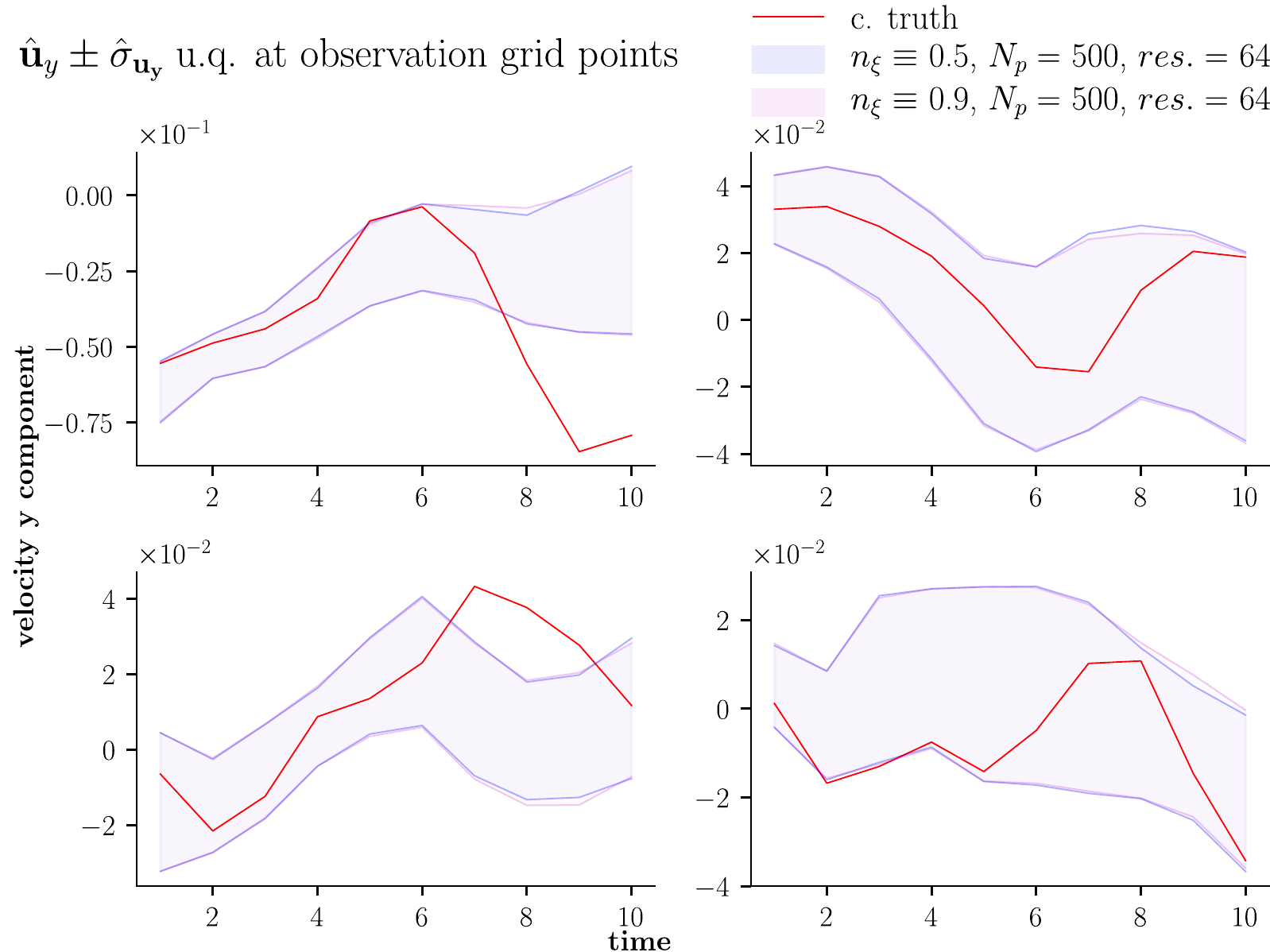}
                        \par\end{center}%
        \end{minipage}\hfill{}%
        \begin{minipage}[t]{0.49\textwidth}%
                \begin{center}
                        \includegraphics[width=1\textwidth]{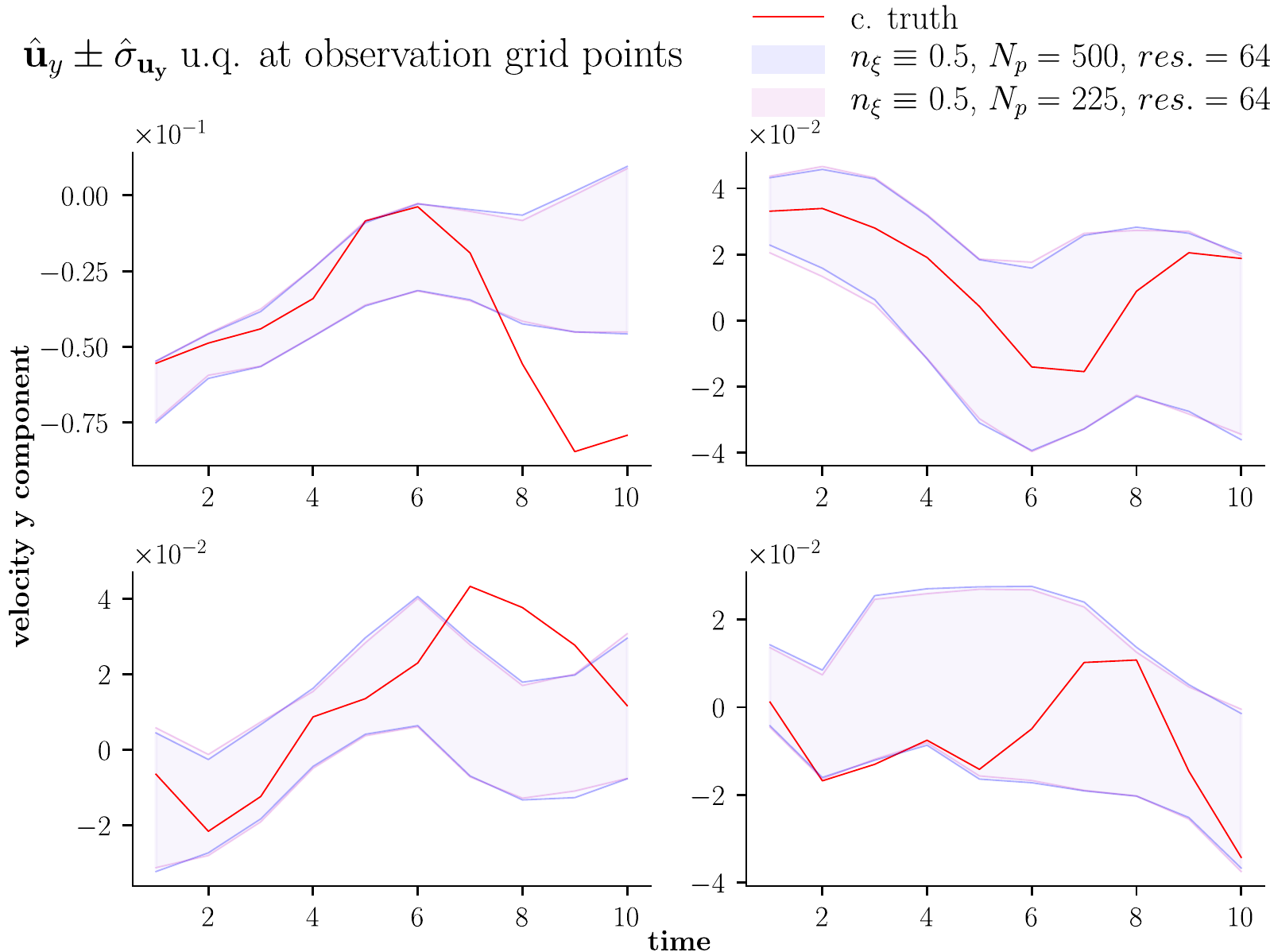}
                        \par\end{center}%
        \end{minipage}
        
        \caption{\label{fig:mike-cullen-u} Uncertainty quantification plots comparing the truth with the ensemble one standard deviation region about the ensemble mean for the velocity field (top two figures for the $x$-component, bottom two figures for the $y$-component) at four interior grid points of a $4\times4$ observation grid. In each plot, the solid line represents the truth and the coloured regions represent the one standard deviation regions. In the figures on the left, for a fixed ensemble size ($N_p = 500$), we compare the spread differences at the individual observation grid points due to using a different number of EOFs: $n_{\xi}\equiv0.9$ (pink) versus $n_{\xi}\equiv0.5$ (grey). In the figures on the right, for a fixed the number of EOFs ($n_{\xi}\equiv0.5$), we compare the spreads differences at individual grid points due having a different number of particles in the ensemble: $N_p=500$ (grey) versus $N_p=225$ (pink). The results are plotted for discrete ett time values and are linearly interpolated in between times. The solid lines in all the plots start within their respective spreads, see \eqref{eq:initial_cond_deformation}. As can be seen, the spreads capture the solid lines for roughly $4$ or $5$ ett before deviating at certain grid locations. We expect differences in spread size and location but the differences shown in the plots are insignificant. See Section \ref{subsec:spde_uq_results}.}
\end{figure}


The results shown in Figures \ref{fig:mike-cullen-psi}---\ref{fig:mike-cullen-u} correspond to the truth and particles defined on the $64\times64$ coarse grid. The uncertainty quantification tests are repeated for more refined coarse grids ($128\times128$ and $256\times256$) with the number of EOFs and number of particles fixed ($n_{\xi}\equiv0.9$, $N_p=225$), to investigate the effect of mesh grid size on uncertainty quantification. The results are shown in Figure \ref{fig:mike-cullen-psi-multires} for streamfunction, Figure \ref{fig:mike-cullen-q-multires} for vorticity and \ref{fig:mike-cullen-u-multires} for velocity. The one standard deviation regions for the different grid sizes are plotted together to compare their differences. 
Note that as we make spatial refinements the coarse grained truth also changes, because the coarse graining procedure used to obtain the truth depends on the underlying coarse grid resolution.

In all plots for the multi-resolution analysis, the results show that the ensemble one standard deviation region 'converge' toward the respective coarse grained truth, and capture the truth for longer period of time as the grids get refined. In this sense, our parameterisation methodology is consistent under grid refinement. 

\begin{figure}
        \begin{minipage}[t]{0.49\textwidth}%
                \begin{center}
                        \includegraphics[width=1\textwidth]{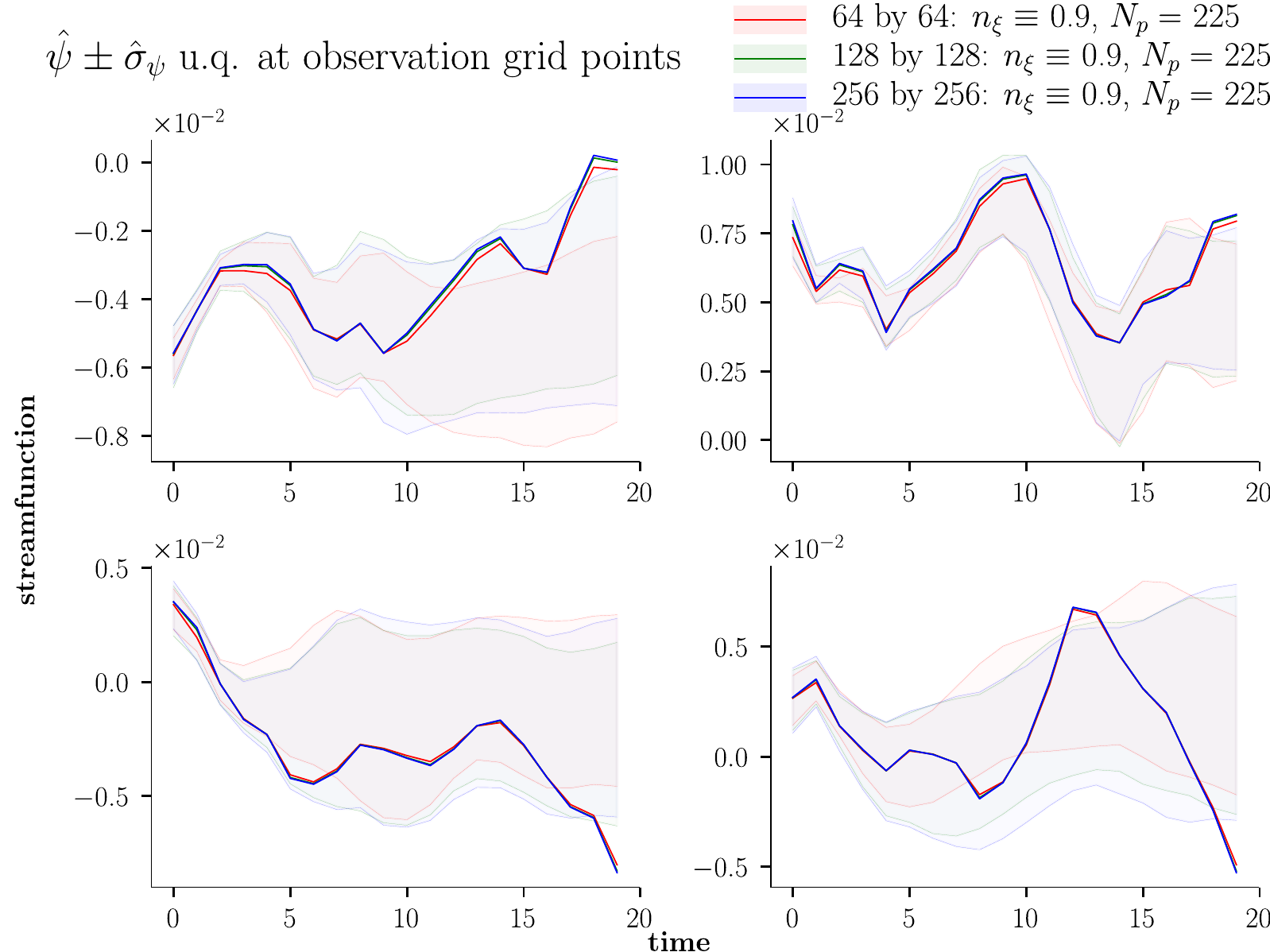}
                        \par\end{center}%
        \end{minipage}\hfill{}%
        \begin{minipage}[t]{0.49\textwidth}%
                \begin{center}
                        \includegraphics[width=1\textwidth]{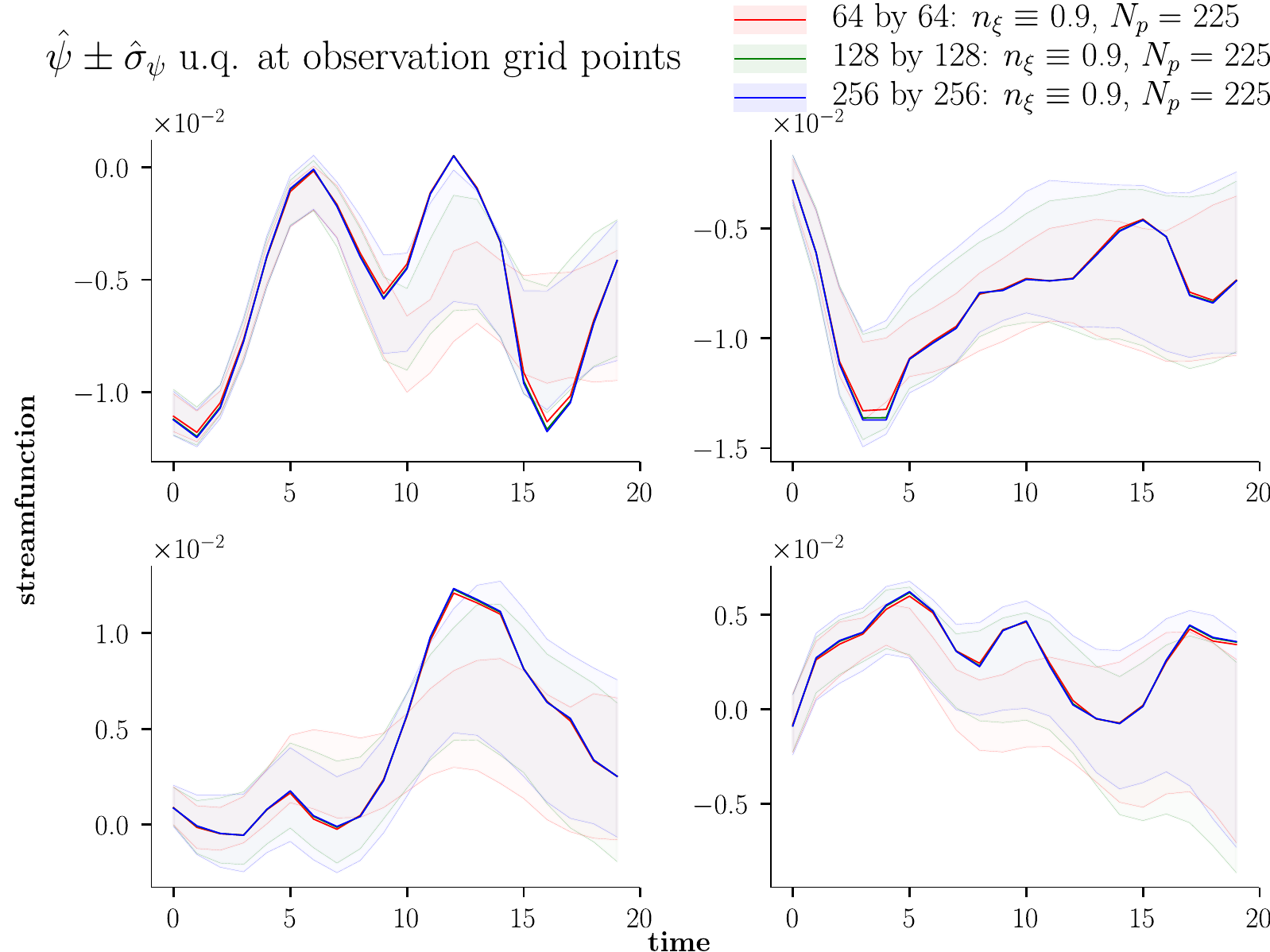}
                        \par\end{center}%
        \end{minipage}
        
        \caption{\label{fig:mike-cullen-psi-multires} Uncertainty quantification comparing the truth with the ensemble one standard deviation region about the ensemble mean for the streamfunction. The left and right hand figures each contain four plots. Each plot corresponds to a fixed grid point on a observation grid of size $4\times4$. For a fixed  number of EOFs $n_{\xi}\equiv0.9$ and a fixed number of particles in the ensemble $N_p=225$, each plot shows the truths and spreads corresponding to three coarse resolutions:  $64\times64,$ $128\times128$ and $256\times256.$. The solid lines represent the truth and the coloured regions represent the one standard deviation regions. Recall that the truth depends on the coarse grid size, see Figure \ref{fig:pde_solution_t0}. The red line and spread correspond to the $64\times64$ coarse grid. The green line and spread correspond to the $128\times128$ coarse grid. The blue line and spread correspond to the $256\times256$ coarse grid. The results are plotted for discrete ett time values and are linearly interpolated in between times. We see that as the coarse grid resolution gets refined, the one standard deviation region stays closer to the truth for longer time periods. This confirms that the parameterisation methodology is consistent under grid refinement. See Section \ref{subsec:Methodology} and \ref{subsec:spde_uq_results}}.
\end{figure}

\begin{figure}
        \begin{minipage}[t]{0.49\textwidth}%
                \begin{center}
                        \includegraphics[width=1\textwidth]{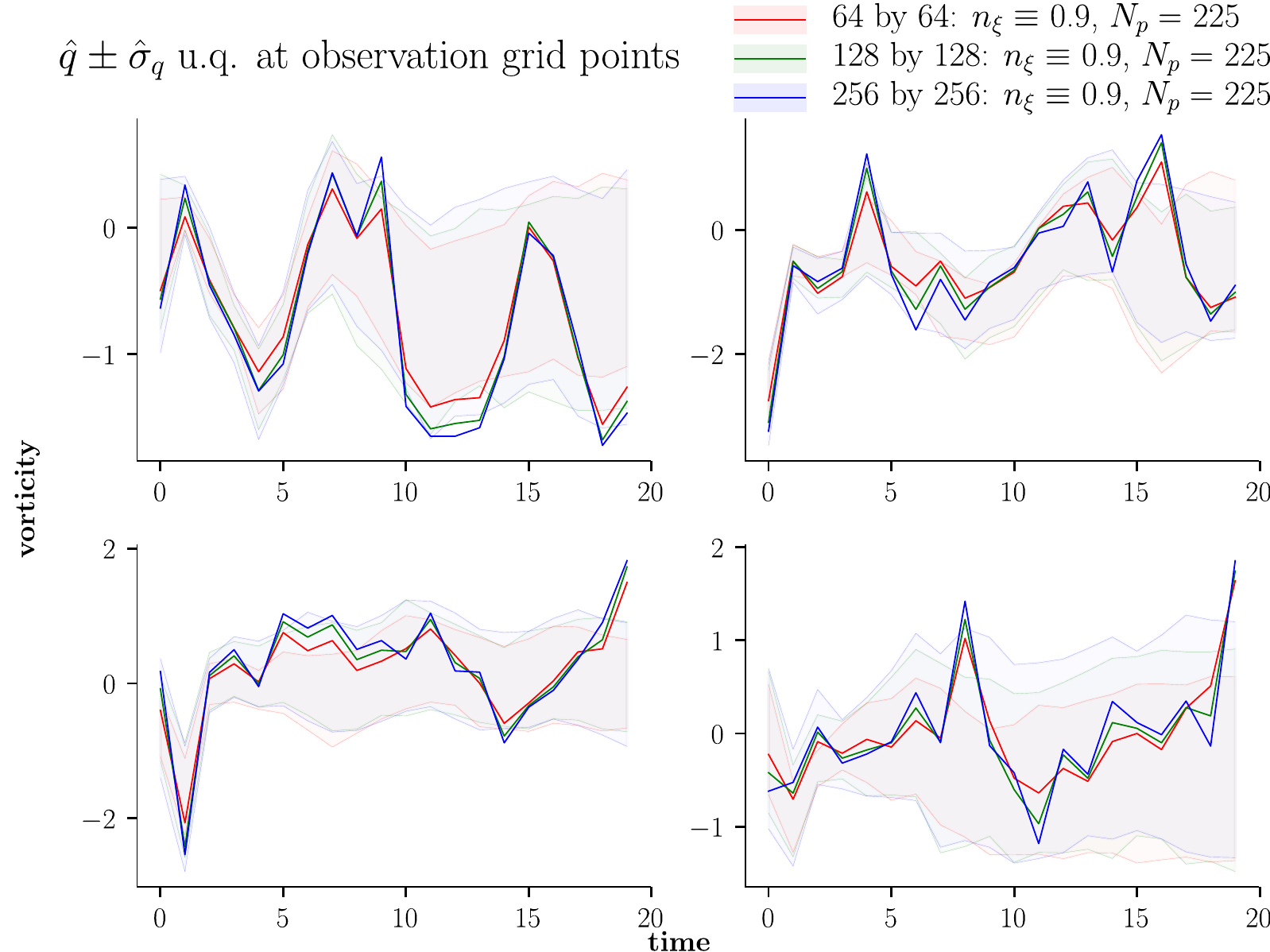}
                        \par\end{center}%
        \end{minipage}\hfill{}%
        \begin{minipage}[t]{0.49\textwidth}%
                \begin{center}
                        \includegraphics[width=1\textwidth]{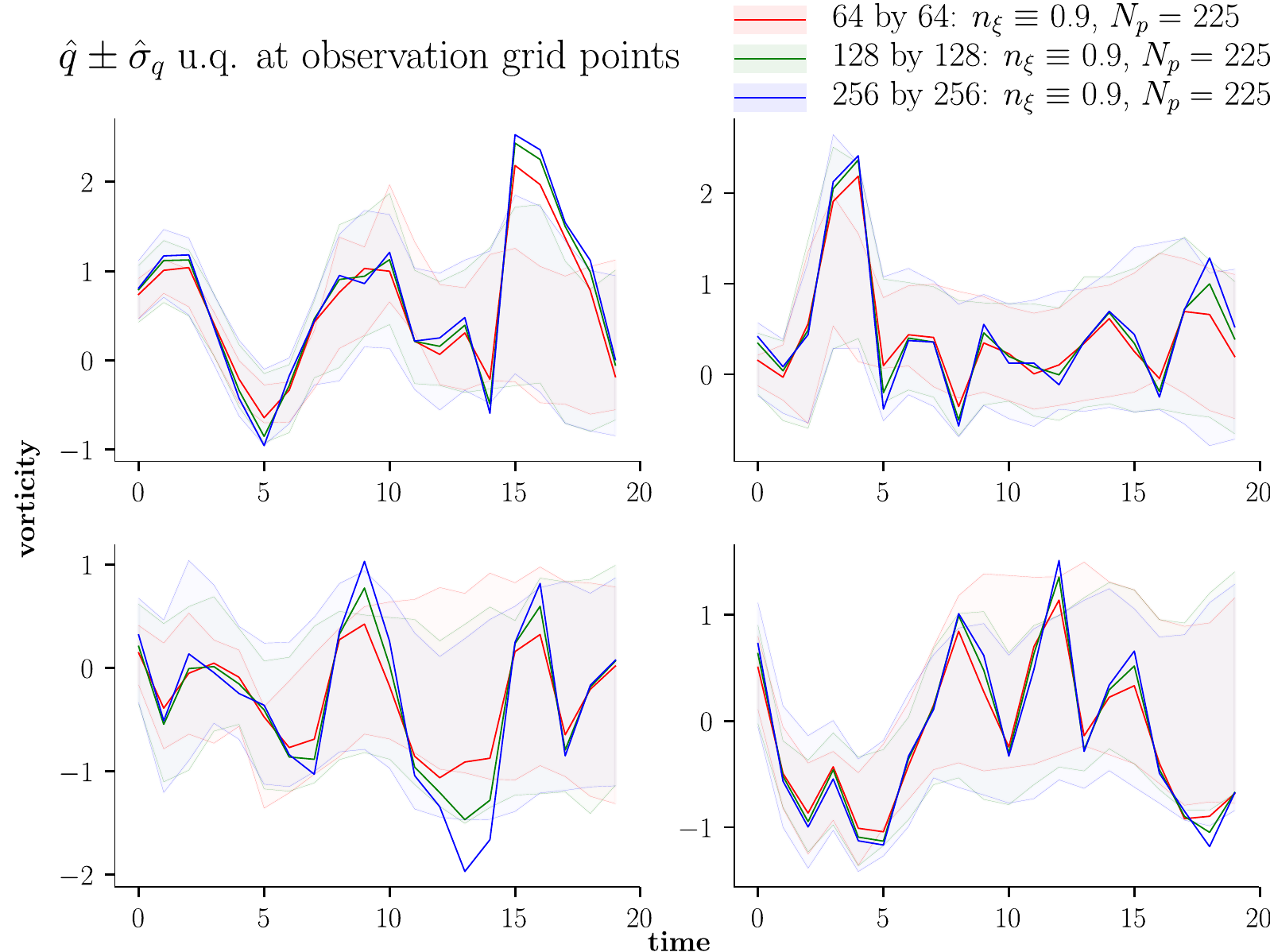}
                        \par\end{center}%
        \end{minipage}
        
        \caption{\label{fig:mike-cullen-q-multires}
                Uncertainty quantification comparing the truth with the ensemble one standard deviation region about the ensemble mean for the vorticity. The left and right hand figures each contain four plots. Each plot corresponds to a fixed grid point on a observation grid of size $4\times4$. For a fixed  number of EOFs $n_{\xi}\equiv0.9$ and a fixed number of particles in the ensemble $N_p=225$, each plot shows the truths and spreads corresponding to three coarse resolutions:  $64\times64,$ $128\times128$ and $256\times256.$. The solid lines represent the truth and the coloured regions represent the one standard deviation regions. Recall that the truth depends on the coarse grid size, see Figure \ref{fig:pde_solution_t0}. The red line and spread correspond to the $64\times64$ coarse grid. The green line and spread correspond to the $128\times128$ coarse grid. The blue line and spread correspond to the $256\times256$ coarse grid. The results are plotted for discrete ett time values and are linearly interpolated in between times. We see that as the coarse grid resolution gets refined, the one standard deviation region stays closer to the truth for longer time periods. This confirms that the parameterisation methodology is consistent under refinement. See Section \ref{subsec:Methodology} and \ref{subsec:spde_uq_results}.
        }
\end{figure}

\begin{figure}
        \begin{minipage}[t]{0.49\textwidth}%
                \begin{center}
                        \includegraphics[width=1\textwidth]{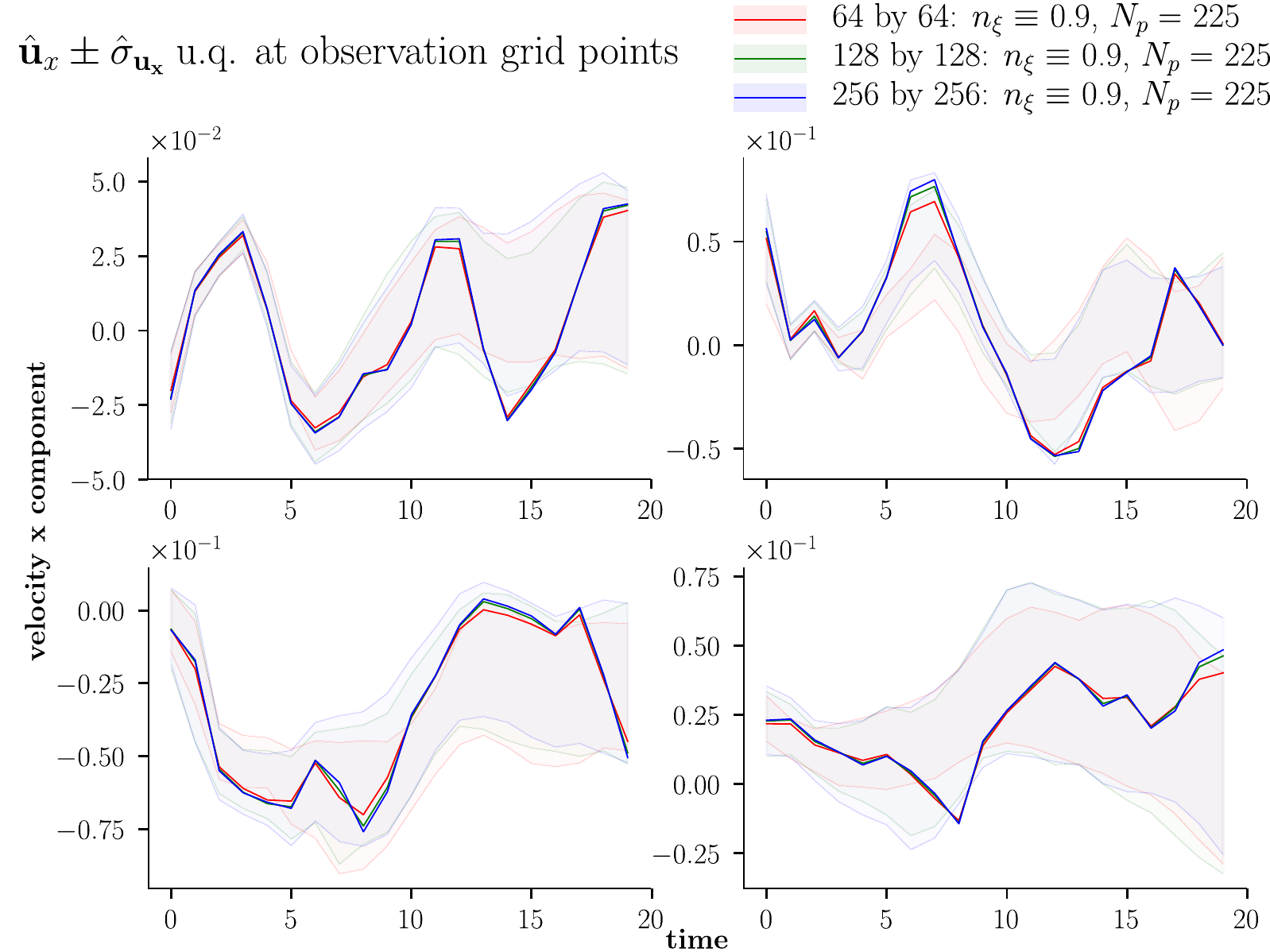}
                        \par\end{center}%
        \end{minipage}\hfill{}%
        \begin{minipage}[t]{0.49\textwidth}%
                \begin{center}
                        \includegraphics[width=1\textwidth]{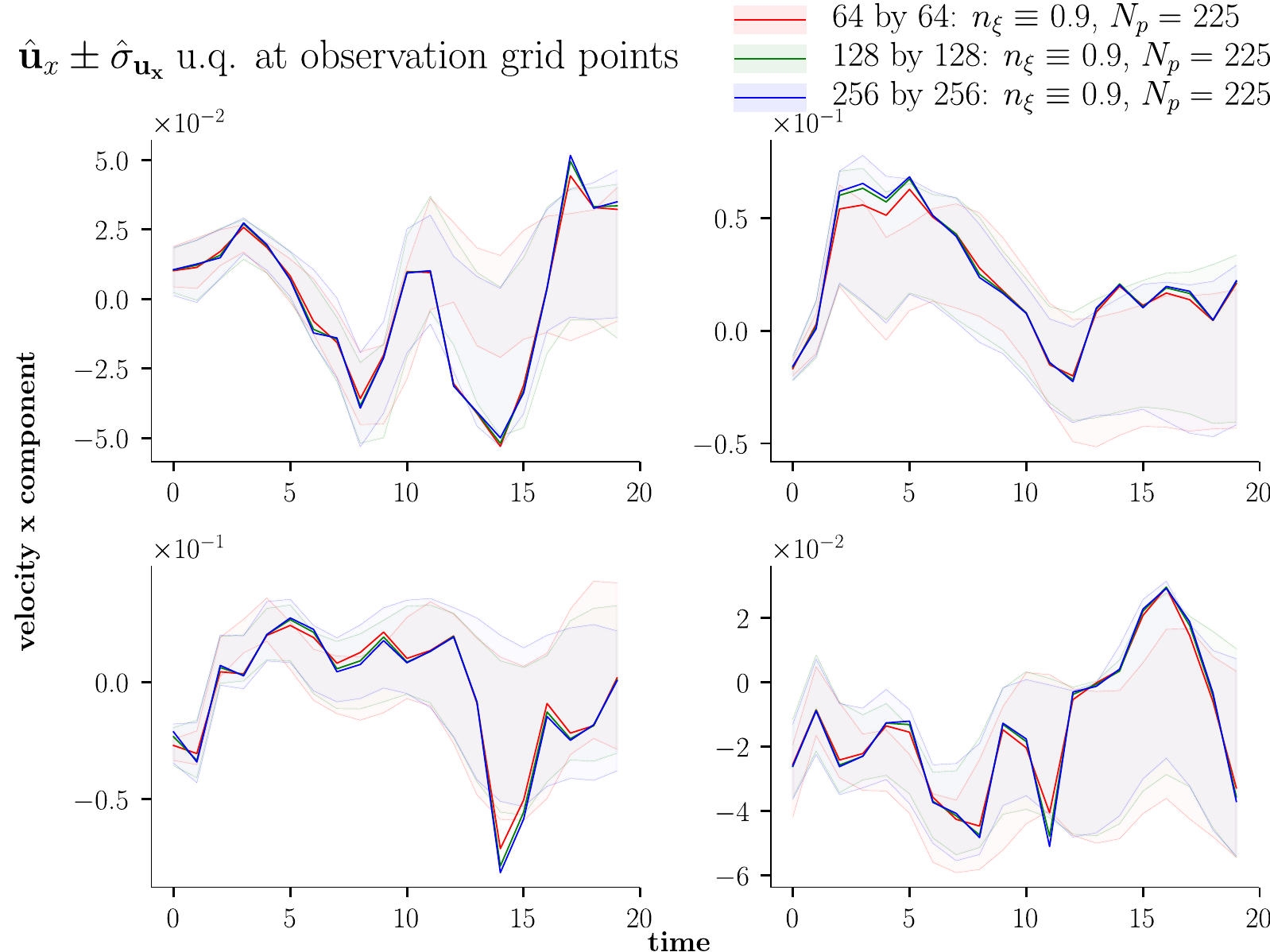}
                        \par\end{center}%
        \end{minipage}
        
        \smallskip{}
        \begin{minipage}[t]{0.49\textwidth}%
                \begin{center}
                        \includegraphics[width=1\textwidth]{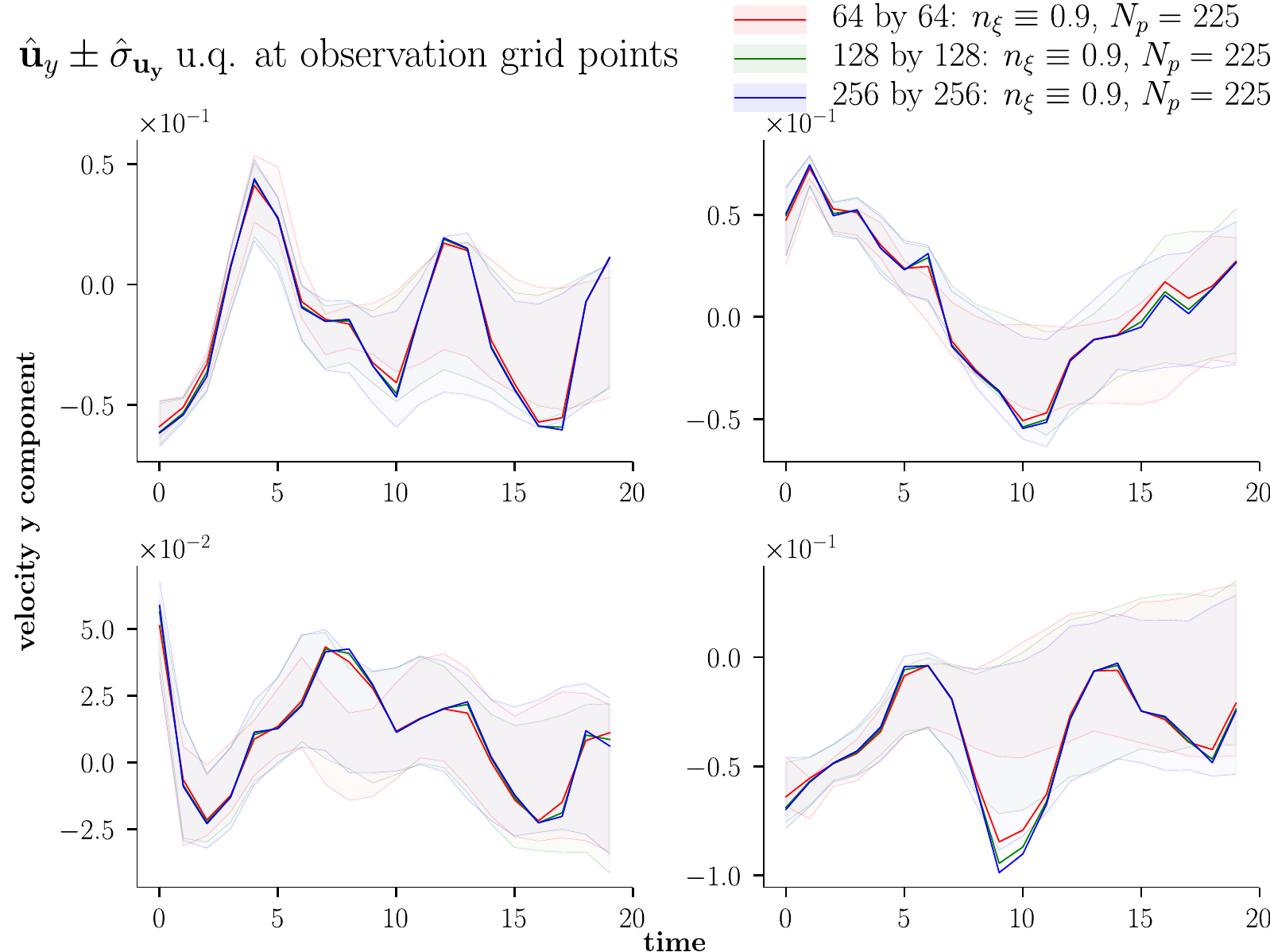}
                        \par\end{center}%
        \end{minipage}\hfill{}%
        \begin{minipage}[t]{0.49\textwidth}%
                \begin{center}
                        \includegraphics[width=1\textwidth]{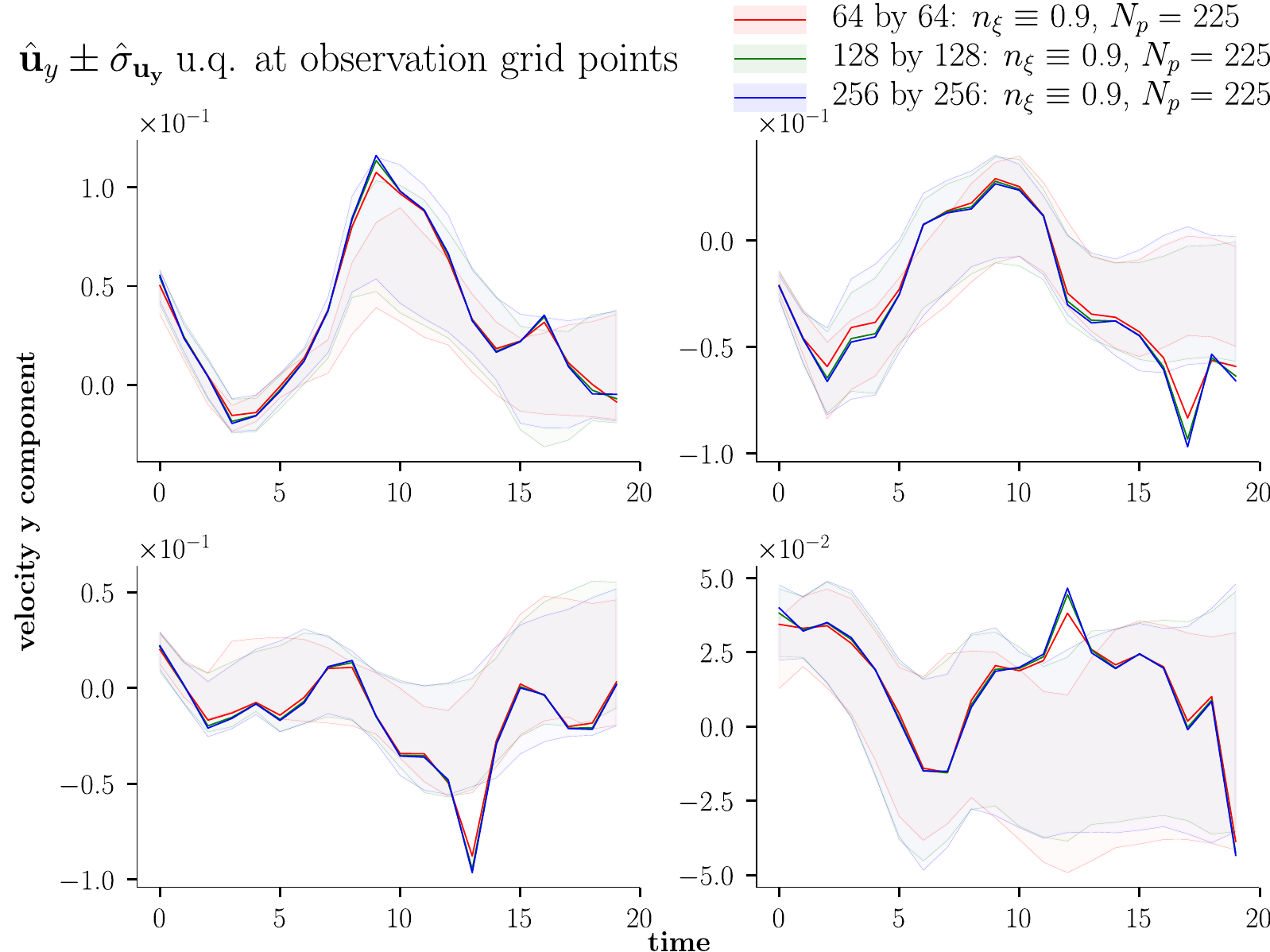}
                        \par\end{center}%
        \end{minipage}
        
        \caption{\label{fig:mike-cullen-u-multires} Uncertainty quantification comparing the truth with the ensemble one standard deviation region about the ensemble mean for the velocity, separate into the two components. The top two figures show plots corresponding to the $x$-component. The bottom two figures show plots corresponding to the $y$-component. For each component, the left and right hand figures each contain four plots. Each plot corresponds to a fixed grid point on a observation grid of size $4\times4$. For a fixed  number of EOFs $n_{\xi}\equiv0.9$ and a fixed number of particles in the ensemble $N_p=225$, each plot shows the truths and spreads corresponding to three coarse resolutions:  $64\times64,$ $128\times128$ and $256\times256.$. The solid lines represent the truth and the coloured regions represent the one standard deviation regions. Recall that the truth depends on the coarse grid size, see Figure \ref{fig:pde_solution_t0}. The red line and spread correspond to the $64\times64$ coarse grid. The green line and spread correspond to the $128\times128$ coarse grid. The blue line and spread correspond to the $256\times256$ coarse grid. The results are plotted for discrete ett time values and are linearly interpolated in between times. We see that as the coarse grid resolution gets refined, the one standard deviation region stays closer to the truth for longer time periods. This confirms that the parameterisation methodology is consistent under grid refinement. See Section \ref{subsec:Methodology} and \ref{subsec:spde_uq_results}}.
\end{figure}

We also investigate the relative $L^{2}$ distance between the SPDE ensemble and the coarse grained truth defined by
\begin{equation}
d\left(\left\{ \hat{q}^{i},i=1,\dots,N_{p}\right\} ,\omega,t\right):=\min_{i\in\left\{ 1,\dots,N_{p}\right\} }\frac{\left\Vert \omega(t)-\hat{q}^{i}\left(t\right)\right\Vert _{L^{2}\left(\mathcal{D}\right)}}{\left\Vert \omega(t)\right\Vert _{L^{2}\left(\mathcal{D}\right)}}\label{eq:min_l2_distance}
\end{equation}
for vorticity, and similarly defined for the streamfunction and velocity. We compute the results for different combinations of values of $n_{\xi},$ $N_{p},$ and the three coarse grids we are considering. Unlike the uncertainty quantification tests, where the analysis is done at individual grid points, here we consider the error between the truth and the particles over the whole domain $\mathcal{D}$. The results are shown in Figure \ref{fig:min_l2_q} for the vorticity, Figure \ref{fig:min_l2_psi} for the streamfunction and Figure \ref{fig:min_l2_u}
for the velocity. 

\begin{figure}
        \begin{minipage}[t]{1\textwidth}%
                \begin{center}
                        \includegraphics[width=1\textwidth]{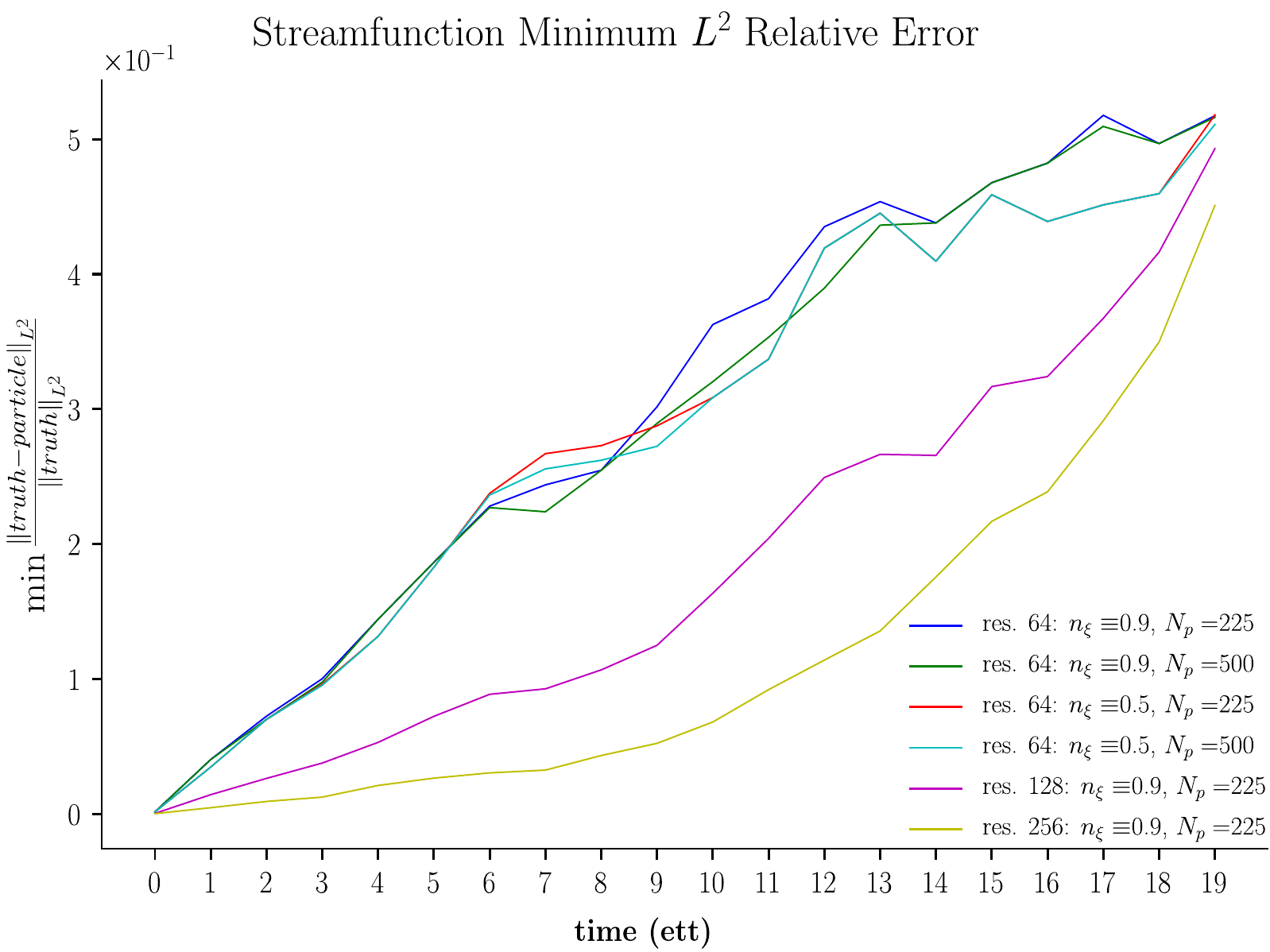}
                        \par\end{center}
                \caption{\label{fig:min_l2_psi}Plots of the relative $L^{2}$ distance between
                        SPDE ensemble and truth for the streamfunction $\psi$, starting at time $t_0$ for $20$ ett. The values are plotted at discrete time points, and are linearly interpolated in between times. The relative $L^2$ distance is defined in \eqref{eq:min_l2_distance}.      Each individual plot corresponds to different combinations of values of $n_{\xi}$, $N_{p}$ and coarse gird resolutions ($64\times64$, $128\times128$ and $256\times256$). Blue is for $n_{\xi}\equiv0.9$, $N_p = 225$ and resolution $64\times64$. Green is for $n_{\xi}\equiv0.9$, $N_p=500$, and resolution $64\times64$. Red is for $n_{\xi}\equiv0.5$, $N_p = 225$ and resolution $64\times64$. Cyan is for $n_{\xi}\equiv0.5$, $N_p=500$ and resolution $64\times64$. Magenta is for $n_{\xi}\equiv0.9$, $N_p=225$ and resolution $128\times128$. Yellow is for $n_{\xi}\equiv0.9$, $N_p=225$ and resolution $256\times256$. For fixed coarse grid resolution, changing the number of EOFs used to capture $90\%$ or $50\%$ variance and/or changing the ensemble size to $500$ or $225$ do not seem to impact on the relative $L^2$ distance between the ensemble and the truth. Changing the resolution of the coarse grid reduces the $L^2$ distance much more than the other parameters. Regardless of parameter combinations, the steady increase in the $L^2$ distance indicates the growing uncertainty in whether the ensemble captures the truth or not. See Section \ref{subsec:spde_uq_results}.}
        \end{minipage}
\end{figure}    
        
\begin{figure}
        \begin{minipage}[t]{1\textwidth}%
                \begin{center}
                        \includegraphics[width=1\textwidth]{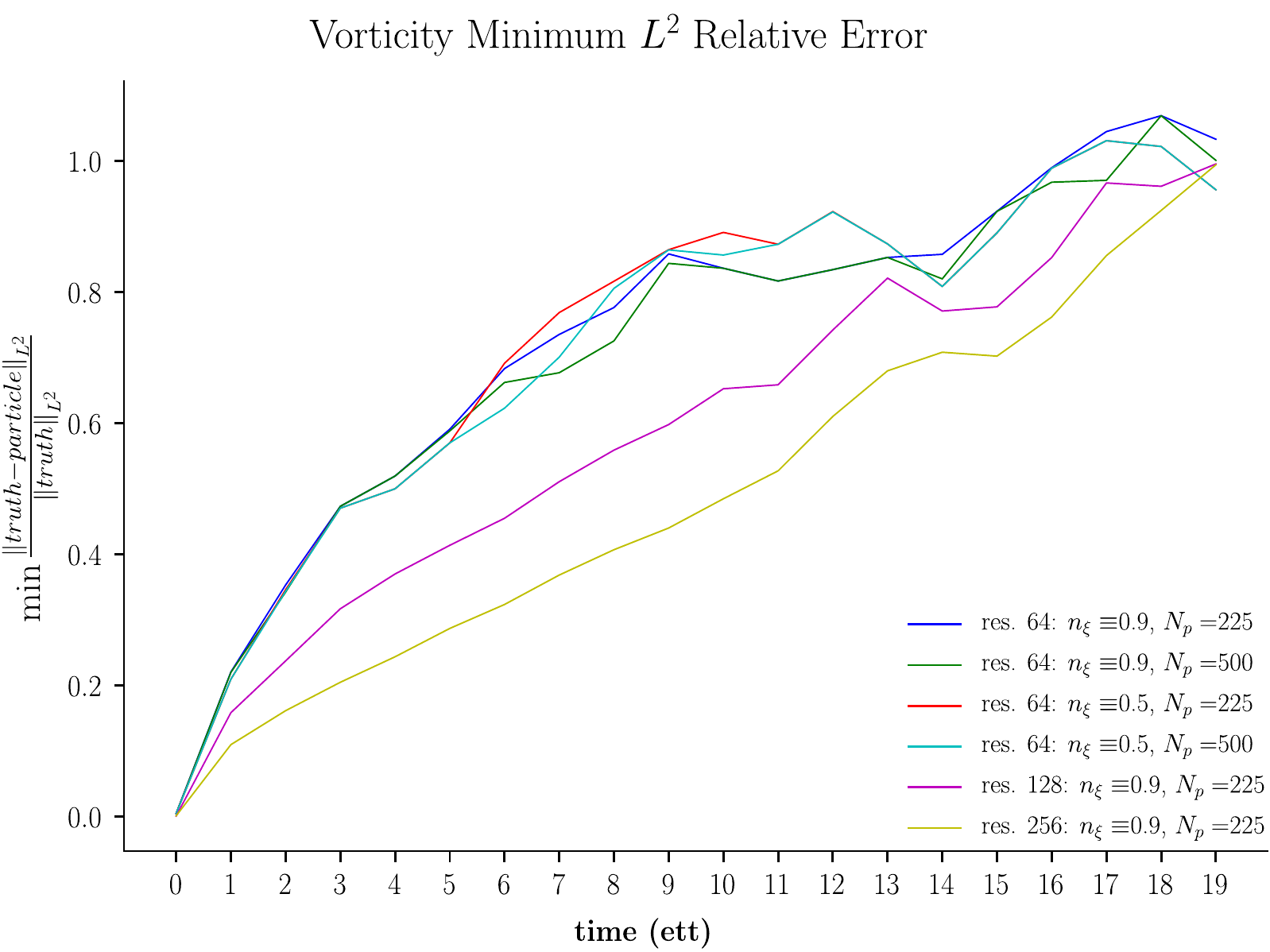}
                        \par\end{center}
                \caption{\label{fig:min_l2_q}Plots of the relative $L^{2}$ distance between
                        SPDE ensemble and truth for the vorticity, starting at time $t_0$ for $20$ ett. The values are plotted at discrete time points, and are linearly interpolated in between times. The relative $L^2$ distance is defined in \eqref{eq:min_l2_distance}.       Each individual plot corresponds to different combinations of values of $n_{\xi}$, $N_{p}$ and coarse gird resolutions ($64\times64$, $128\times128$ and $256\times256$). Blue is for $n_{\xi}\equiv0.9$, $N_p = 225$ and resolution $64\times64$. Green is for $n_{\xi}\equiv0.9$, $N_p=500$, and resolution $64\times64$. Red is for $n_{\xi}\equiv0.5$, $N_p = 225$ and resolution $64\times64$. Cyan is for $n_{\xi}\equiv0.5$, $N_p=500$ and resolution $64\times64$. Magenta is for $n_{\xi}\equiv0.9$, $N_p=225$ and resolution $128\times128$. Yellow is for $n_{\xi}\equiv0.9$, $N_p=225$ and resolution $256\times256$. For fixed coarse grid resolution, changing the number of EOFs used to capture $90\%$ or $50\%$ variance and/or changing the ensemble size to $500$ or $225$ do not seem to impact on the relative $L^2$ distance between the ensemble and the truth. Changing the resolution of the coarse grid reduces the $L^2$ distance much more than the other parameters. Regardless of parameter combinations, the steady increase in the $L^2$ distance indicates the growing uncertainty in whether the ensemble captures the truth or not. See Section \ref{subsec:spde_uq_results}.}
        \end{minipage}
\end{figure}

\begin{figure}
        \begin{minipage}[t]{1\textwidth}%
                \begin{center}
                        \includegraphics[width=1\textwidth]{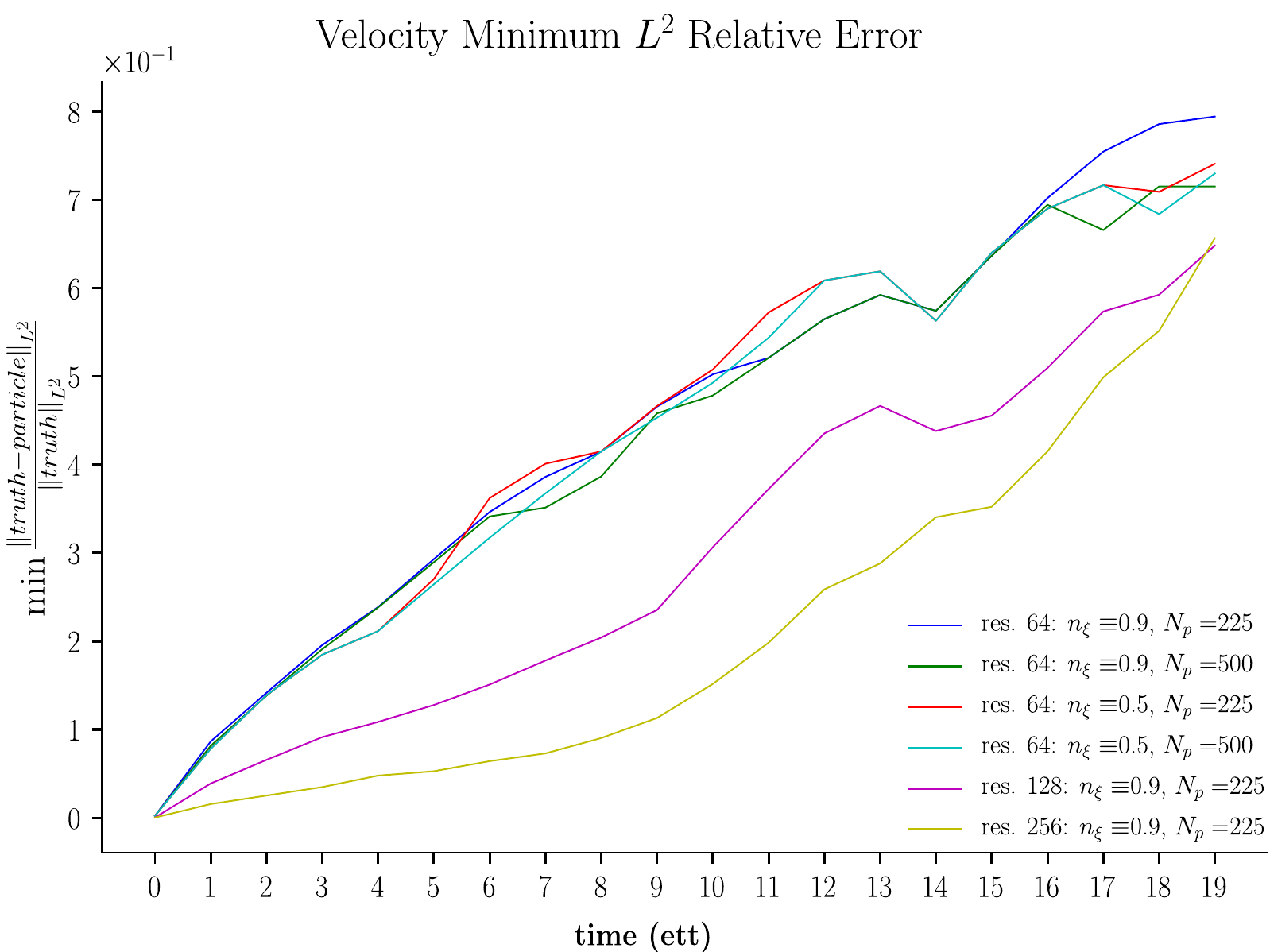}
                        \par\end{center}
                \caption{\label{fig:min_l2_u}Plots of the relative $L^{2}$ distance between
                        SPDE ensemble and truth for the velocity, starting at time $t_0$ for $20$ ett. The values are plotted at discrete time points, and are linearly interpolated in between times. The relative $L^2$ distance is defined in \eqref{eq:min_l2_distance}.       Each individual plot corresponds to different combinations of values of $n_{\xi}$, $N_{p}$ and coarse gird resolutions ($64\times64$, $128\times128$ and $256\times256$). Blue is for $n_{\xi}\equiv0.9$, $N_p = 225$ and resolution $64\times64$. Green is for $n_{\xi}\equiv0.9$, $N_p=500$, and resolution $64\times64$. Red is for $n_{\xi}\equiv0.5$, $N_p = 225$ and resolution $64\times64$. Cyan is for $n_{\xi}\equiv0.5$, $N_p=500$ and resolution $64\times64$. Magenta is for $n_{\xi}\equiv0.9$, $N_p=225$ and resolution $128\times128$. Yellow is for $n_{\xi}\equiv0.9$, $N_p=225$ and resolution $256\times256$. For fixed coarse grid resolution, changing the number of EOFs used to capture $90\%$ or $50\%$ variance and/or changing the ensemble size to $500$ or $225$ do not seem to impact on the relative $L^2$ distance between the ensemble and the truth. Changing the resolution of the coarse grid reduces the $L^2$ distance much more than the other parameters. Regardless of parameter combinations, the steady increase in the $L^2$ distance indicates the growing uncertainty in whether the ensemble captures the truth or not. See Section \ref{subsec:spde_uq_results}. }
        \end{minipage}
\end{figure}

In these figures, we see that the $L^{2}$ error between the ensemble and the coarse grained truth for each parameter set increases over time. The increase in error is much slower initially for the higher resolutions. This again gives us confidence in our parameterisation. Whilst the error remains small initially, also indicated by the uncertainty quantification
results, its increase can be corrected for using data assimilation techniques to incorporate observation data. This is part of our on going research. 

\subsubsection{Additional statistical tests \label{subsec:additional_stats_results}}

The Lie transport noise is not additive (nor is it multiplicative), thus we do not expect the SPDE solutions to be Gaussian. We can visually check whether our SPDE ensembles are non-Gaussian by computing boxplots and quantile-quantile (QQ) plots at fixed times. In a QQ plot, quantiles of two probability distributions are plotted against each other, see \cite{koch2013analysis}. If two distributions are similar, the QQ plot would show points lying on the line $y=x$. 
Figures \ref{fig:qqplot-psi}, \ref{fig:qqplot-q} and \ref{fig:qqplot-u} show the QQ plots for $\psi$, $q$ and $\vecu$ respectively at individual observation grid points at time $t_0 + 4$ ett. In many of the plots, we observe `smiles' with extremely curved tails, thus providing strong evidence to the fact that the ensembles are non-Gaussian. Non-Gaussian scaling is interpreted as intermittency in turbulence theory, see \cite{She1991intermittency}. Figures \ref{fig:boxplot-psi}, \ref{fig:boxplot-q} and \ref{fig:boxplot-u} show the boxplots for $\psi$, $q$ and $\vecu$ respectively at individual observation grid points at time points $t_0 + 1$ ett, $t_0 + 2$ ett and $t_0 + 5$ ett. The plots show non-symmetry and fat tails in the distribution of the ensembles, again providing strong evidence to the fact that the ensembles are non-Gaussian.

\begin{figure}
        \begin{minipage}[t]{0.49\textwidth}%
                \begin{center}
                        \includegraphics[width=1\textwidth]{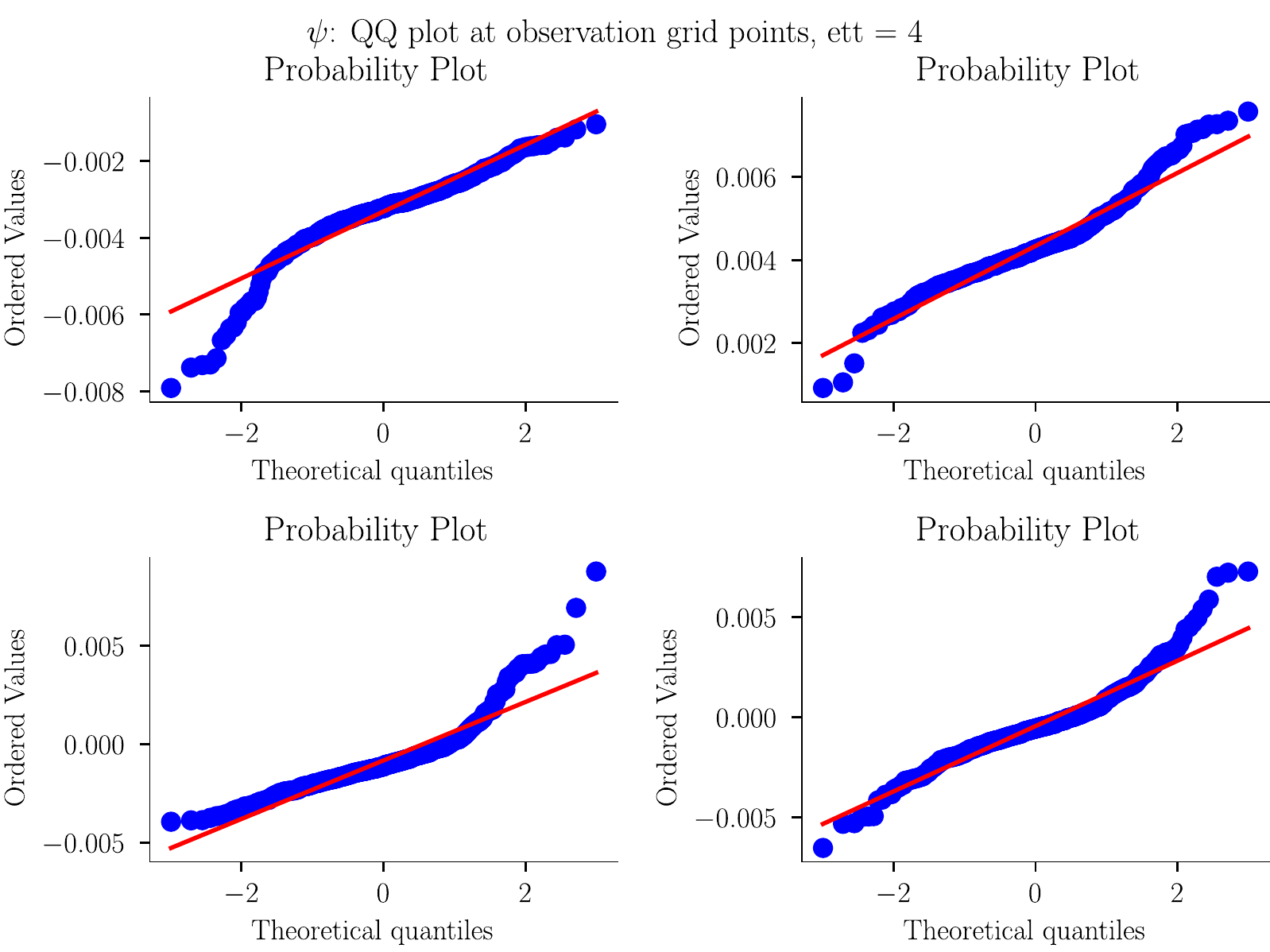}
                        \par\end{center}%
        \end{minipage}\hfill{}%
        \begin{minipage}[t]{0.49\textwidth}%
                \begin{center}
                        \includegraphics[width=1\textwidth]{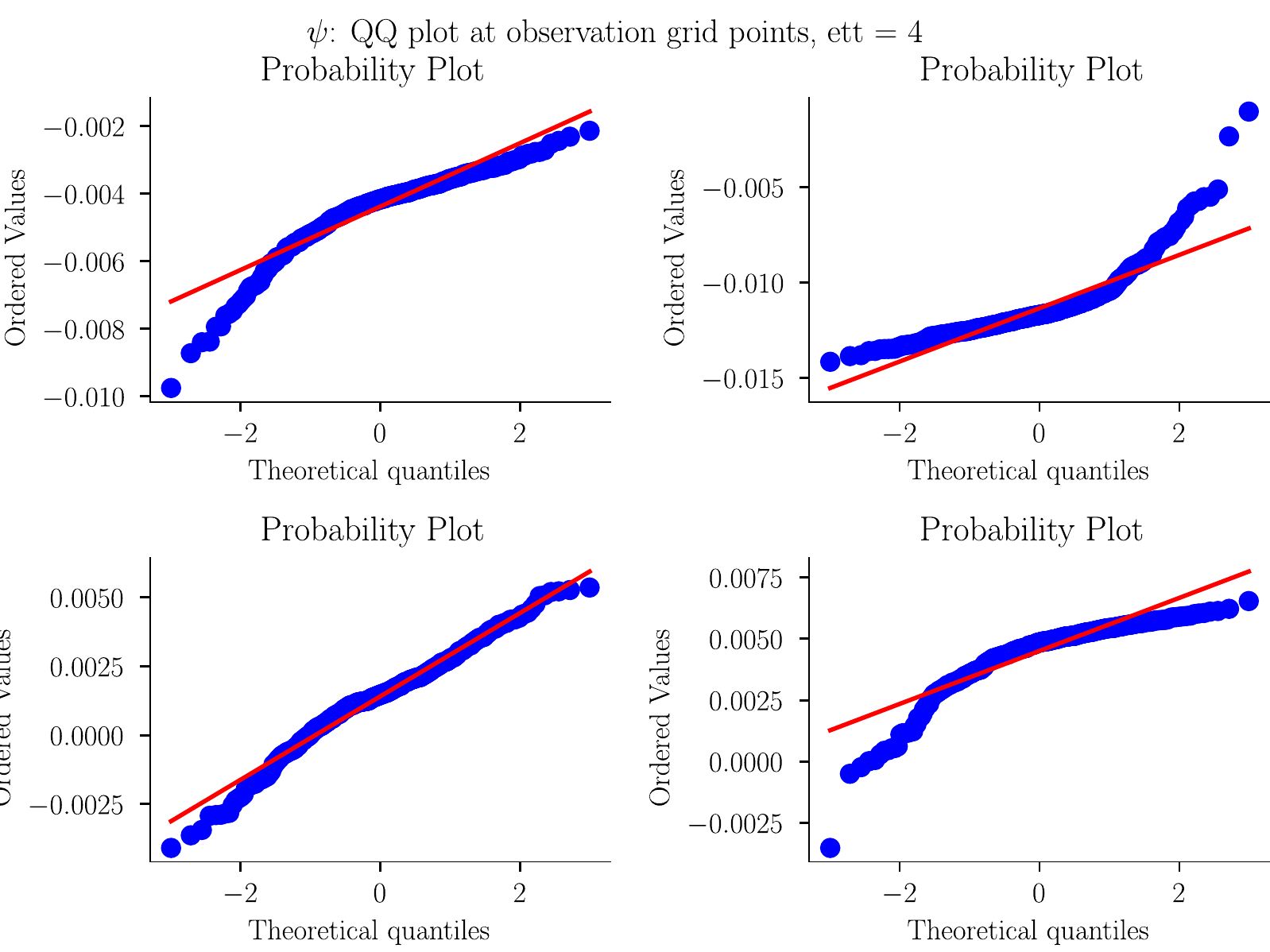}
                        \par\end{center}%
        \end{minipage}
        
        \caption{\label{fig:qqplot-psi}Quantile-Quantile (QQ) plots for the SPDE ensemble streamfunction
                at time $t=4$ ett, at eight observation grid (size $4\times4$) points, shown here in
                two separate figures of four plots each. Each plot correspond to an individual grid point. The plot compares the ensemble quantiles to the theoretical quantiles from the Gaussian distribution. If the ensemble is Gaussian, then we would see the plotted points (in blue) lying on the line $y=x$ (shown in red). The fact that the plots show fat tails give strong evidence to the fact that the ensembles are not Gaussian. See Section \ref{subsec:additional_stats_results}.}
\end{figure}

\begin{figure}
        \begin{minipage}[t]{0.49\textwidth}%
                \begin{center}
                        \includegraphics[width=1\textwidth]{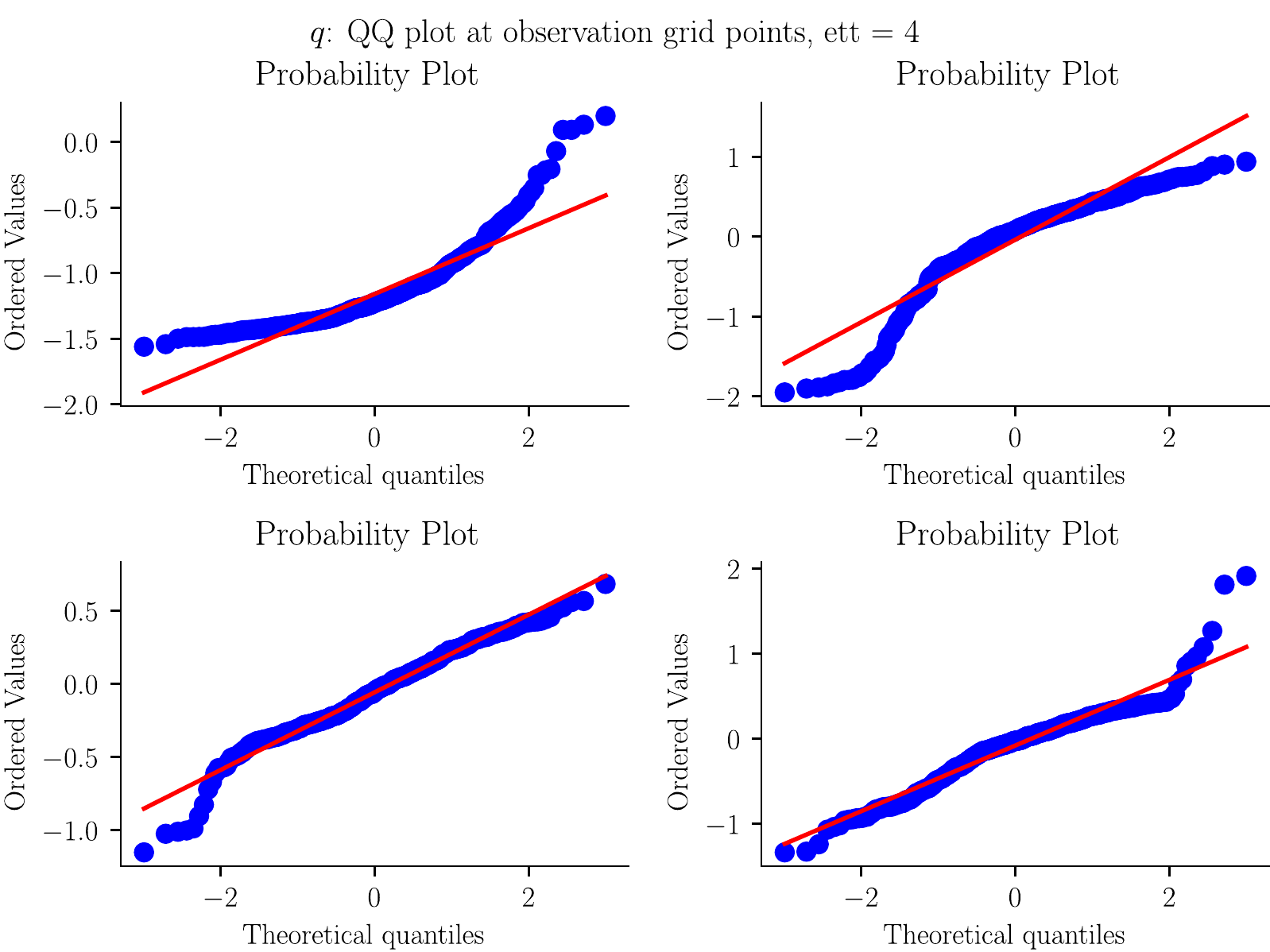}
                        \par\end{center}%
        \end{minipage}\hfill{}%
        \begin{minipage}[t]{0.49\textwidth}%
                \begin{center}
                        \includegraphics[width=1\textwidth]{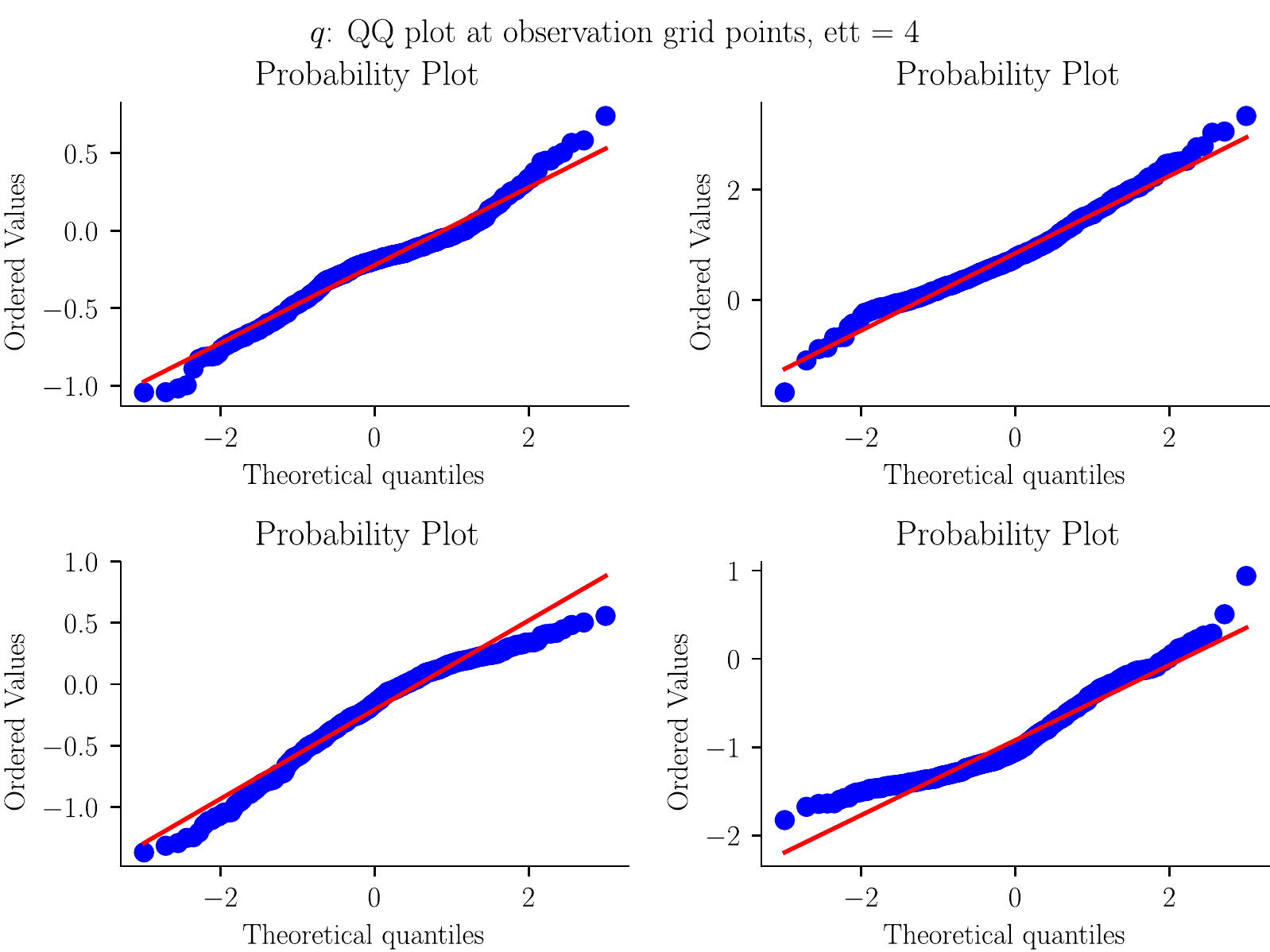}
                        \par\end{center}%
        \end{minipage}
        
        \caption{\label{fig:qqplot-q}Quantile-Quantile (QQ) plots for the SPDE ensemble vorticity
                at time $t=4$ ett, at eight observation grid (size $4\times4$) points, shown here in
                two separate figures of four plots each. Each plot correspond to an individual grid point. The plot compares the ensemble quantiles to the theoretical quantiles from the Gaussian distribution. If the ensemble is Gaussian, then we would see the plotted points (in blue) lying on the line $y=x$ (shown in red). The fact that the plots show fat tails give strong evidence to the fact that the ensembles are not Gaussian. See Section \ref{subsec:additional_stats_results}.}
\end{figure}

\begin{figure}
        \begin{minipage}[t]{0.49\textwidth}%
                \begin{center}
                        \includegraphics[width=1\textwidth]{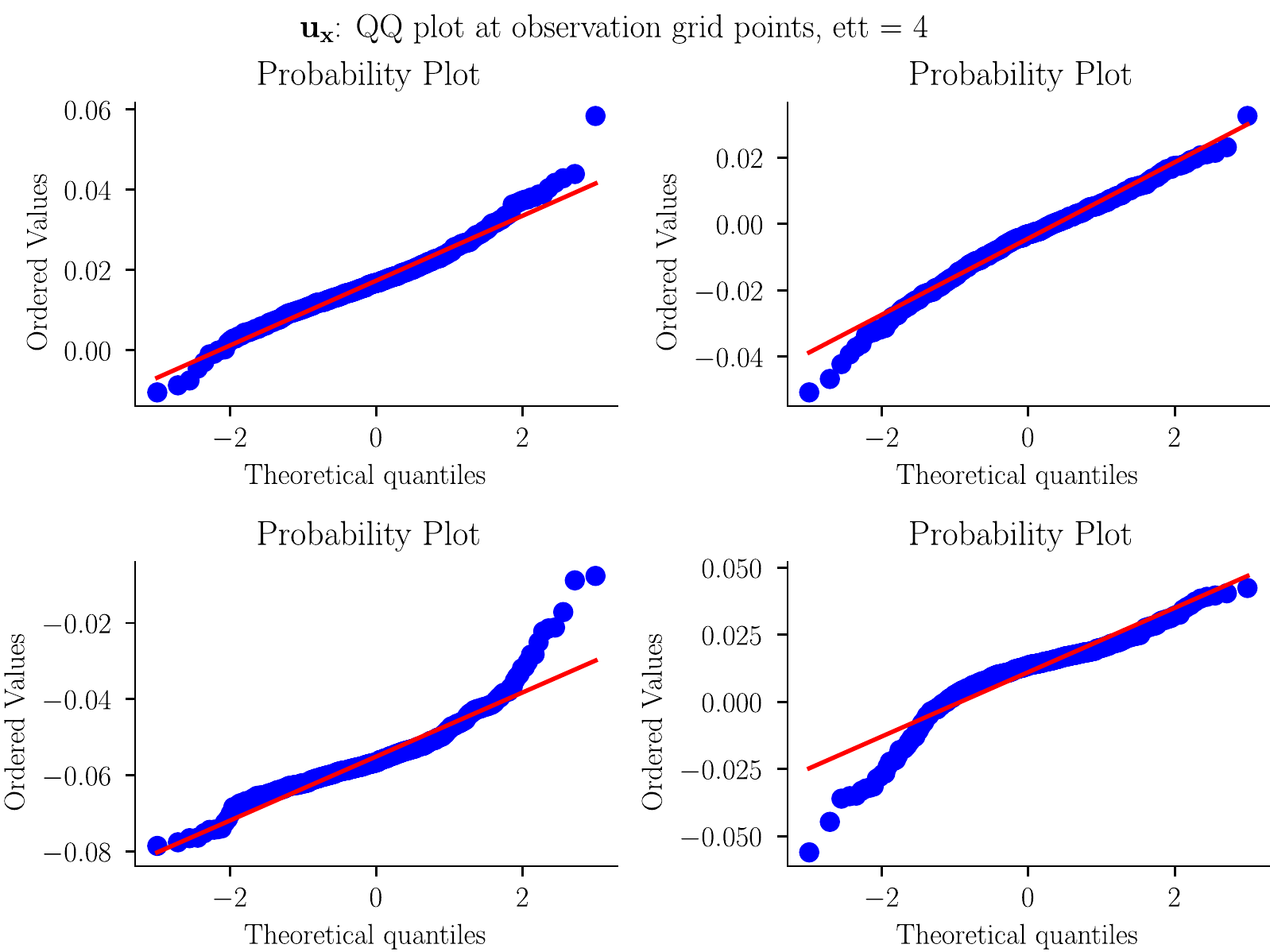}
                        \par\end{center}%
        \end{minipage}\hfill{}%
        \begin{minipage}[t]{0.49\textwidth}%
                \begin{center}
                        \includegraphics[width=1\textwidth]{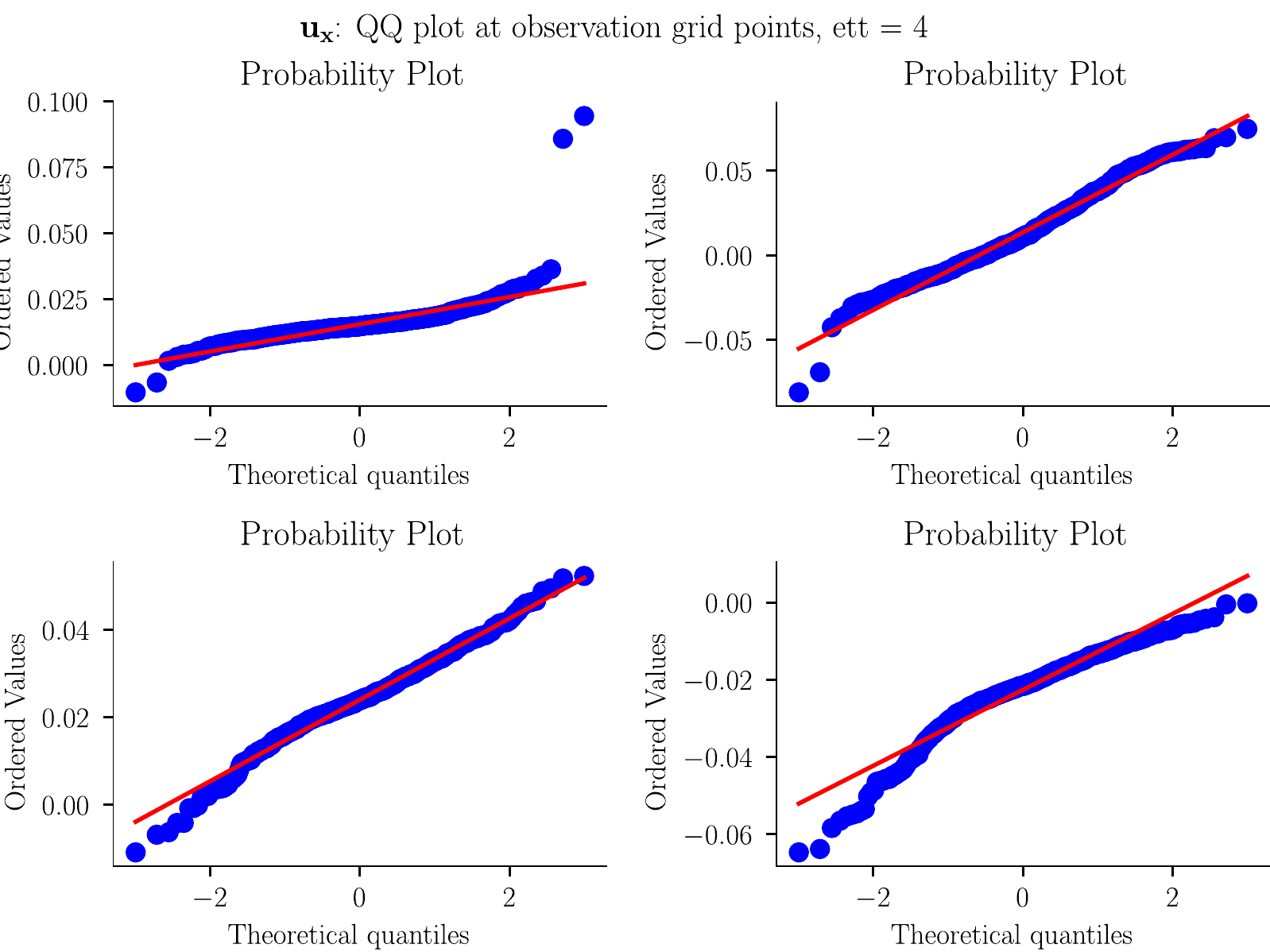}
                        \par\end{center}%
        \end{minipage}
        
        \smallskip{}
        \begin{minipage}[t]{0.49\textwidth}%
                \begin{center}
                        \includegraphics[width=1\textwidth]{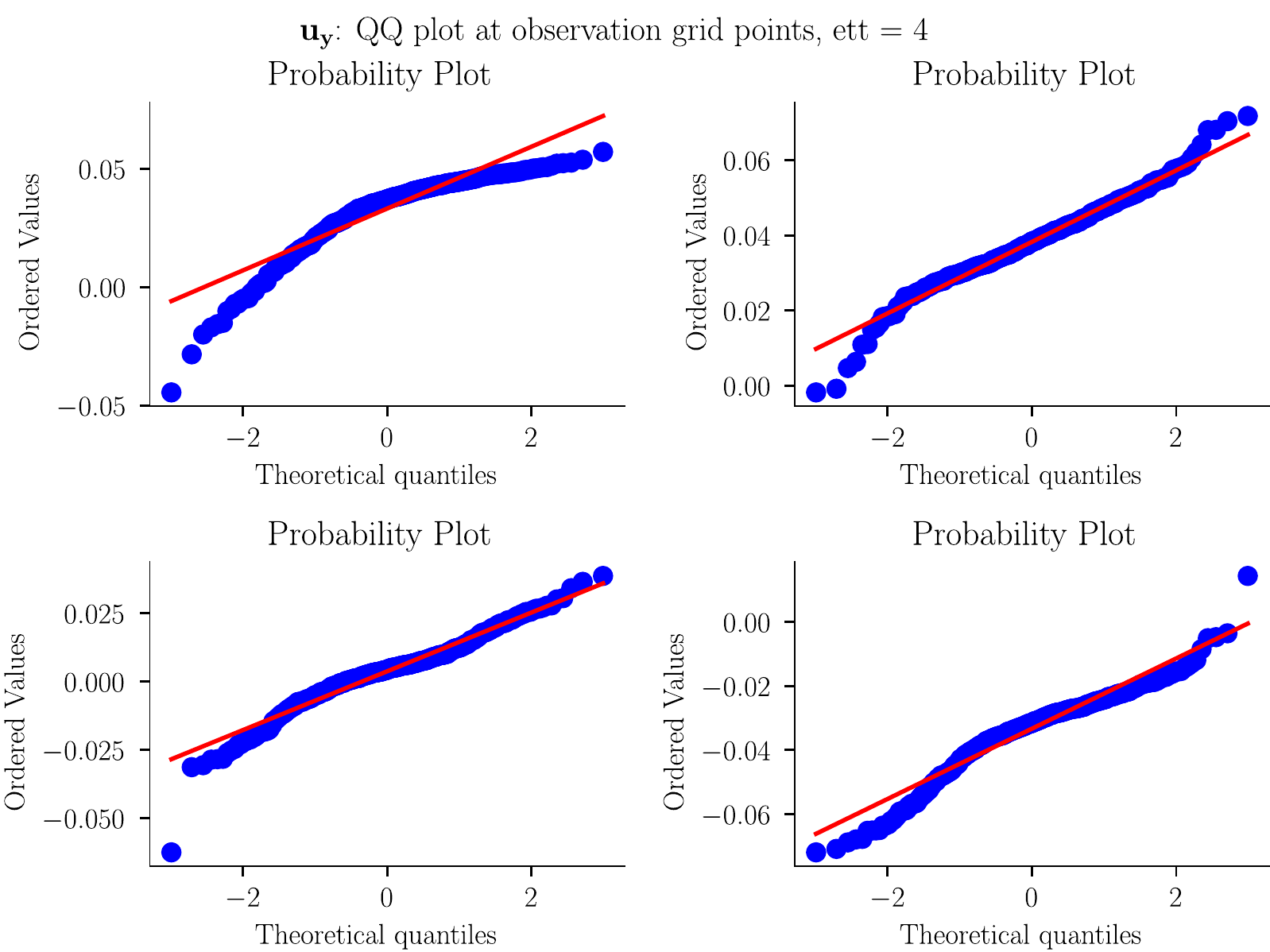}
                        \par\end{center}%
        \end{minipage}\hfill{}%
        \begin{minipage}[t]{0.49\textwidth}%
                \begin{center}
                        \includegraphics[width=1\textwidth]{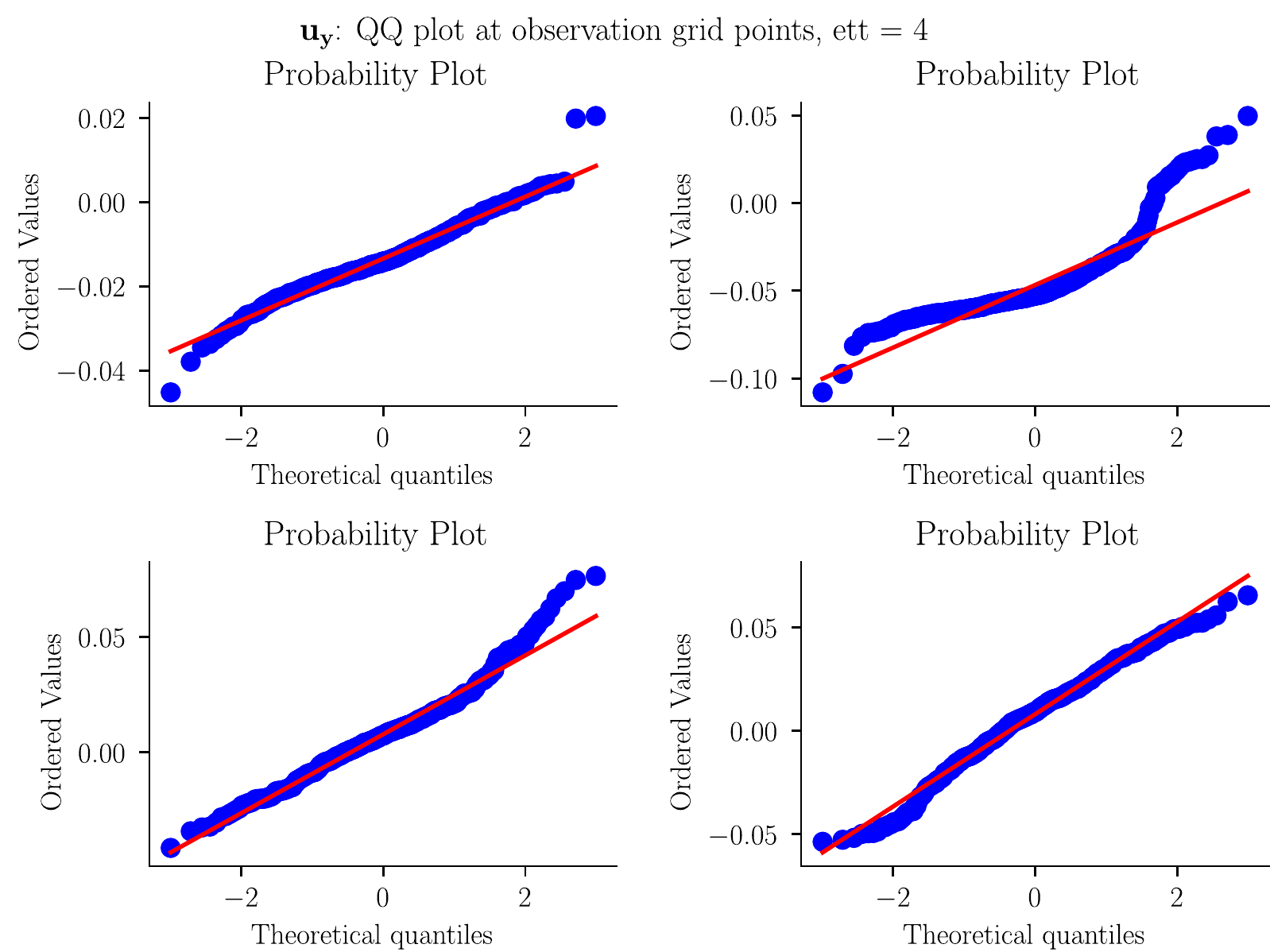}
                        \par\end{center}%
        \end{minipage}
        
        \caption{\label{fig:qqplot-u}Quantile-Quantile (QQ) plots for the SPDE ensemble velocity components
                at time $t=4$ ett, at eight observation grid (size $4\times4$) points. Top two figures show plots for the $x$-component, bottom two figures show plots for the $y$-component. Each figure contains four plots each. Each plot correspond to an individual grid point. The plot compares the ensemble quantiles to the theoretical quantiles from the Gaussian distribution. If the ensemble is Gaussian, then we would see the plotted points (in blue) lying on the line $y=x$ (shown in red). The fact that the plots show fat tails give strong evidence to the fact that the ensembles are not Gaussian. See Section \ref{subsec:additional_stats_results}.}
\end{figure}

\begin{figure}
        \begin{minipage}[t]{0.49\textwidth}%
                \begin{center}
                        \includegraphics[width=1\textwidth]{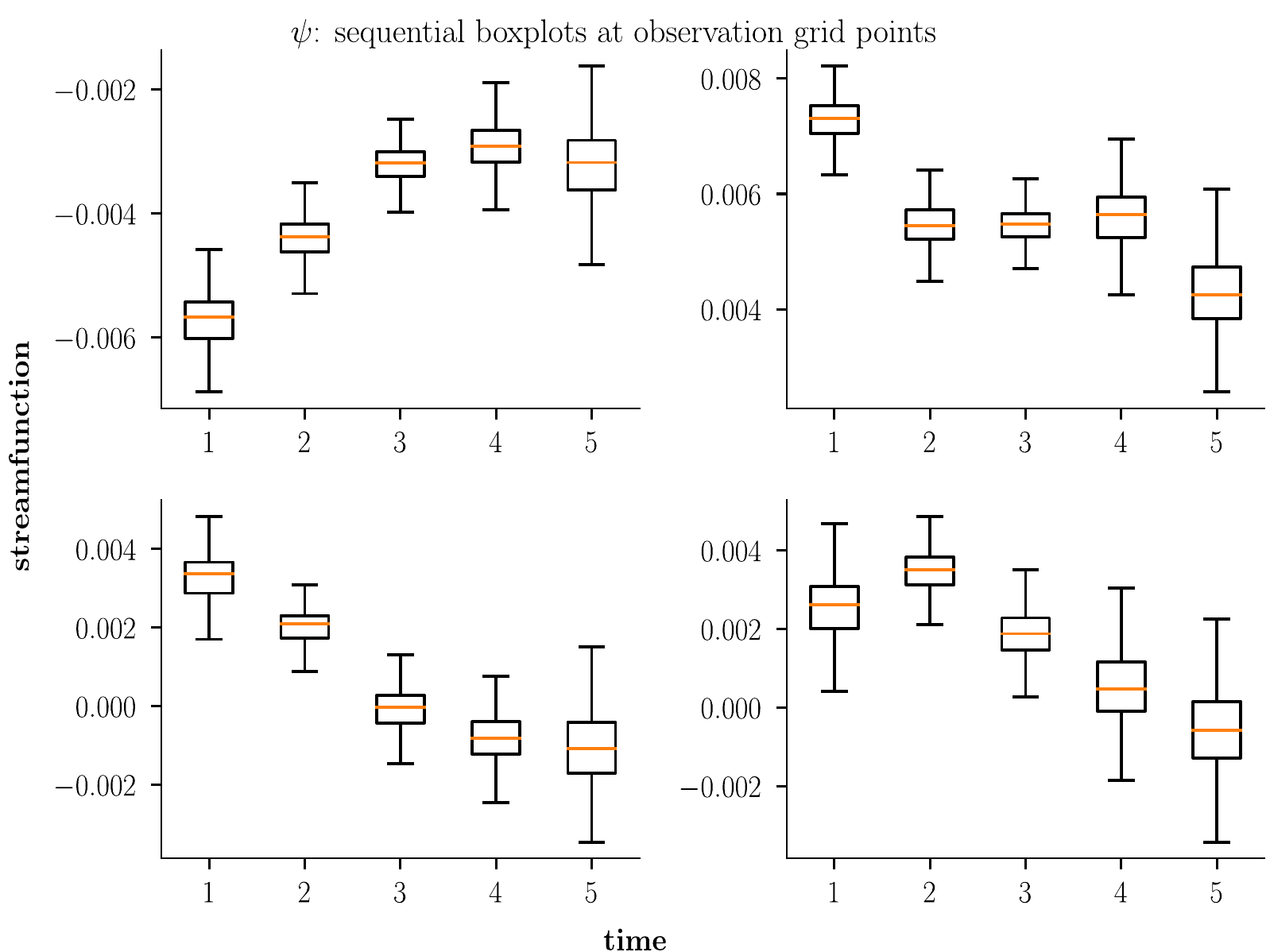}
                        \par\end{center}%
        \end{minipage}\hfill{}%
        \begin{minipage}[t]{0.49\textwidth}%
                \begin{center}
                        \includegraphics[width=1\textwidth]{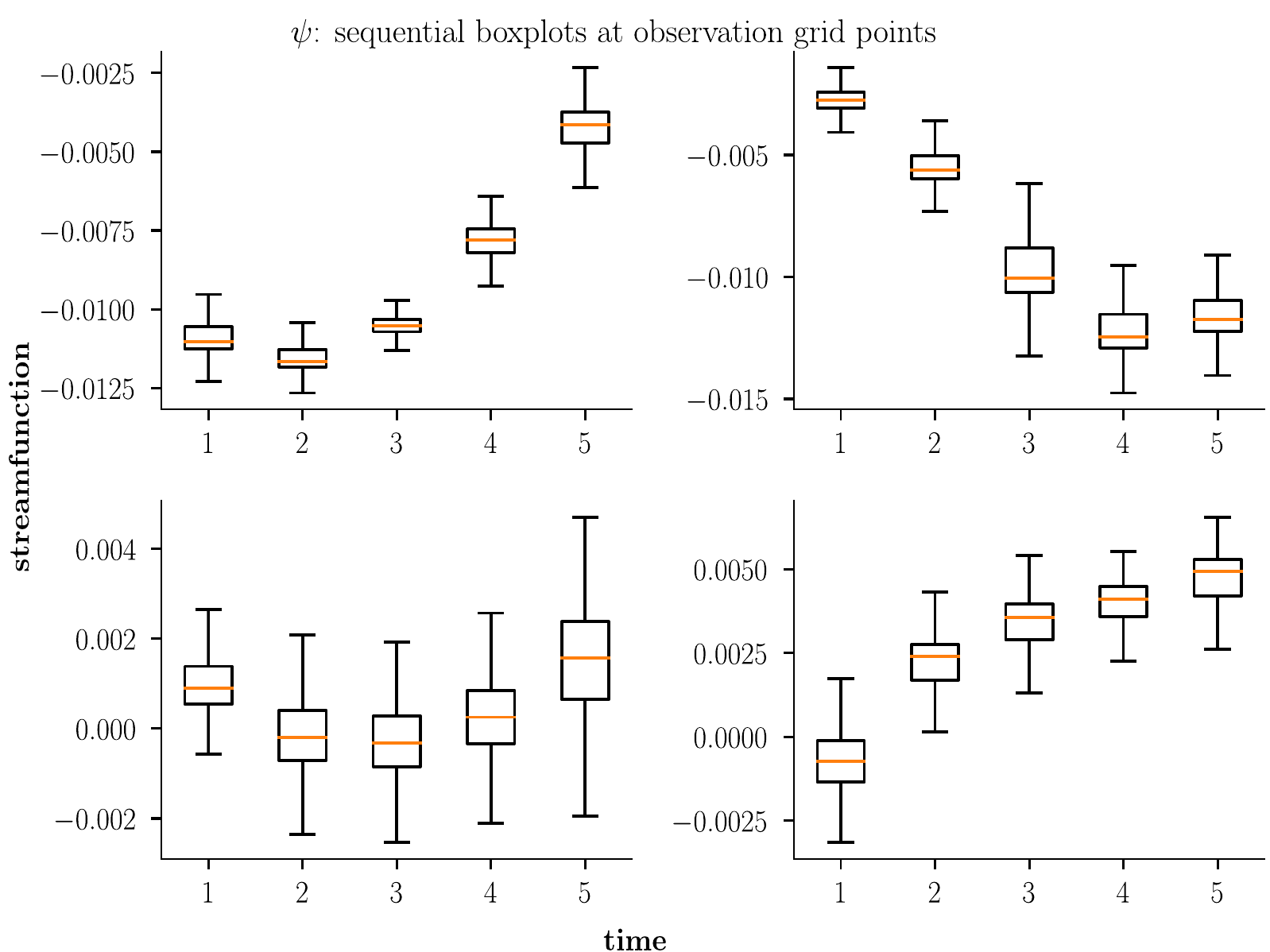}
                        \par\end{center}%
        \end{minipage}
        
        \caption{\label{fig:boxplot-psi}Box plots for the SPDE ensemble streamfunction
                $\psi$ at times $t=1,2,\dots,5$ in ett, at eight observation grid
                points, shown here in two separate figures of four plots each (each plot corresponds to one grid point). The boxplots show non-symmetry and fat tails in the distribution of the ensembles, again providing strong evidence to the fact that the ensembles are non-Gaussian. }
\end{figure}

\begin{figure}
        \begin{minipage}[t]{0.49\textwidth}%
                \begin{center}
                        \includegraphics[width=1\textwidth]{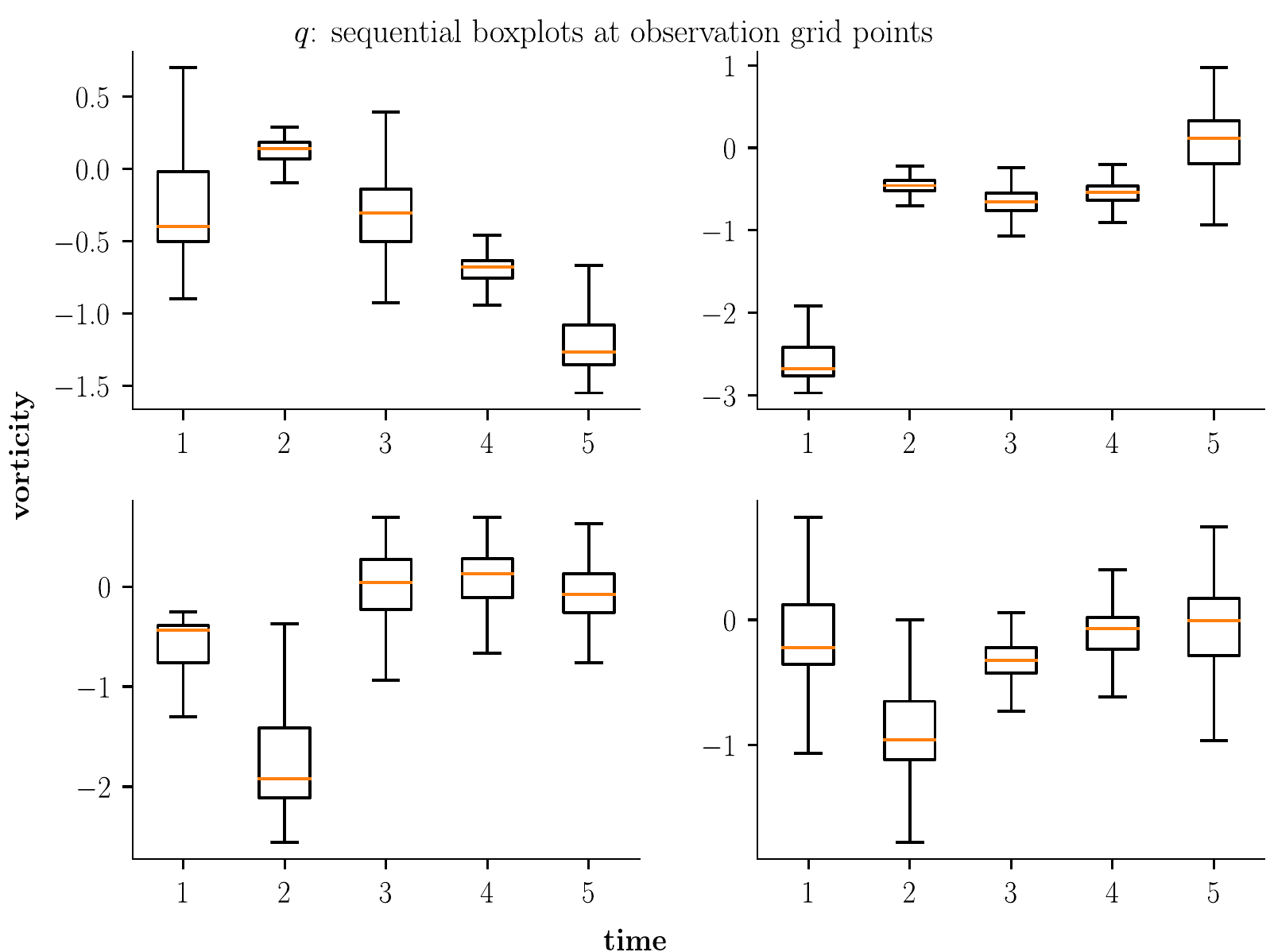}
                        \par\end{center}%
        \end{minipage}\hfill{}%
        \begin{minipage}[t]{0.49\textwidth}%
                \begin{center}
                        \includegraphics[width=1\textwidth]{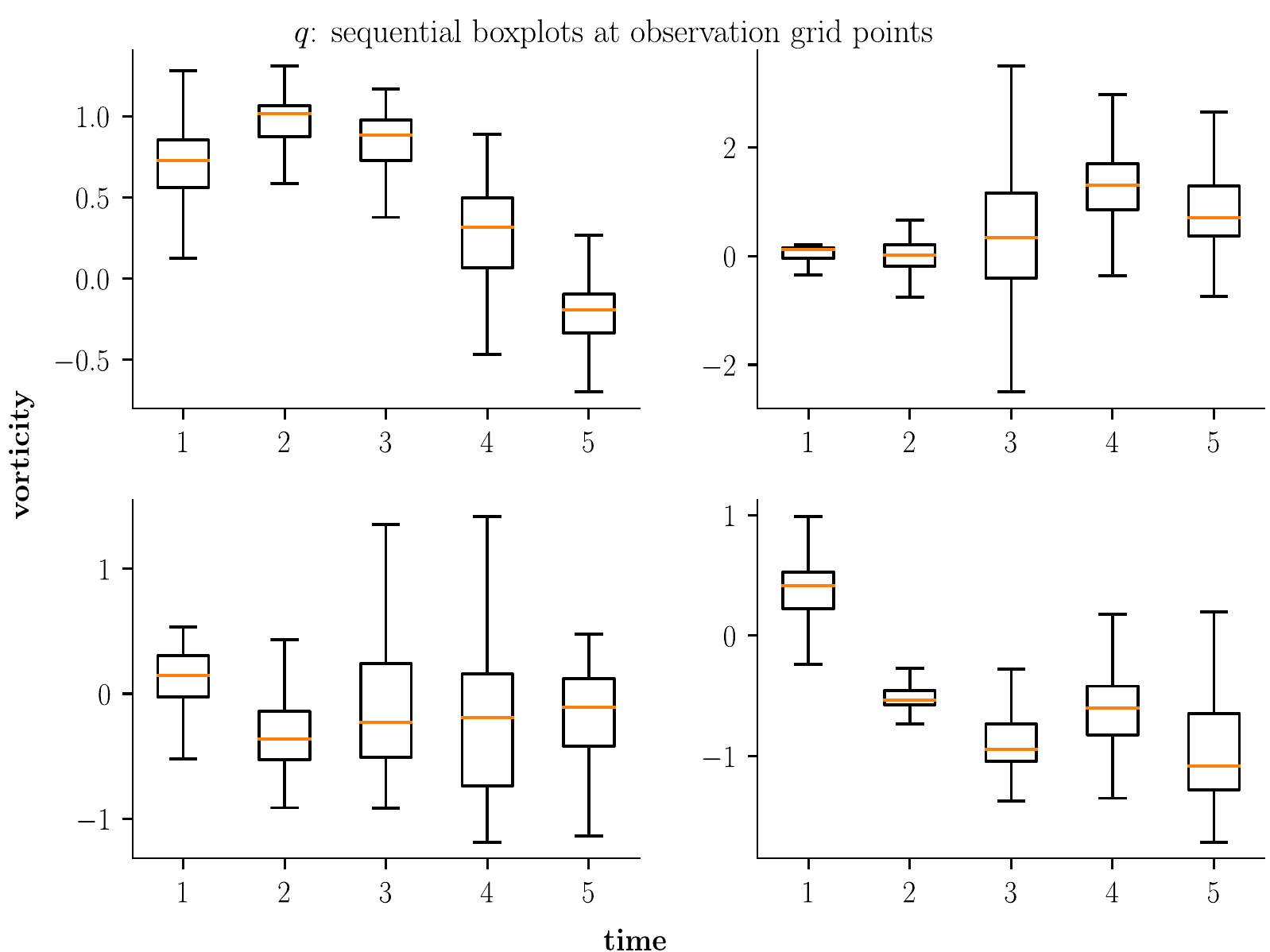}
                        \par\end{center}%
        \end{minipage}
        
        \caption{\label{fig:boxplot-q}Box plots for the SPDE ensemble vorticity $q$
                at times $t=1,2,\dots,5$ in ett, at eight observation grid points,
                shown here in two separate figures of four plots each (each plot corresponds to one grid point). The plots show non-symmetry and fat tails in the distribution of the ensembles, again providing strong evidence to the fact that the ensembles are non-Gaussian.}
\end{figure}

\begin{figure}
        \begin{minipage}[t]{0.49\textwidth}%
                \begin{center}
                        \includegraphics[width=1\textwidth]{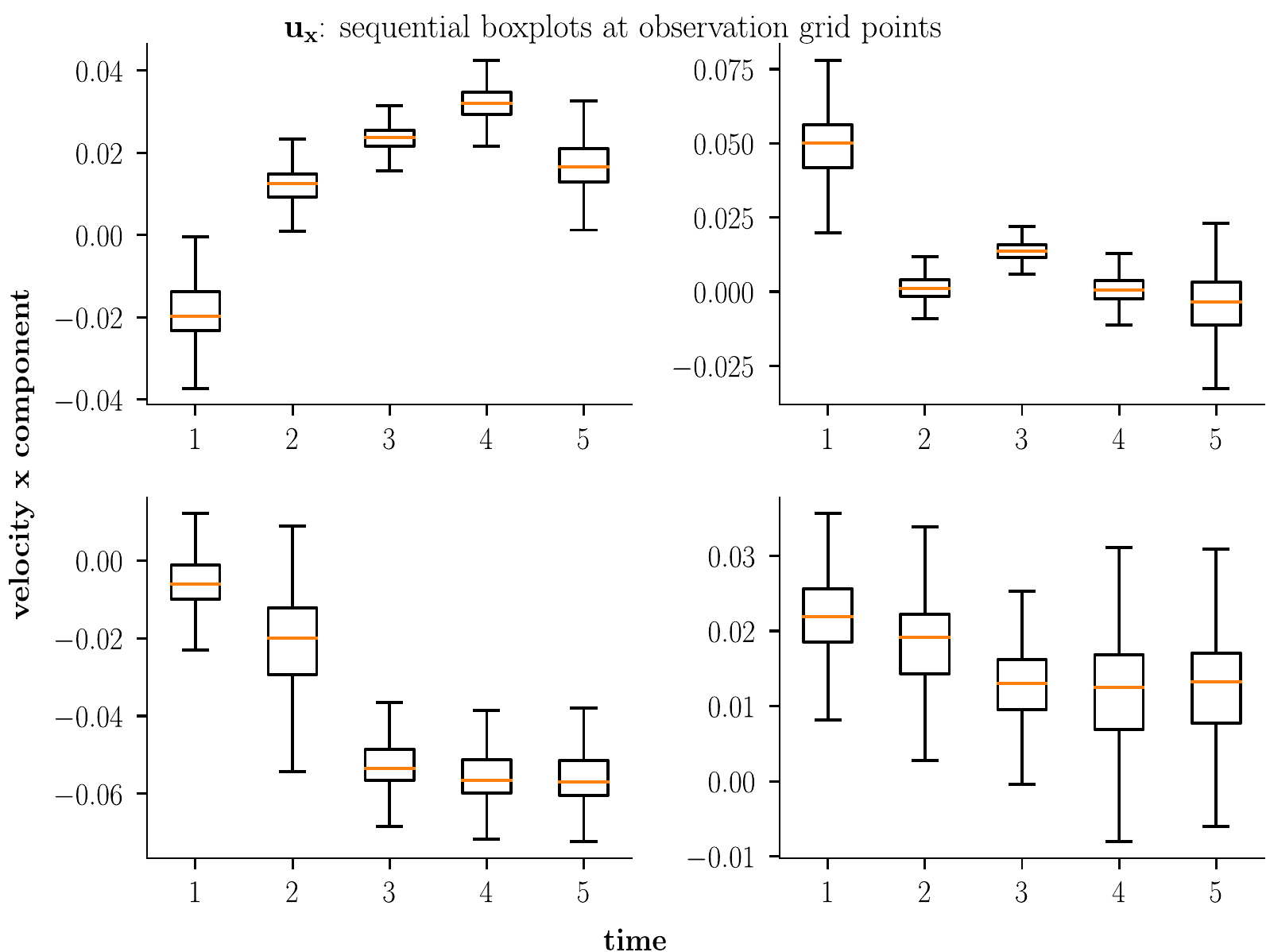}
                        \par\end{center}%
        \end{minipage}\hfill{}%
        \begin{minipage}[t]{0.49\textwidth}%
                \begin{center}
                        \includegraphics[width=1\textwidth]{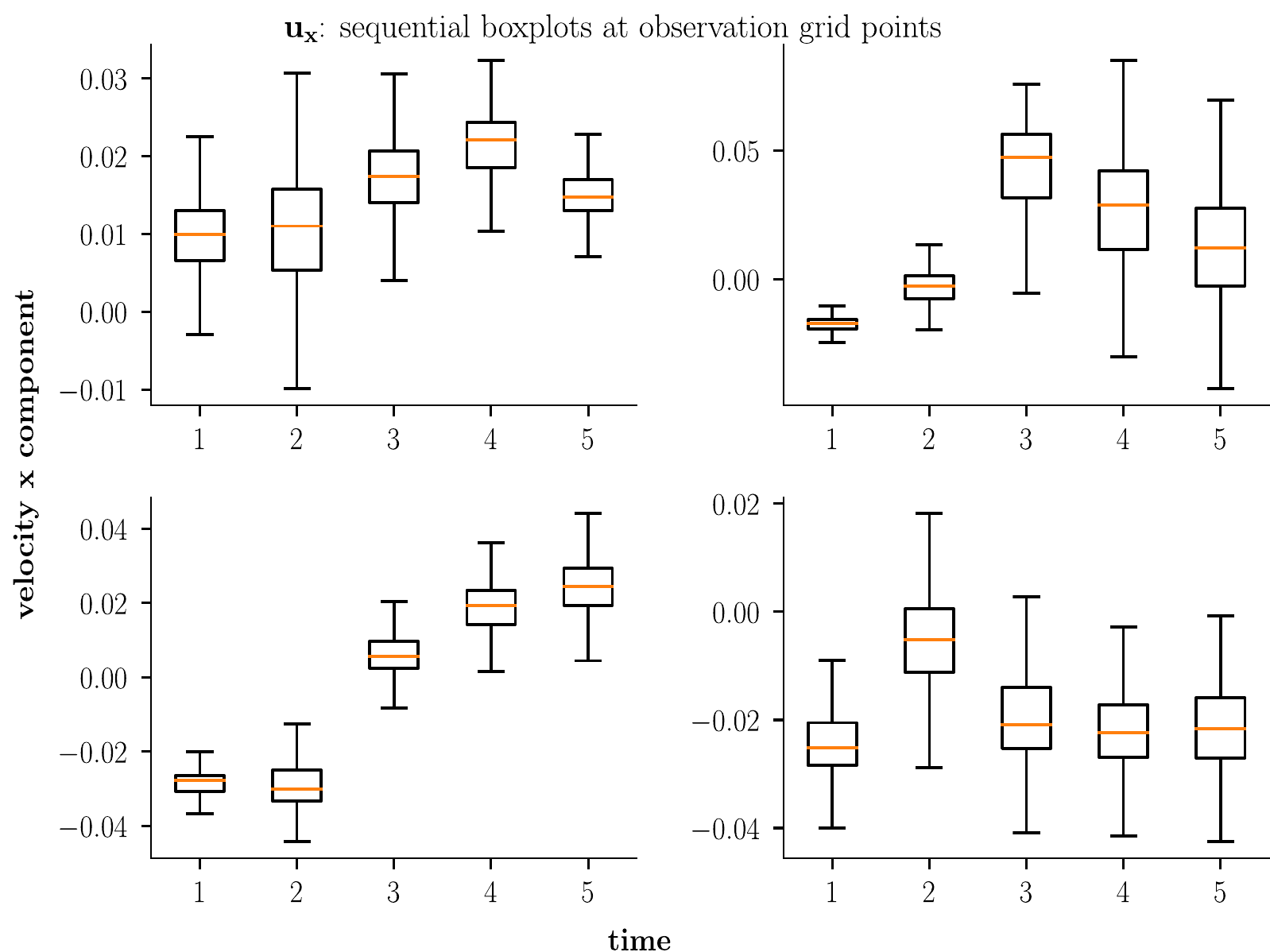}
                        \par\end{center}%
        \end{minipage}
        
        \smallskip{}
        \begin{minipage}[t]{0.49\textwidth}%
                \begin{center}
                        \includegraphics[width=1\textwidth]{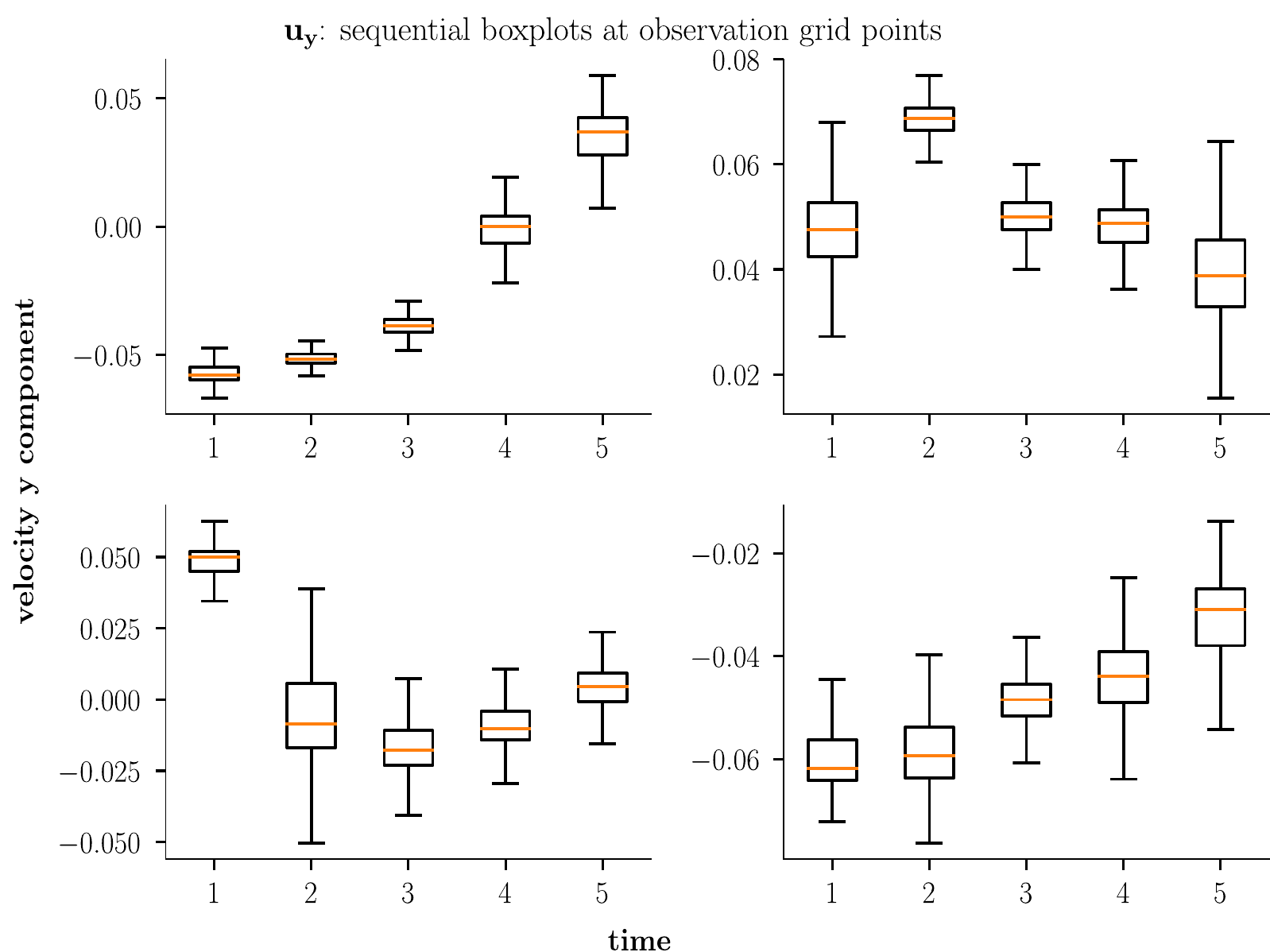}
                        \par\end{center}%
        \end{minipage}\hfill{}%
        \begin{minipage}[t]{0.49\textwidth}%
                \begin{center}
                        \includegraphics[width=1\textwidth]{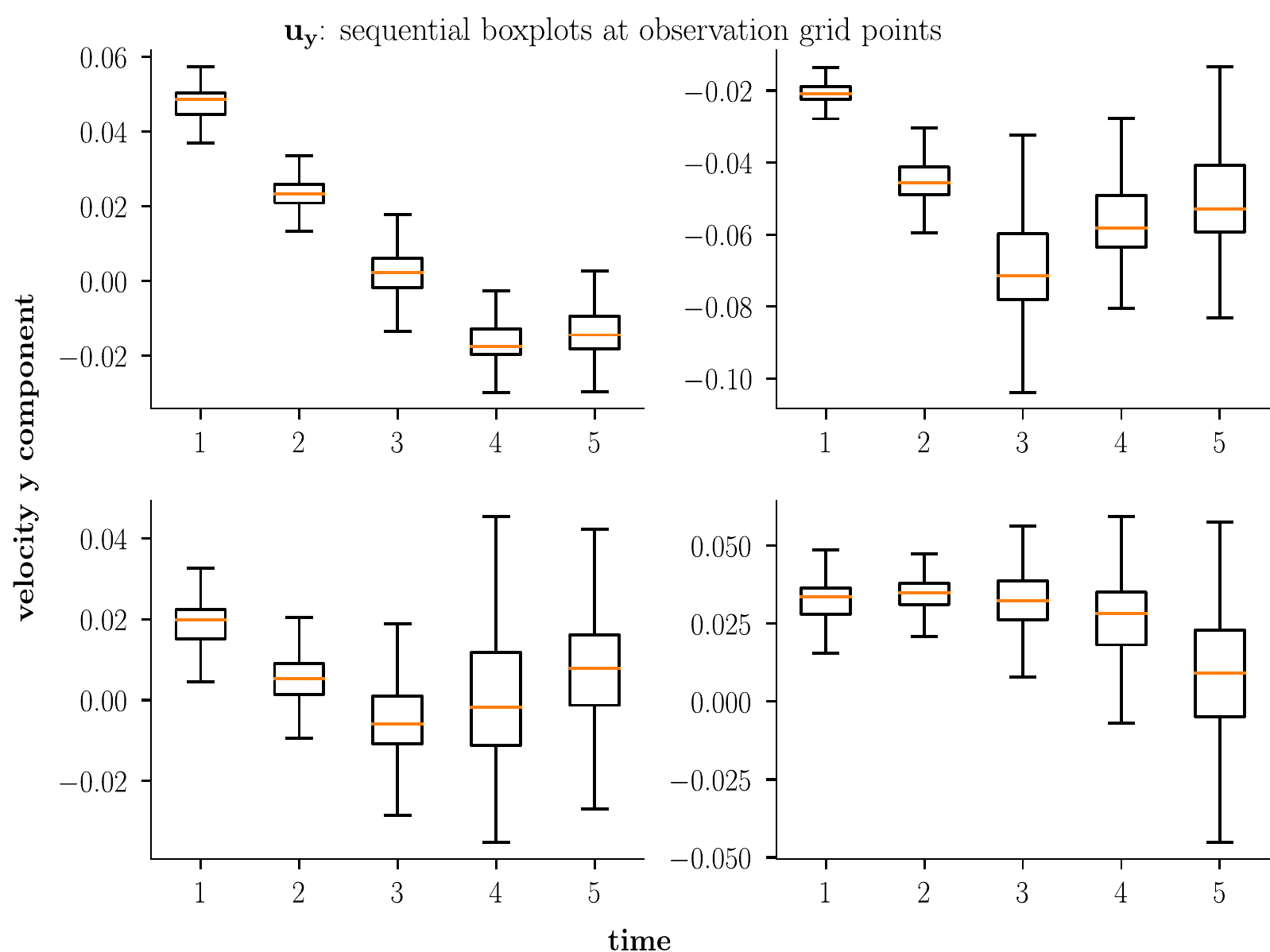}
                        \par\end{center}%
        \end{minipage}
        
        \caption{\label{fig:boxplot-u}Box plots for the SPDE ensemble velocity components,
                $\vecu_{x}$ (top row) and $\vecu_{y}$ (bottom row) at times $t=1,2,\dots,5$
                in ett, at eight observation grid points, divided into two figures of
                four plots each (each plot corresponds to one grid point). The plots show non-symmetry and fat tails in the distribution of the ensembles, again providing strong evidence to the fact that the ensembles are non-Gaussian.}
\end{figure}


\section{Conclusion and future work \label{sec:conlusion}}

In this paper, we have described the damped and forced deterministic system and the numerical methodology that we used to solve the system on a fine resolution spatial grid. We also described the stochastic version of this system, derived by using the variational approach formulated in \cite{Holm2015}, see Section \ref{sec:2dequations}, The numerical methodology we used for solving the deterministic system was then extended to solve the stochastic version and a proof for the numerical consistency was provided. In Section \ref{sec:calibration}, we have described our numerical calibration methodology for the stochastic model. Here, numerical simulations and tests were provided to show that by using our methodology, one can estimate the velocity-velocity spatial correlation structure from \emph{data}. Specifically, we showed that an ensemble of flow paths described by the stochastic system accurately tracks the large-scale behaviour of the underlying deterministic system for a physically adequate period of time. This was verified for all three fields of interest; namely,  stream function, velocity and vorticity. 

The stochastic model calibrated in this manner was used to quantify the uncertainty of the deterministic models at the coarse resolution. As expected, the uncertainty decreases as the grid becomes more refined, and the fidelity to  the true solution increases as the size of the ensemble increases. These tests prove the feasibility of the choice of stochastic velocity decomposition that was introduced as a constraint in the variational principle for fluid dynamics in \cite{Holm2015}  and was derived using multi-time homogenisation in \cite{CoGaHo2017}. 

The current work is the first stage in the development of a new ensemble-based data assimilation methodology using particle filters. The successful confirmation of the new methodology hinges crucially on maintaining a balance which ensures that the cloud of particles encompasses the true solution, while also producing sufficient spread of the ensemble realisations. To-date these two criteria have been achieved only via ad-hoc methods. Being based on the principles of stochastic geometric mechanics, the current work offers the first theoretically validated systematic approach that satisfies both of these crucial ensemble-based data assimilation criteria.  The trajectories of the particles in the ensemble arise from a decomposition of the true (deterministic) fluid velocity into a drift velocity and a (Stratonovich) stochastic perturbation at the coarse resolution. The equations for the drift velocity arise from stationary variations of Hamilton's principle for ideal fluid dynamics, constrained to follow the stochastic Lagrangian paths whose spatial correlations are determined from the present methodology. Thus, derived via fundamental principles of stochastic geometric mechanics, the present methodology has produced successful results for uncertainty quantification.    

The ensembles produced by the methodology presented here will be used in further work to forecast the future position of the true trajectory. An additional mechanism will correct the ensemble by selecting and multiplying the more likely particles and casting out those that are too far from the true solution. This ``pruning'' procedure to correct, or refine the ensemble will be based on partial (sparse/noisy) observations.  The correction mechanism assimilates the data into the system and reduces the model uncertainty.  Accomplishing this forecast will be a challenging task, because the dimensionality of the system will remain high, even after applying the coarsening using the methodology developed here. The authors will report progress toward accomplishing such forecasts in a sequel to the present paper.

\subsection*{Acknowledgements}
The authors thank The Engineering and Physical Sciences Research Council (EPSRC) for their support of this work through the grant EP/N023781/1. The authors also thank Pavel Berloff, Mike Cullen, John Gibbon, Georg Gottwald, Nikolas Kantas, Etienne Memin, Sebastian Reich, Valentin Resseguier, and Aretha Teckentrup for the many useful, constructive discussions held with them throughout the preparation of this work.

\bibliographystyle{apalike}
\bibliography{refs}

\end{document}